\newtheorem{theorem}{Theorem}[section]
\newtheorem{proposition}[theorem]{Proposition}
\newtheorem{lemma}[theorem]{Lemma}
\newtheorem{claim}[theorem]{Claim}
\newtheorem{corollary}[theorem]{Corollary}
\newtheorem{definition}[theorem]{Definition}
\newcommand{\N}{\ensuremath{\mathbb{N}}}
\newcommand{\R}{\ensuremath{\mathbb{R}}}
\newcommand{\Z}{\ensuremath{\mathbb{Z}}}
 \newcommand{\eps}{\varepsilon} 
\renewcommand{\epsilon}{\varepsilon}
\renewcommand{\vec}[1]{\ensuremath{\mathbf{#1}}}
\newcommand{\basis}{\ensuremath{\mathbf{B}}}
\newcommand{\problem}[1]{\mbox{#1}\xspace}
\newcommand{\poly}{\mathrm{poly}}
\newcommand{\DGS}[3]{\ensuremath{#1\text{-}\problem{DGS}_{#2}^{#3}}}
\newcommand{\coset}{\ensuremath{\vec{c}}}
\newcommand{\scarequotes}[1]{``#1''}
\newcommand{\M}{\mathcal{M}}
\def\imod#1{\allowbreak\mkern8mu({\operator@font mod}\,\,#1)}
\newcommand{\lat}{\mathcal{L}}
\newcommand{\gs}[1]{\ensuremath{\widetilde{#1}}}
\DeclareMathOperator{\dist}{dist}
\DeclareMathOperator{\spn}{span}
\DeclarePairedDelimiter\inner{\langle}{\rangle}
\DeclarePairedDelimiter\abs{\lvert}{\rvert}
\DeclarePairedDelimiter\set{\{}{\}}
\DeclarePairedDelimiter\ceil{\lceil}{\rceil}
\DeclarePairedDelimiter\length{\lVert}{\rVert}
\begin{document}

\title{Solving the
Closest Vector Problem in $2^n$ Time---\\
The Discrete Gaussian Strikes Again!} 

\author{
Divesh Aggarwal\thanks{Department of Computer Science, EPFL.}\\
\texttt{Divesh.Aggarwal@epfl.ch}
\and
Daniel Dadush\thanks{Centrum Wiskunde \& Informatica, Amsterdam.}
~\thanks{Funded by NWO project number 613.009.031 in the research cluster DIAMANT.}\\
\texttt{dadush@cwi.nl}
\and
Noah Stephens-Davidowitz\thanks{Courant Institute of Mathematical Sciences, New York
 University.}
~\thanks{This material is based upon work supported by the National Science Foundation under Grant No.~CCF-1320188. Any opinions, findings, and conclusions or recommendations expressed in this material are those of the authors and do not necessarily reflect the views of the National Science Foundation.}\\
\texttt{noahsd@cs.nyu.edu}
}
\date{}
\maketitle

\begin{abstract}
We give a $2^{n+o(n)}$-time and space randomized algorithm for solving the {\em exact}
Closest Vector Problem (\problem{CVP}) on $n$-dimensional Euclidean lattices.  This
improves on the previous fastest algorithm, the deterministic $\widetilde{O}(4^{n})$-time and $\widetilde{O}(2^{n})$-space algorithm of Micciancio and Voulgaris~\cite{MV13}.

We achieve our main result in three steps. First, we show how to modify the sampling algorithm from~\cite{ADRS15} to solve the problem of discrete Gaussian sampling over \emph{lattice shifts}, $\lat - \vec{t}$, with very low parameters. While the actual algorithm is a natural generalization of~\cite{ADRS15}, the analysis uses substantial new ideas. This yields a $2^{n+o(n)}$-time algorithm for approximate \problem{CVP} with the very good approximation factor $\gamma = 1+2^{-o(n/\log n)}$.  
Second, we show that the approximate closest vectors to a target vector $\vec{t}$ can be grouped into \scarequotes{lower-dimensional clusters,} and we use this to obtain a recursive reduction from exact CVP to a variant of approximate CVP that \scarequotes{behaves well with these clusters.} 
Third, we show that our discrete Gaussian sampling algorithm can be used to solve this variant of approximate CVP.

The analysis depends crucially on some new properties of the discrete Gaussian distribution and approximate closest vectors, which might be of independent interest.
\end{abstract}

\textbf{Keywords.}  Discrete Gaussian, Closest Vector Problem, Lattice Problems.

\section{Introduction}
\label{sec:introduction}
A lattice $\lat$ is the set of all integer combinations of
linearly independent vectors $\vec{b}_1,\dots,\vec{b}_n \in \R^n$. The matrix
$\basis=(\vec{b}_1,\dots,\vec{b}_n)$ is called a basis of $\lat$, and we write
$\lat(\basis)$ for the lattice generated by $\basis$.

The two most important computational problems on lattices are the Shortest Vector Problem (\problem{SVP}) and the Closest Vector Problem (\problem{CVP}). Given a basis for a lattice $\lat \subseteq \R^n$,
$\problem{SVP}$ asks us to compute a non-zero vector in $\lat$ of minimal length, and \problem{CVP} asks us to compute a lattice vector nearest in Euclidean distance to a target vector $\vec{t}$.

Starting with the seminal work of~\cite{LLL82}, algorithms for solving these problems either exactly or approximately have been studied intensely. Such algorithms have found applications in factoring polynomials over rationals~\cite{LLL82}, integer programming~\cite{Len83,Kan87,DPV11}, cryptanalysis~\cite{Odl90,JS98,NS01}, checking the solvability
by radicals~\cite{LM83}, and solving low-density subset-sum problems~\cite{CJLOSS92}. More recently, many powerful cryptographic primitives have been constructed whose security is based on the {\em worst-case} hardness of these or related lattice problems~\cite{Ajt96,MR07,Gen09,Reg09,BV11,BLPRS13,BV14}.

In their exact forms, both problems are known to be NP-hard (although SVP is only known to be NP-hard under randomized reductions), and they are even hard to approximate to within a factor of $n^{O(1/\log \log n)}$ under reasonable complexity assumptions~\cite{ABSS93,Ajt98,CN98,BS99,DKRS03,Mic01svp,Khot05svp,HRsvp}. \problem{CVP} is thought to be the \scarequotes{harder} of the two problems, as there is a simple reduction from \problem{SVP} to \problem{CVP} that preserves the dimension $n$ of the lattice \cite{GMSS99}, even in the approximate case, while there is no known reduction in the other direction that preserves the dimension.\footnote{Since both problems are NP-complete, there is necessarily an efficient reduction from \problem{CVP} to \problem{SVP}. However, all known reductions either blow up the approximation factor or the dimension of the lattice by a polynomial factor~\cite{Kan87,DH11}. Since we are interested in an algorithm for solving exact CVP whose running time is exponential in the dimension, such reductions are not useful for us.} Indeed, \problem{CVP} is in some sense nearly \scarequotes{complete for lattice problems,} as there are known dimension-preserving reductions from nearly all important lattice problems to \problem{CVP}, 
such as the Shortest Independent Vector Problem, Subspace Avoidance Problem, Generalized Closest Vector Problem, and the Successive Minima Problem~\cite{Micciancio08}. (The Lattice Isomorphism Problem is an important exception.) None of these problems has a known dimension-preserving reduction to \problem{SVP}.

Exact algorithms for \problem{CVP} and \problem{SVP} have a rich history. Kannan initiated their study with an enumeration-based $n^{O(n)}$-time algorithm for \problem{CVP}~\cite{Kan87}, and many others improved upon his technique to achieve better running times~\cite{Helfrich86,HanrotStehle07,MicciancioWalter15}. Since these algorithms solve \problem{CVP}, they also imply solutions for \problem{SVP} and all of the problems listed above. (Notably, these algorithms use only polynomial space.)

For over a decade, these $n^{O(n)}$-time algorithms remained the state of the art until, in a major breakthrough, Ajtai, Kumar, and Sivakumar (AKS) published the first $2^{O(n)}$-time algorithm for \problem{SVP}~\cite{AKS01}. The AKS algorithm is based on \scarequotes{randomized sieving,} in which many randomly generated lattice vectors are iteratively combined to create successively shorter lattice vectors. The work of AKS led to two major questions: First, can \problem{CVP} be solved in $2^{O(n)}$ time? And second, what is the best achievable constant in the exponent? Much work went into solving both of these problems using AKS's sieving technique~\cite{AKS01,AKS02,NguyenVidick08,AJ08,BN09,PS09,MV10,HPS11}, culminating in a $\widetilde{O}(2^{2.456 n})$-time algorithm for \problem{SVP} and a $2^{O(n)}  (1+1/\eps)^{O(n)}$-time algorithm for $(1+\eps)$-approximate \problem{CVP}.

But, exact \problem{CVP} is a much subtler problem than approximate \problem{CVP} or exact \problem{SVP}. In particular, for any approximation factor $\gamma > 1$, a target vector $\vec{t}$ can have arbitrarily many $\gamma$-approximate closest vectors in the lattice $\lat$. For example, $\lat$ might contain many vectors whose length is arbitrarily shorter than the distance between $\vec{t}$ and the lattice, so that any closest lattice vector is \scarequotes{surrounded by} many $\gamma$-approximate closest vectors. Randomized sieving algorithms for \problem{CVP} effectively sample from a distribution that assigns weight to each lattice vector $\vec{y}$ according to some smooth function of $\length{\vec{y} - \vec{t}}$. Such algorithms face a fundamental barrier in solving exact \problem{CVP}: they can \scarequotes{barely distinguish between} $\gamma$-approximate closest vectors and exact closest vectors for very small $\gamma$. (This problem does not arise when solving \problem{SVP} because upper bounds on the lattice kissing number show that there \emph{cannot} be arbitrarily many $\gamma$-approximate shortest lattice vectors. Indeed, such upper bounds play a crucial role in the analysis of sieving algorithms for exact \problem{SVP}.)

So, the important question of whether \problem{CVP} could be solved exactly in singly exponential time remained open until the landmark algorithm of Micciancio and Voulgaris~\cite{MV13}
(MV), which built upon the approach of Sommer, Feder, and
Shalvi~\cite{SFS09}. MV showed a \emph{deterministic} $\widetilde{O}(4^n)$-time and
$\widetilde{O}(2^n)$-space algorithm for exact \problem{CVP}.  
The MV algorithm uses the
\emph{Voronoi cell} of the lattice---the centrally symmetric polytope
corresponding to the points closer to the origin than to any other lattice
point. Until very recently, this algorithm had the best known asymptotic running
time for \emph{both} \problem{SVP} and \problem{CVP}. Prior to this work, this was
the only known algorithm to solve \problem{CVP} exactly in $2^{O(n)}$
time.

Very recently, Aggarwal, Dadush, Regev, and Stephens-Davidowitz (ADRS) gave a $2^{n + o(n)}$-time and space algorithm for \problem{SVP} \cite{ADRS15}. They accomplished this by giving an algorithm that solves the Discrete Gaussian Sampling problem (\problem{DGS}) over a lattice $\lat$. (As this is the starting point for our work, we describe their techniques in some detail below.) They also showed how to use their techniques to approximate \problem{CVP} to within a factor of $1.97$ in time $2^{n+o(n)}$, but like AKS a decade earlier, they left open a natural question: is there a corresponding algorithm for \emph{exact} \problem{CVP} (or even $(1+o(1))$-approximate \problem{CVP})?

\paragraph{Main contribution. } Our main result is a $2^{n+o(n)}$-time and space algorithm that solves \problem{CVP} exactly via discrete Gaussian sampling. We achieve this in three steps. First, we show how to modify the ADRS sampling algorithm to solve \problem{DGS} over \emph{lattice shifts}, $\lat - \vec{t}$. While the actual algorithm is a trivial generalization of ADRS, the analysis uses substantial new ideas. This result alone immediately gives a $2^{n + o(n)}$-time algorithm to approximate \problem{CVP} to within any approximation factor $\gamma = 1+ 2^{-o(n/\log n)}$.
Second, we show that the approximate closest vectors to a target can be grouped into \scarequotes{lower-dimensional clusters.} We use this to show a reduction from \emph{exact} CVP to a variant of approximate CVP. Third, we show that our sampling algorithm actually solves this variant of approximate CVP, yielding a $2^{n + o(n)}$-time algorithm for \emph{exact} CVP. 

We find this result to be quite surprising as, in spite of much research in this area, all previous \scarequotes{truly randomized} algorithms only gave approximate solutions to \problem{CVP}. Indeed, this barrier seemed inherent, as we described above. Our solution depends crucially on the large number of outputs from our sampling algorithm and new properties of the discrete Gaussian.

\subsection{Our techniques}

\paragraph{The ADRS algorithm for centered \problem{DGS} and our generalization.}
The centered discrete Gaussian distribution over a lattice $\lat$ with parameter $s > 0$, denoted $D_{\lat, s}$, is the probability distribution obtained by assigning to each vector $\vec{y} \in \lat$ a probability proportional to its Gaussian mass, $\rho_s(\lat) := e^{-\pi \length{\vec{y}}^2/s^2}$. As the parameter $s$ becomes smaller, $D_{\lat, s}$ becomes more concentrated on the shorter vectors in the lattice. So, for a properly chosen parameter, a sample from $D_{\lat, s}$ is guaranteed to be a shortest lattice vector with not-too-small probability.

ADRS's primary contribution was an algorithm that solves \problem{DGS} in the
centered case, i.e., an algorithm that samples from $D_{\lat, s}$ for any $s$.
To achieve this, they show how to build a discrete Gaussian
\scarequotes{combiner,} which takes samples from $D_{\lat, s}$ and converts them
to samples from $D_{\lat, s/\sqrt{2}}$. The combiner is based on the simple but powerful
 observation that the average of two vectors sampled from $D_{\lat, s}$ is
distributed exactly as $D_{\lat, s/\sqrt{2}}$, \emph{provided that we condition
on the result being in the lattice} \cite[Lemma 3.4]{ADRS15}. Note that the
average of two lattice vectors is in the lattice if and only if they lie in the same
\emph{coset} of $2\lat$. The ADRS algorithm therefore starts with many samples
from $D_{\lat, s}$ for some very high $s$ (which can be computed efficiently
\cite{Klein00, GPV08, BLPRS13}) and repeatedly takes the average of carefully chosen pairs of
vectors that lie in the same coset of $2\lat$ to obtain samples from the
discrete Gaussian with a much lower parameter.

The ADRS algorithm chooses which vectors to combine via rejection sampling
applied to the cosets of $2\lat$, and a key part of the analysis shows that this
rejection sampling does not \scarequotes{throw out} too many vectors. In
particular, ADRS show that, if a single run of the combiner starts with $M$
samples from $D_{\lat, s}$,  then the output will be $\beta(s) M $ samples from
$D_{\lat, s/\sqrt{2}}$, where the \scarequotes{loss factor} $\beta(s)$ is equal
to the ratio of the \emph{collision probability} of $D_{\lat, s}$ mod $2\lat$
divided by the maximal weight of a single coset (with some smaller
factors that we ignore here for simplicity). It is not hard to check that
for any probability distribution over $2^n$ elements, this loss factor is lower
bounded by $2^{-n/2}$. This observation does not suffice, however, since the combiner must be run many times to solve \problem{SVP}.
It is easy to see that the central coset,
$2\lat$, has maximal weight proportional to $\rho_{s/2}(\lat)$, and ADRS show that
the collision probability is proportional to $\rho_{s/\sqrt{2}}(\lat)^2$.
Indeed, the loss factor for a single step is given by
$\beta(s) = \rho_{s/\sqrt{2}}(\lat)^2/(\rho_s(\lat)\rho_{s/2}(\lat))$.  Therefore, the
\emph{total} loss factor $\beta(s) \beta(s/\sqrt{2}) \cdots
\beta(s/2^{-\ell/2})$ accumulated after running the combiner $\ell$ times is
given by a telescoping product, which is easily bounded by $2^{-n/2}$. So, (ignoring small factors) their sampler returns at least $2^{-n/2}\cdot M$ samples from
$D_{\lat, s/2^{-\ell/2}}$.
The ADRS combiner requires $M \geq
2^{n}$ vectors \scarequotes{just to get started,} so they obtain a $2^{n+
o(n)}$-time algorithm for centered $\problem{DGS}$ that yields $2^{n/2}$
samples.

In this work, we show that some of the above analysis carries over easily to the more general case of shifted discrete Gaussians, $D_{\lat - \vec{t}, s}$ for $\vec{t} \in \R^n$---the distribution that assigns Gaussian weight $\rho_s(\vec{w})$ to each $\vec{w} \in \lat - \vec{t}$. As in the centered case, the average of two vectors sampled from $D_{\lat - \vec{t}, s}$ is distributed exactly as $D_{\lat - \vec{t}, s/\sqrt{2}}$, \emph{provided that we condition on the two vectors landing in the same coset of $2\lat$}. (See Lemma~\ref{lem:sumofgaussians} and Proposition~\ref{prop:combiner}.) We can therefore use essentially the same combiner as ADRS to obtain discrete Gaussian samples from the shifted discrete Gaussian with low parameters. 

The primary technical challenge in this part of our work is to bound the accumulated loss factor $\beta(s) \beta(s/\sqrt{2}) \cdots \beta(s/2^{-\ell/2})$. While the loss factor for a single run of the combiner $\beta(s)$ is again equal to the ratio of the collision probability over the cosets to the maximal weight of a coset, this ratio does not seem to have such a nice representation in the shifted case. (See Corollary~\ref{prop:combiner}.) In particular, it is no longer clear which coset has maximal weight, and this coset can even vary with $s$! To solve this problem, we first introduce a new inequality (Corollary~\ref{cor:RSHolder}), which relates the maximal weight of a coset with parameter $s$ to the maximal weight of a coset with parameter $s/\sqrt{2}$.\footnote{This inequality is closely related to that of \cite{RegevS15}, and it (or the more general Lemma~\ref{lem:RSHolder}) may be of independent interest. Indeed, we use it in two seemingly unrelated contexts in the sequel---to bound the loss factor of the sampler, and to show that cosets that contain a closest vector have relatively high weight.} We then show how to use this inequality to inductively bound the accumulated loss factor by (ignoring small factors)
\begin{equation}
\label{eq:intro_lossfactor}
2^{-n} \cdot \frac{\rho_s(\lat - \vec{t})}{\max_{\vec{c} \in \lat/(2\lat)} \rho_s(\vec{c} - \vec{t})} \geq 2^{-n}
\; .
\end{equation}
So, we only need to start out with $2^{n}$ vectors to guarantee that our sampler will return at least one vector. (Like the ADRS algorithm, our algorithm requires at least $2^n$ vectors \scarequotes{just to get started.})

This is already sufficient to obtain a $2^{n+o(n)}$-time solution to approximate \problem{CVP} for any approximation factor $\gamma = 1+2^{o(-n/\log n)}$. (See Corollary~\ref{cor:approxCVP}.) Below, we show that the loss factor in \eqref{eq:intro_lossfactor} is essentially exactly what we need to construct our \emph{exact} CVP algorithm. In particular, we note that if we start with $T \cdot 2^n$ vectors, then the number of output samples is 
\begin{equation}
\label{eq:intro_lossfactor2}
T \cdot \frac{\rho_s(\lat - \vec{t})}{\max_{\vec{c} \in \lat/(2\lat)} \rho_s(\vec{c} - \vec{t})} = \frac{T}{\max_{\vec{c} \in \lat/(2\lat)}\Pr[D_{\lat - \vec{t}, s} \in \vec{c} - \vec{t}]}
\; .
\end{equation}
I.e., we obtain roughly enough samples to \scarequotes{see each coset whose mass is within a factor $T$ of the maximum.}

\paragraph{A reduction from exact CVP to a variant of approximate CVP. } 
In order to solve \emph{exact} CVP, we consider a new variant of approximate CVP called the \emph{cluster} Closest Vector Problem (cCVP). The goal of cCVP is to find a vector that is not only very close to the target, but also very close to an exact CVP solution. More specifically, a vector $\vec{y} \in \lat$ is a valid solution to $\alpha\text{-}\problem{cCVP}$ if there exists an \emph{exact} closest vector $\vec{y}'$ such that $\length{\vec{y} - \vec{y}'} \leq \alpha \cdot \dist(\vec{t}, \lat)$.  We will show below that approximate closest lattice vectors can be grouped into \scarequotes{clusters} contained in balls of radius $\alpha \cdot \dist(\vec{t}, \lat)$. If $\alpha$ is sufficiently small (i.e., $\alpha \leq C/\sqrt{n}$), then we can find a lower-rank sublattice $\lat' \subset \lat$ so that each cluster is actually contained in a shift of $\lat'$. (I.e., each cluster is contained in a lower-dimensional affine subspace. See Figure~\ref{fig:clusters} for an illustration of the clustering phenomenon.) Furthermore, a cCVP oracle is sufficient to find this sublattice $\lat'$. So, we can solve \emph{exact} CVP by (1) computing $\lat'$; (2) solving $\alpha\text{-}\problem{cCVP}$ to find a lattice vector $\vec{y}$ that is in the \scarequotes{correct} shift of $\lat'$; and then (3) solving CVP recursively over the lower-rank shifted lattice $\lat' + \vec{y}$. (See Claim~\ref{clm:simplenCVPreduction} for the full reduction.)

\begin{figure}[ht]
\begin{center}
\includegraphics[width= 0.85 \textwidth]{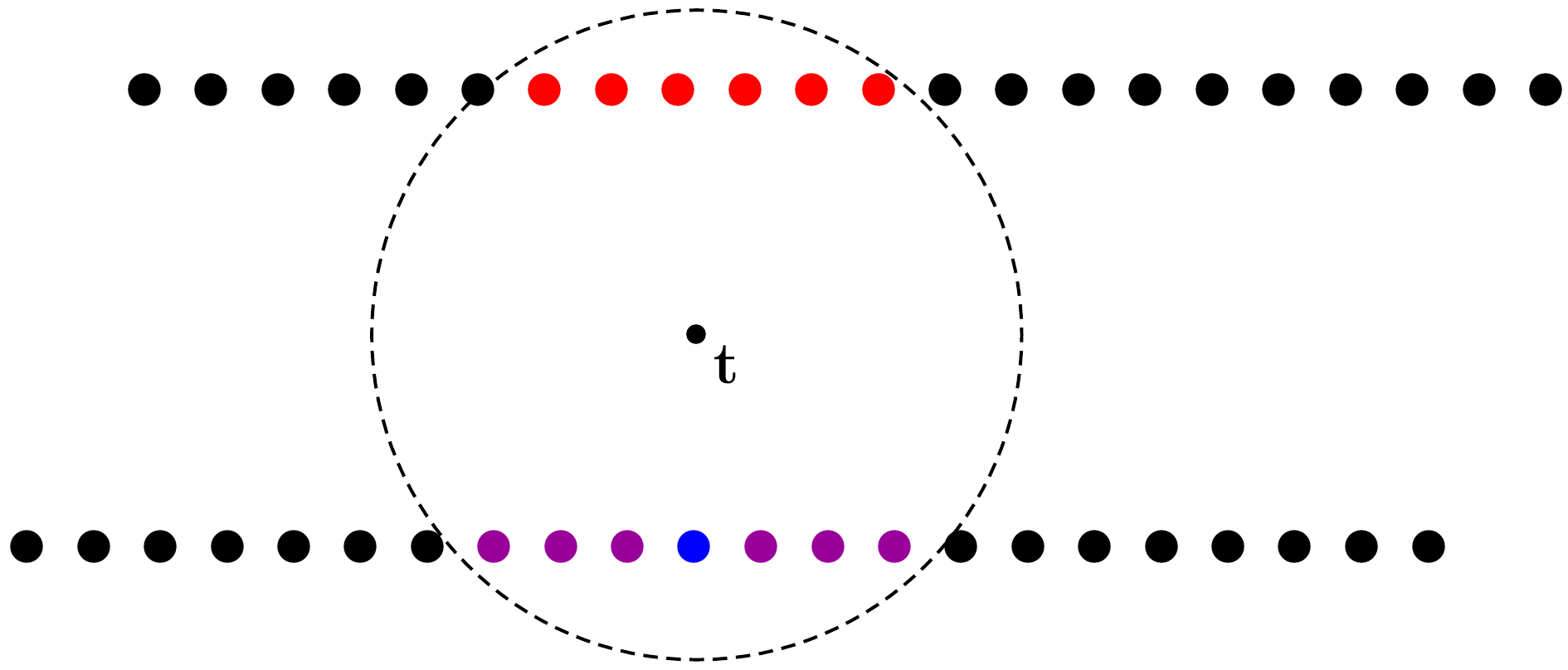}
\caption{\label{fig:clusters} 
A two-dimensional lattice and a target point $\vec{t}$, showing the \scarequotes{clustering} of the approximate closest points. The lattice points inside the dotted circle are approximate closest vectors, and they are clearly organized into two clusters that lie in two distinct one-dimensional affine subspaces. The closest lattice point is highlighted in blue; the points in the same cluster (i.e., the valid solutions to cCVP) are shown in purple; and approximate closest points in a different cluster are shown in red. Notice that close points in the same coset mod $2\lat$ (i.e., points separated by a vector in $2\lat$) are necessarily in the same cluster.}
\end{center}
\end{figure}

This reduction might seem a bit too simple, and indeed we do not know how to use it directly. While we will be able to show that our sampling algorithm does in fact output a solution to cCVP with sufficiently high probability, it will typically output very many vectors, many of which will not be valid solutions to cCVP! We do not know of any efficient way of \scarequotes{picking out} a solution to cCVP from a list of lattice vectors that contains at least one solution. (Note that this issue does not arise for CVP or even approximate CVP, since for these problems we can just take the vector in the list that is closest to the target.) So, we consider an easier problem, $\alpha\text{-}\problem{cCVP}^p$. A valid solution to this problem is a list of at most $p$ lattice vectors, at least one of which lies in the same \scarequotes{cluster} as an exact closest vector, as described above.  (See Definition~\ref{def:nCVP-new}.) This leads to a natural generalization of the reduction described above, as follows. (1) Compute the lower-rank sublattice $\lat' \subset \lat$ as before; (2) solve $\alpha\text{-}\problem{cCVP}^p$ to obtain a list of vectors $(\vec{y}_i, \ldots, \vec{y}_p)$, one of which must lie in the \scarequotes{correct} shift of $\lat'$; (3) solve CVP recursively on all distinct shifts $\lat' + \vec{y}_i$; and finally (4) output the closest resulting point to the target $\vec{t}$

Correctness of this procedure follows immediately from the correctness in the special case when $p=1$. However, bounding the number of recursive calls is more difficult. We accomplish this by first showing that any two of approximate closest vectors $\vec{y}_i, \vec{y}_j$ that are in the same coset mod $2\lat$ must also be in the same cluster. (See Lemma~\ref{lem:clusters}.) This shows that there are at most $2^n$ clusters and therefore at most $2^n$ recursive calls, which would show that the running time is at most roughly $2^{n^2}$. We obtain a much better bound via a technical lemma, which shows that we can always choose the parameters such that either (1) the number of clusters is at most $2^{n-d}$, where $d$ is the rank of the sublattice $\lat'$; or (2) there are \scarequotes{slightly more} than $2^{n-d}$ clusters, but the rank $d$ of $\lat'$ is \scarequotes{significantly less than} $n$. (See Lemma~\ref{lem:good-index}.) This will allow us to show that the total number of calls made on sublattices of rank $d$ after a full run of the algorithm is at most $2^{n-d + o(n)}$. (See Theorem~\ref{thm:CVPtonCVP}.) In particular, this shows that, in order to solve \emph{exact} CVP in time $2^{n+o(n)}$, it suffices to find an algorithm that solves $\alpha\text{-}\problem{cCVP}^p$ for small $\alpha$ that itself runs in time $2^{d+o(d)}$ on lattices of rank $d$.

\paragraph{Solving cluster CVP. } Our final task is to solve $\alpha\text{-}\problem{cCVP}^p$ for sufficiently small $\alpha$ in $2^{n + o(n)}$ time. In other words, we must find an algorithm that outputs a list of approximate closest vectors to the target $\vec{t}$, at least one of which is very close to an exact closest vector. As we noted above, our discrete Gaussian sampler can be used to obtain approximate closest vectors with extremely good approximation factors. Furthermore, any two approximate closest vectors that lie in the same coset mod $2\lat$ must be very close to each other. It therefore suffices to show that at least one of the output vectors of our DGS algorithm will be in the same coset as an exact closest vector mod $2\lat$.

This is why the number of output samples that we computed in~\eqref{eq:intro_lossfactor2} is so remarkably convenient. If a coset's Gaussian mass is within some not-too-large multiplicative factor $T$ of the maximal mass of any coset and we run our sampler, say, $T\cdot \poly(n)$ times, then with high probability one of our output vectors will land in this coset! In particular, if we can find a bound $T \leq 2^{o(n)}$ on the ratio between the maximal mass of any coset and the mass of a coset with a closest vector, then we can simply run our sampler $T \cdot \poly(n)$ times to find a vector in the same coset as this closest vector. In other words, we obtain a $2^{n+o(n)}$-time solution to $\alpha\text{-}\problem{cCVP}^p$, as needed. Intuitively, such a bound $T$ seems reasonable, since a closest vector itself has higher mass than any other point, so one might hope that its coset has relatively high mass.

Unfortunately, we cannot have such a bound for arbitrary $s$. There exist \scarequotes{pathological} lattices $\lat$ and targets $\vec{t}$ such that for some parameter $s$, the coset of a closest vector to $\vec{t}$ has relatively low mass, while some other coset contains many points whose combined mass is quite high, even though it does not contain an exact closest vector. However, we can show that this cannot happen for \scarequotes{too many} different parameters $s$. Specifically, we show how to pick a list of parameters $s_1 \geq \cdots \geq s_\ell$ such that, for at least one of these parameters, the bound $T \leq 2^{o(n)}$ that we required above will hold. This suffices for our purposes. The proof of this statement is quite technical and relies heavily on the new inequality that we prove in Section~\ref{sec:ineq}. (See Corollary~\ref{cor:bigcoset}.)

\subsection{Related work}

Our exact \problem{CVP} algorithm uses many ideas from many different types of lattice algorithms, including sieving, basis reduction, and discrete Gaussian sampling. Our algorithm combines these ideas in a way that (almost magically, and in ways that we do not fully understand) avoids the major pitfalls of each. We summarize the relationship of our algorithm to some prior work below.

First, our algorithm finds an approximate Hermite-Korkine-Zolatoreff (HKZ) basis and essentially \scarequotes{guesses} the last $n-k$ coefficients of a closest vector with respect to this basis.
HKZ bases are extremely well-studied by the basis reduction
community~\cite{Kan87,Helfrich86,lagarias90,HanrotStehle07,MicciancioWalter15},
and this idea is used in essentially all enumeration algorithms for \problem{CVP}. 
However, there are examples where the standard basis enumeration
techniques require $n^{\Omega(n)}$ time to solve CVP. (See, e.g.,~\cite{BGJ14}.) 
The main reason for this is that such 
techniques work recursively on \emph{projections} of the base lattice, and the projected lattice often contains many points close to the
projected target that do not \scarequotes{lift} to points close to the target in the
full lattice. Using our techniques, we never need to project, and we are therefore able to ignore these useless
points while still guaranteeing that we will find a point whose last
 $n-k$ coefficients with respect to the basis are equal
to those of the closest vector.

Many other authors have noted that the approximate closest lattice vectors form clusters, mostly in the context of AKS-like sieving algorithms.
For example, the $(1+\eps)$-approximate closest vectors to $\vec{t}$ can be grouped
into $2^{O(n)}(1+1/\eps)^n$ clusters of diameter $\eps \cdot \dist(\vec{t}, \lat)$ (see, e.g.,~\cite{AJ08,DK13}). While the clustering bound that we
obtain is both stronger and simpler to prove (using an
elementary parity argument), we are unaware of prior work
mentioning this particular bound. This is likely because sieving algorithms
are typically concerned with constant-factor approximations, whereas our sampler allows us to work with
``unconscionably'' good
approximation factors $\gamma = 1+2^{-o(n/\log n)}$. Our clustering bound seems to be both less natural and less useful for the constant-factor approximations achieved by $2^{O(n)}$-time sieving algorithms.

\cite{BonifasD14} improve on the MV algorithm by showing that, once the Voronoi
cell of $\lat$ has been computed, \problem{CVP} on $\lat$ can be solved in
$\widetilde{O}(2^n)$ expected time. Indeed, before we found this algorithm, we 
hoped to solve \problem{CVP} quickly by using the ADRS sampler to compute the Voronoi cell in $2^{n+o(n)}$ time.
(This corresponds to computing
the shortest vectors in every coset of $\lat/(2\lat)$.) Even with our
current techniques, we do not know how to achieve this, and we leave this as an
open problem.

Finally, after this work was published,~\cite{DGStoSVP} showed a dimension-preserving reduction from DGS to CVP, answering a question posed in an earlier version of this paper. Together with our work, this reduction immediately implies a $2^{n+o(n)}$-time algorithm for DGS with \emph{any parameter} $s$. (Our algorithm works for any parameter $s \geq \dist(\vec{t}, \lat) \cdot 2^{o(n/\log n)}$, but not arbitrarily small $s$.) This also provides some (arguably weak) evidence that our technique of using DGS for solving CVP is \scarequotes{correct,} in the sense that any faster algorithm for CVP necessarily yields a faster algorithm for DGS.

\subsection{Open problems and directions for future work}

Of course, the most natural and important open problem is whether a faster algorithm for \problem{CVP} is possible. (Even an algorithm with the same running time as ours that is simpler or deterministic would be very interesting.) There seem to be fundamental barriers to significantly improving our method, as both our sampler and our reduction to exact \problem{CVP} require enumeration over the $2^n$ cosets of $2\lat$. And, Micciancio and Voulgaris note that their techniques also seem incapable of yielding an algorithm that runs in less than $2^n$ time (for similar reasons)~\cite{MV13}. Indeed, our techniques and those of MV seem to inherently solve the harder (though likely not very important) problem of finding \emph{all} closest vectors simultaneously. Since there can be $2^n$ such vectors, this problem trivially cannot be solved in better than $2^n$ time in the worst case. So, if an algorithm with a better running time is to be found, it would likely require substantial new ideas.

Given these barriers, we also ask whether we can find a comparable lower bound. In particular, Micciancio and Voulgaris note that the standard NP-hardness proof for CVP actually shows that, assuming the Exponential Time Hypothesis, there is some constant $c > 0$ such that no $2^{cn}$-time algorithm solves CVP~\cite{MV13}. Recent unpublished work by Samuel Yeom shows that we can take $c = 10^{-4}$ under plausible complexity assumptions \cite{priv:Vinod}. Obviously, this gap is quite wide, and we ask whether we can make significant progress towards closing it.

In this work, we show how to use a technique that seems \scarequotes{inherently approximate} to solve \emph{exact} \problem{CVP}. I.e., our algorithm is randomized and, during any given recursive call, each $\gamma$-approximate closest vector has nearly the same likelihood of appearing as an exact closest vector for sufficiently small $\gamma$. Indeed, prior to this work, the only known algorithm that solved exact \problem{CVP} in $2^{O(n)}$ time was the deterministic MV algorithm, while the \scarequotes{AKS-like} randomized sieving algorithms for \problem{CVP} achieve only constant approximation factors. It would be very interesting to find exact variants of the sieving algorithms. The primary hurdle towards adapting our method to such algorithms seems to be the very good approximation factor that we require---our ideas seem to require an approximation factor of at most $\gamma = 1+1/\poly(n)$, while $2^{O(n)}$-time sieving algorithms only achieve constant approximation factors. But, it is plausible that our techniques could be adapted to work in this setting, potentially yielding an \scarequotes{AKS-like} algorithm for exact CVP. Even if such an algorithm were not provably faster than ours, it might be more efficient in practice, as sieving algorithms tend to outperform their provable running times (while our algorithm quite clearly runs in time at least $2^n$).

A long-standing open problem is to find an algorithm that solves $\problem{CVP}$ in $2^{O(n)}$ time but \emph{polynomial} space. Currently, the only known algorithms that run in polynomial space are the enumeration-based method of Kannan and its variants, which run in $n^{O(n)}$ time. Indeed, even for \problem{SVP}, there is no known polynomial-space algorithm that runs in $2^{O(n)}$ time. This is part of the reason why $n^{O(n)}$-time enumeration-based methods are often used in practice to solve large instances of $\problem{CVP}$ and $\problem{SVP}$, in spite of their much worse asymptotic running time.

The authors are particularly interested in finding a better explanation for why \scarequotes{everything seems to work out} so remarkably well in the analysis of our algorithm. It seems almost magical that we end up with exactly as many samples as we need for our \problem{CVP} to \problem{DGS} reduction to go through. We do not have a good intuitive understanding of why our sampler returns the number of samples that it does, but it seems largely unrelated to the reason that our \problem{CVP} algorithm needs as many samples as it does. The fact that these two numbers are the same is remarkable, and we would love a clear explanation. A better understanding of this would be interesting in its own right, and it could lead to an improved algorithm.

\subsection*{Organization}
In Section~\ref{sec:prelims}, we provide an overview of the
necessary background material and give the basic definitions used throughout
the paper. In Section~\ref{sec:ineq}, we derive an inequality (Corollary~\ref{cor:RSHolder}) that will allow us to bound the \scarequotes{loss factor} of our sampler and the running time of our exact \problem{CVP} algorithm. In Section~\ref{sec:main-dgs}, we present our discrete Gaussian sampler, which immediately yields an approximate CVP algorithm. In Section~\ref{sec:CVPtoneighbor}, we analyze the structure of the approximate closest vectors and show that this leads to a reduction from exact CVP to a variant of approximate CVP. Finally, in Section~\ref{sec:DGSsolvesnCVP}, we show that our DGS algorithm yields a solution to this variant of approximate CVP (and as a consequence, we derive our exact CVP algorithm.)

\section{Preliminaries}
\label{sec:prelims}

Let $\N = \{0,1,2,\ldots \}$. Except where we specify otherwise, we use $C$, $C_1$, and $C_2$ to denote universal positive constants, which might differ from one occurrence to the next (even in the same sequence of (in)equalities). We use bold letters $\vec{x}$ for vectors and denote a vector's coordinates with indices $x_i$. Throughout the paper, $n$ will always be the dimension of the ambient space $\R^n$. 

\subsection{Lattices}

A rank $d$ lattice $\lat\subset \R^n$ is the set of all integer linear combinations of $d$ linearly independent vectors $\basis = (\vec{b}_1, \ldots, \vec{b}_d )$. $\basis$ is called a basis of the lattice and is not unique. Formally, a lattice is represented by a basis $\basis$ for computational purposes, though for simplicity we often do not make this explicit. If $n = d$, we say that the lattice has full rank. We often implicitly assume that the lattice is full rank, as otherwise we can simply work over the subspace spanned by the lattice.

Given a basis, $(\vec{b}_1,\ldots, \vec{b}_d)$, we write $\lat(\vec{b}_1,\ldots, \vec{b}_d)$ to denote the lattice with basis $(\vec{b}_1,\ldots, \vec{b}_d)$. The length of a shortest non-zero vector in the lattice is written $\lambda_1(\lat)$. For a vector $\vec{t} \in \R^n$, we write $\dist(\vec{t}, \lat)$ to denote the distance between $\vec{t}$ and the lattice, $\min_{\vec{y} \in \lat}(\length{\vec{y} - \vec{t}})$. We call any $\vec{y} \in \lat$ minimizing $\length{\vec{y} - \vec{t}}$ a closest vector to $\vec{t}$. 
The covering radius is $\mu(\lat) := \max_{\vec{t}} \dist(\vec{t}, \lat)$.

\begin{definition}
For a lattice $\lat$, the $i$th successive minimum of $\lat$ is
\[ \lambda_i(\lat) = \min \{ r : \dim (\spn (\lat \cap B(\vec0, r))) \geq i \}  \;.\]
\end{definition}

Intuitively, the $i$th successive minimum of $\lat$ is the smallest value $r$ such that there are $i$ linearly independent vectors in $\lat$ of length at most $r$. We will need the following two facts.
\begin{theorem}[{\cite[Theorem 2.1]{BHW93}}]
\label{thm:lat-pt-bnd}
For any lattice $\lat \subset \R^n$ and $s > 0$,
\[
|\set{\vec{y} \in \lat: \|\vec{y}\| \leq s \lambda_1(\lat)}| \leq
2\ceil{2s}^n-1 \text{.} \]
\end{theorem}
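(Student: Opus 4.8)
The plan is to prove this by a pigeonhole argument over the cosets of the sublattice $K\lat$, where $K := \ceil{2s}$. We may assume $\lat$ has full rank $n$ (as elsewhere in the paper); otherwise we work inside $\spn(\lat)$, which only decreases the index of $K\lat$. Set $S := \set{\vec y \in \lat : \length{\vec y} \le s\lambda_1(\lat)}$ and consider the natural reduction map $\phi : S \to \lat/K\lat$. Since $\abs{\lat/K\lat} = K^n$, it suffices to show that $\phi^{-1}(\vec 0)$ consists of the single point $\vec 0$ and that every other fiber of $\phi$ has at most two elements: summing over the cosets then gives $\abs{S} \le 1 + 2(K^n - 1) = 2\ceil{2s}^n - 1$.

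For the zero fiber, if $\vec y = K\vec z \in S$ with $\vec z \in \lat$, then $\length{\vec z} = \length{\vec y}/K \le s\lambda_1(\lat)/K \le \lambda_1(\lat)/2 < \lambda_1(\lat)$ (using $K \ge 2s$), so $\vec z = \vec 0$. For a nonzero fiber, suppose $\vec y_1 \ne \vec y_2$ both lie in it, so $\vec y_1 - \vec y_2 = K\vec z$ for some nonzero $\vec z \in \lat$. Then $\length{\vec z} \ge \lambda_1(\lat)$, while also $K\length{\vec z} = \length{\vec y_1 - \vec y_2} \le \length{\vec y_1} + \length{\vec y_2} \le 2s\lambda_1(\lat) \le K\lambda_1(\lat)$. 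Hence this whole chain collapses to equalities; in particular $\length{\vec y_1} = \length{\vec y_2} = s\lambda_1(\lat) > 0$ and $\length{\vec y_1 - \vec y_2} = \length{\vec y_1} + \length{\vec y_2}$, which is the equality case of the triangle inequality and therefore forces $\vec y_2 = -\vec y_1$. If a third distinct point $\vec y_3$ lay in the same fiber, the same argument applied to the pair $\set{\vec y_2, \vec y_3}$ would give $\vec y_3 = -\vec y_2 = \vec y_1$, a contradiction. So each nonzero fiber has at most two elements, which completes the proof.

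The only mildly delicate step is the equality analysis for a nonzero fiber: one must read off from the collapsed chain of inequalities that two distinct fiber elements are exact antipodes, and then reuse this fact to exclude a third element. Everything else — the index count $\abs{\lat/K\lat} = K^n$ and the bound on the zero fiber — is routine. I would also remark that the argument is tight (e.g.\ the hexagonal lattice with $s = 1$ gives exactly $2\cdot 2^2 - 1 = 7$ points), so no slack is being wasted in the equality analysis.
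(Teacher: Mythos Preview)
Your proof is correct. The paper does not actually prove this statement --- it is simply quoted as \cite[Theorem~2.1]{BHW93} --- so there is no in-paper proof to compare against; your pigeonhole argument over the cosets of $\ceil{2s}\lat$, with the antipodal analysis for the nonzero fibers, is precisely the classical argument (and the one used in \cite{BHW93}).
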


\begin{lemma} 
\label{lem:coveringradius}
For any lattice $\lat \subset \R^n$ with basis $(\vec{b}_1,\ldots, \vec{b}_n)$,
\[
\lambda_n(\lat)^2 \leq \mu(\lat)^2 \leq \frac{1}{4} \cdot \sum_{i=1}^n \length{\gs{\vec{b}}_i}^2
\; .
\]
\end{lemma}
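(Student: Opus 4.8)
The two inequalities are independent; the plan is to get the right-hand one from Babai's nearest-plane rounding and the left-hand one from a ``deep hole'' argument. For the upper bound $\mu(\lat)^2 \le \tfrac{1}{4}\sum_{i=1}^n \length{\gs{\vec{b}}_i}^2$: given any $\vec{t}\in\R^n$, I would write it in the Gram--Schmidt orthogonal frame and round off coefficients from the top down. Since $\vec{b}_i = \gs{\vec{b}}_i + (\text{a vector in }\spn(\vec{b}_1,\dots,\vec{b}_{i-1}))$, subtracting the nearest integer multiple of $\vec{b}_n$ from $\vec{t}$ puts the $\gs{\vec{b}}_n$-coefficient of the residual in $[-\tfrac{1}{2},\tfrac{1}{2}]$ and only perturbs lower coefficients, so recursing on $\vec{b}_{n-1},\dots,\vec{b}_1$ produces $\vec{y}\in\lat$ with $\vec{t}-\vec{y} = \sum_i e_i \gs{\vec{b}}_i$ and $|e_i|\le\tfrac{1}{2}$ for every $i$. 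Orthogonality of the $\gs{\vec{b}}_i$ then gives $\length{\vec{t}-\vec{y}}^2 = \sum_i e_i^2 \length{\gs{\vec{b}}_i}^2 \le \tfrac{1}{4}\sum_i \length{\gs{\vec{b}}_i}^2$, and taking the supremum over $\vec{t}$ finishes it. (The bound holds for every basis, and is tight for $\Z^n$, where both sides equal $n/4$.)

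\textbf{Lower bound.} Here I would prove $\lambda_n(\lat)\le 2\mu(\lat)$, which squared is the left-hand inequality. The map $\vec{t}\mapsto\dist(\vec{t},\lat)$ is continuous and $\lat$-periodic, so it attains its maximum $\mu:=\mu(\lat)$ at some ``deep hole'' $\vec{t}^*$; let $S=\set{\vec{w}\in\lat:\length{\vec{w}-\vec{t}^*}=\mu}$ be the exact closest vectors. The key step is to show that $\{\vec{w}-\vec{t}^*:\vec{w}\in S\}$ positively spans $\R^n$ --- equivalently, that $S$ affinely spans $\R^n$. If it did not, there would be $\vec{d}\ne\vec{0}$ with $\inner{\vec{w}-\vec{t}^*,\vec{d}}\le 0$ for all $\vec{w}\in S$; only finitely many lattice points lie within distance $\mu+1$ of $\vec{t}^*$, so there is a gap $\eta>0$ between $\mu$ and the next smallest value $\length{\vec{w}-\vec{t}^*}$, and then for all small $\eps>0$ the point $\vec{t}^*+\eps\vec{d}$ is at distance $>\mu$ from the whole lattice --- from $S$ since $\length{\vec{w}-\vec{t}^*-\eps\vec{d}}^2 = \mu^2 - 2\eps\inner{\vec{w}-\vec{t}^*,\vec{d}} + \eps^2\length{\vec{d}}^2 > \mu^2$, and from everything else by the gap $\eta$ --- contradicting maximality of $\vec{t}^*$. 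Once $S$ affinely spans, I would pick affinely independent $\vec{w}_0,\dots,\vec{w}_n\in S$, so that $\vec{w}_1-\vec{w}_0,\dots,\vec{w}_n-\vec{w}_0$ are $n$ linearly independent lattice vectors, each of norm at most $\length{\vec{w}_i-\vec{t}^*}+\length{\vec{t}^*-\vec{w}_0} = 2\mu$. Hence $\lambda_n(\lat)\le 2\mu(\lat)$.

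\textbf{Main obstacle.} The upper bound is entirely routine. In the lower bound the only delicate point is making the ``positive spanning'' step rigorous: the first-order perturbation heuristic becomes a genuine contradiction precisely because local finiteness of $\lat$ yields a uniform gap $\eta$, after which only the triangle inequality is used. I should also flag that the constant here really is $2$, not $1$: the estimate $\length{\vec{w}_i-\vec{w}_0}\le 2\mu$ is attained for $\Z$ and essentially attained by $\Z\oplus\lat'$ with $\lat'$ dense, and indeed $\lambda_n(\lat)\le\mu(\lat)$ already fails for $\Z^2$ (where $\lambda_2=1>1/\sqrt{2}=\mu$); so the left-hand inequality of the lemma should be understood as $\tfrac{1}{4}\lambda_n(\lat)^2\le\mu(\lat)^2$, i.e.\ $\lambda_n(\lat)\le 2\mu(\lat)$.
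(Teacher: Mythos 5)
Your argument is correct, but there is nothing in the paper to compare it against: the paper states this lemma as background with no proof and no citation, so I assess your proposal on its own merits. The upper bound via Babai nearest-plane rounding is the standard argument and is complete: the residual $\vec{t}-\vec{y}=\sum_i e_i\gs{\vec{b}}_i$ with $|e_i|\le\frac12$ plus orthogonality of the $\gs{\vec{b}}_i$ gives $\mu(\lat)^2\le\frac14\sum_i\|\gs{\vec{b}}_i\|^2$. You are also right to flag that the left-hand inequality is misstated in the paper: as printed, $\lambda_n(\lat)^2\le\mu(\lat)^2$ already fails for $\Z^2$ (where $\lambda_2=1$ and $\mu=1/\sqrt{2}$), and the correct standard fact is $\lambda_n(\lat)\le 2\mu(\lat)$, i.e.\ $\frac14\lambda_n(\lat)^2\le\mu(\lat)^2$, which is what you prove. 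The slip is harmless for the paper, since both invocations of the lemma (in Claim~\ref{clm:simplenCVPreduction} and Lemma~\ref{lem:good-index}) use only the right-hand bound $\mu(\lat)^2\le\frac14\sum_i\|\gs{\vec{b}}_i\|^2$. Your deep-hole argument for $\lambda_n\le 2\mu$ is rigorous: the uniform gap $\eta$ supplied by local finiteness turns the first-order perturbation into a genuine contradiction, and positive spanning of $\{\vec{w}-\vec{t}^*:\vec{w}\in S\}$ does imply that $S$ contains $n+1$ affinely independent points (though your word \emph{equivalently} is too strong---affine spanning does not conversely imply positive spanning; fortunately you prove and use only the direction that matters). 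A shorter route to the same bound, worth knowing: the lattice vectors of norm strictly less than $\lambda_n$ span a proper subspace $V$; taking a unit vector $\vec{u}\perp V$ and $\vec{t}=(\lambda_n/2)\vec{u}$, every $\vec{w}\in\lat$ satisfies $\|\vec{w}-\vec{t}\|\ge\lambda_n/2$ (by orthogonality if $\vec{w}\in V$, by the triangle inequality otherwise), so $\mu(\lat)\ge\lambda_n(\lat)/2$ without locating a deep hole.
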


\subsection{The discrete Gaussian distribution}

For any $s>0$, we define the function $\rho_s : \R^n \rightarrow\R$ as $\rho_s(\vec{t}) := \exp(-\pi \length{\vec{t}}^2/s^2)$. When $s=1$, we simply write $\rho(\vec{t})$. For a discrete set $A \subset \R^n$ we define $\rho_s(A) :=\sum_{\vec{x}\in A} \rho_s(\vec{x})$. 
\begin{definition} 
For a lattice $\lat \subset \R^n$, a shift $\vec{t} \in \R^n$, and parameter $s > 0$, let $D_{\lat - \vec{t},s}$ be the probability distribution over $\lat - \vec{t}$ such that the probability of drawing $\vec{x} \in \lat - \vec{t}$ is proportional to $\rho_{s}(\vec{x})$. We call this the discrete Gaussian distribution over $\lat - \vec{t}$ with parameter $s$.
\end{definition}

We make frequent use of the discrete Gaussian over the cosets of a sublattice. If $\lat' \subseteq \lat$ is a sublattice of $\lat$, then the set of cosets, $\lat/\lat'$ is the set of translations of $\lat'$ by lattice vectors, $\vec{c} = \lat' + \vec{y}$ for some $\vec{y} \in \lat$. (Note that $\vec{c}$ is a \emph{set}, not a vector.)
Banaszczyk proved the following three bounds~\cite{banaszczyk}.

\begin{lemma}[{\cite[Lemma 1.4]{banaszczyk}}]
\label{lem:banaszczyk} 
For any lattice $\lat\subset\R^n$ and $s > 1$,
\[
\rho_s(\lat) \leq s^n \rho(\lat)
\;.
\]
\end{lemma}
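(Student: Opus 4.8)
The plan is to prove this via the Poisson summation formula, exploiting the fact that a Gaussian is essentially its own Fourier transform up to scaling; this is essentially Banaszczyk's original argument. Since both sides of the inequality only involve vectors in $\spn(\lat)$, I may assume without loss of generality that $\lat$ is full rank. Let $\lat^{*} = \{\vec{y}\in\R^n : \langle \vec{y},\vec{x}\rangle \in \Z \text{ for all }\vec{x}\in\lat\}$ denote the dual lattice, which satisfies $\det(\lat^{*})\det(\lat) = 1$.

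First I would record the Fourier transform of a Gaussian: with the convention $\widehat{f}(\vec{y}) = \int_{\R^n} f(\vec{x})\, e^{-2\pi i \langle\vec{x},\vec{y}\rangle}\, d\vec{x}$, a direct and standard computation gives $\widehat{\rho_s} = s^n\,\rho_{1/s}$. Because $\rho_s$ is a Schwartz function, Poisson summation applies without any convergence subtleties and yields, for every $s>0$,
\[
\rho_s(\lat) = \sum_{\vec{x}\in\lat}\rho_s(\vec{x}) = \frac{1}{\det(\lat)}\sum_{\vec{y}\in\lat^{*}}\widehat{\rho_s}(\vec{y}) = \frac{s^n}{\det(\lat)}\,\rho_{1/s}(\lat^{*})\,.
\]
Specializing to parameter $1$ also gives $\rho(\lat) = \frac{1}{\det(\lat)}\,\rho(\lat^{*})$.

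The key point is then an elementary term-by-term comparison on the dual: for $s>1$ we have $1/s<1$, so $\rho_{1/s}(\vec{y}) = e^{-\pi s^2\length{\vec{y}}^2} \le e^{-\pi\length{\vec{y}}^2} = \rho(\vec{y})$ for every $\vec{y}$, whence $\rho_{1/s}(\lat^{*}) \le \rho(\lat^{*})$. Combining this with the two displayed identities gives
\[
\rho_s(\lat) = \frac{s^n}{\det(\lat)}\,\rho_{1/s}(\lat^{*}) \le \frac{s^n}{\det(\lat)}\,\rho(\lat^{*}) = s^n\,\rho(\lat)\,,
\]
which is the claim.

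I expect no genuinely hard step here; the only thing to be careful about is bookkeeping — pinning down the normalization constant in the Gaussian Fourier transform and keeping the $\det(\lat)$ versus $\det(\lat^{*})$ factors straight so that everything cancels, and checking that Poisson summation is legitimate (it is, since both $\rho_s$ and $\widehat{\rho_s}$ decay superpolynomially). An alternative that avoids invoking the dual lattice would use the Gaussian convolution identity $\rho_1 \ast \rho_{\sqrt{s^2-1}} = \bigl((s^2-1)^{1/2}/s\bigr)^n \rho_s$ together with the translation bound $\rho_t(\lat - \vec{y}) \le \rho_t(\lat)$; but since that translation bound is itself usually derived from Poisson summation, the route above is the most economical.
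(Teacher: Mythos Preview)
Your argument is correct and is essentially Banaszczyk's original proof via Poisson summation; the paper itself does not give a proof of this lemma but simply cites \cite[Lemma~1.4]{banaszczyk}. There is nothing to compare against beyond noting that your write-up matches the standard derivation.
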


\begin{lemma}
\label{lem:betterrhoLtbound}
For any lattice $\lat\subset\R^n$, $s > 0$, $\vec{t} \in \R^n$
\[
\rho_s(\vec{t}) \leq \frac{\rho_s(\lat - \vec{t})}{\rho_s(\lat)} \leq 1 
\; .
\]
\end{lemma}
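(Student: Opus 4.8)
The inequality splits into two independent halves. The right-hand bound $\rho_s(\lat-\vec{t})\le\rho_s(\lat)$ is the classical fact that, among all cosets of a lattice, the one containing the origin has the largest Gaussian mass; the plan is to prove it by Poisson summation, exactly as Banaszczyk does. The left-hand bound $\rho_s(\vec{t})\,\rho_s(\lat)\le\rho_s(\lat-\vec{t})$ is more elementary and follows from a symmetrization trick exploiting $\lat=-\lat$.

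For the upper bound: recall that for a full-rank lattice $\lat\subset\R^n$ and any $\vec{u}\in\R^n$, Poisson summation gives
\[
\rho_s(\lat+\vec{u})=\frac{s^n}{\det\lat}\sum_{\vec{w}\in\lat^*}\rho_{1/s}(\vec{w})\,e^{2\pi i\langle\vec{w},\vec{u}\rangle},
\]
where $\lat^*$ is the dual lattice and we have used $\widehat{\rho_s}=s^n\rho_{1/s}$. Applying this with $\vec{u}=-\vec{t}$ and bounding the right-hand side by the triangle inequality term by term (each $\rho_{1/s}(\vec{w})\ge0$), the sum is at most its value at $\vec{u}=\vec{0}$, which by the same identity equals $\rho_s(\lat)$. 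Since $\rho_s(\lat-\vec{t})$ is a sum of positive reals it equals its own modulus, so this yields $\rho_s(\lat-\vec{t})\le\rho_s(\lat)$. (If one prefers to avoid Fourier analysis entirely, this half is also available off the shelf from \cite{banaszczyk}.)

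For the lower bound: for every $\vec{y}\in\lat$, expanding $\length{\vec{y}-\vec{t}}^2=\length{\vec{y}}^2-2\langle\vec{y},\vec{t}\rangle+\length{\vec{t}}^2$ gives the identity $\rho_s(\vec{y}-\vec{t})=\rho_s(\vec{y})\,\rho_s(\vec{t})\,e^{2\pi\langle\vec{y},\vec{t}\rangle/s^2}$. Summing over $\vec{y}\in\lat$ and pairing each $\vec{y}$ with $-\vec{y}$ (using $\rho_s(\vec{y})=\rho_s(-\vec{y})$ and absolute convergence to justify the rearrangement) turns the exponential into a hyperbolic cosine:
\[
\rho_s(\lat-\vec{t})=\rho_s(\vec{t})\sum_{\vec{y}\in\lat}\rho_s(\vec{y})\cosh\!\Big(\tfrac{2\pi\langle\vec{y},\vec{t}\rangle}{s^2}\Big)\ \ge\ \rho_s(\vec{t})\sum_{\vec{y}\in\lat}\rho_s(\vec{y})=\rho_s(\vec{t})\,\rho_s(\lat),
\]
where the inequality is just $\cosh\ge1$. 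Dividing by $\rho_s(\lat)>0$ gives the claim.

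I expect no genuine obstacle here; the only care needed is to fix a consistent Fourier-transform normalization so that the Poisson summation identity and the evaluation $\widehat{\rho_s}=s^n\rho_{1/s}$ match, and to note that the dual-lattice sum on the Fourier side converges absolutely (it does, since $\rho_{1/s}$ decays like a Gaussian). Both halves are short enough that the lemma could equally well be dispatched by citing Banaszczyk for the upper bound and recording the two-line $\cosh$ computation for the lower bound.
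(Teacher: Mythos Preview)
Your proof is correct. Note, however, that the paper does not actually prove this lemma: it is stated in the preliminaries with the sentence ``Banaszczyk proved the following three bounds~\cite{banaszczyk}'' and no argument is given. So there is no paper proof to compare against beyond the citation.

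That said, your upper bound via Poisson summation and the triangle inequality is precisely Banaszczyk's original argument, so on that half you are reproducing the cited source. Your lower bound via the $\cosh\ge 1$ symmetrization is a clean elementary argument; an alternative (also short) route is to observe that $\rho_s(\lat-\vec{t})\,\rho_s(\lat+\vec{t})\ge\rho_s(\vec{t})^2\rho_s(\lat)^2$ can be read off from the identity in Lemma~\ref{lem:RS15} with $\vec{x}=\vec{y}=\vec{t}$, together with $\rho_s(\lat-\vec{t})=\rho_s(\lat+\vec{t})$. Either way, both halves are routine and your write-up is fine as is.
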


\begin{lemma}[{\cite[Lemma 2.13]{cvpp}}]
\label{lem:banaszczyktail} 
For any lattice $\lat\subset\R^n$, $s > 0$, $\vec{t} \in \R^n$, and $r \geq1/\sqrt{2\pi}$,
\[
\Pr_{\vec{X} \sim D_{\lat - \vec{t}, s}}[\length{\vec{X}} \geq r s\sqrt{n} ] < \frac{\rho_s(\lat)}{\rho_s(\lat - \vec{t})}\big( \sqrt{2 \pi e r^2} \exp(-\pi r^2) \big)^n
\; .
\]

\end{lemma}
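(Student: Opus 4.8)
The plan is to reduce to the centered case $\vec{t} = \vec{0}$ and then run a standard ``Markov-style'' interpolation using exactly the two Banaszczyk-type estimates stated just above (Lemmas~\ref{lem:banaszczyk} and~\ref{lem:betterrhoLtbound}). First, rescaling by $s$ — replacing $\lat$ by $\lat/s$ and $\vec{t}$ by $\vec{t}/s$ — reduces everything to the case $s = 1$: writing $\ball$ for the closed unit ball, it then suffices to bound
\[
\Pr_{\vec{X} \sim D_{\lat - \vec{t}, 1}}\!\big[\length{\vec{X}} \geq r\sqrt{n}\,\big] \;=\; \frac{\rho\big((\lat - \vec{t})\setminus r\sqrt{n}\,\ball\big)}{\rho(\lat - \vec{t})}
\; ,
\]
so I would focus entirely on the numerator.

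The key step is to split the Gaussian weight of each ``far'' point. Fix $\tau \in (0,1]$. For every $\vec{x}$ with $\length{\vec{x}} \geq r\sqrt{n}$ we have $\rho(\vec{x}) = e^{-\pi(1-\tau)\length{\vec{x}}^2}\cdot\rho_{1/\sqrt{\tau}}(\vec{x}) \leq e^{-\pi(1-\tau)r^2 n}\cdot\rho_{1/\sqrt{\tau}}(\vec{x})$, since $\length{\vec{x}}^2 \geq r^2 n$ and $1 - \tau \geq 0$. Summing over the far points of $\lat - \vec{t}$, enlarging the sum to all of $\lat - \vec{t}$, then applying Lemma~\ref{lem:betterrhoLtbound} to pass to $\lat$, and finally bounding $\rho_{1/\sqrt{\tau}}(\lat) \leq \tau^{-n/2}\rho(\lat)$ via Lemma~\ref{lem:banaszczyk} gives
\[
\rho\big((\lat - \vec{t})\setminus r\sqrt{n}\,\ball\big) \;\leq\; e^{-\pi(1-\tau)r^2 n}\,\rho_{1/\sqrt{\tau}}(\lat) \;\leq\; \big(\tau^{-1/2}\,e^{-\pi(1-\tau)r^2}\big)^n\,\rho(\lat)
\; .
\]
It remains to optimize the base over $\tau$: a one-variable calculation (the log of the base is strictly convex with derivative $-\tfrac{1}{2\tau} + \pi r^2$) shows the minimum is at $\tau = 1/(2\pi r^2)$, which lies in $(0,1]$ precisely because $r \geq 1/\sqrt{2\pi}$; substituting collapses the base to $\sqrt{2\pi e r^2}\,e^{-\pi r^2}$. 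Dividing through by $\rho(\lat - \vec{t})$ and undoing the rescaling gives the stated bound.

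The only point needing care is the \emph{strict} inequality. If the far set $(\lat - \vec{t})\setminus r\sqrt{n}\,\ball$ is empty, the left side is $0$ and we are done; otherwise, when $r > 1/\sqrt{2\pi}$ the optimal $\tau$ is strictly less than $1$, so equality throughout the chain would force every point of $\lat - \vec{t}$ to lie exactly on the sphere of radius $r\sqrt{n}$ — impossible, since a coset of a lattice of rank $\geq 1$ is infinite. The boundary value $r = 1/\sqrt{2\pi}$ (where the claimed factor is $1$) is handled directly: either $\vec{t} \notin \lat$, in which case $\rho(\lat - \vec{t}) < \rho(\lat)$ strictly, or $\vec{t} \in \lat$, in which case $\vec{0} \in \lat - \vec{t}$ sits strictly inside the ball and is dropped from the far sum. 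I do not anticipate a genuine obstacle here; the whole argument is elementary once Lemmas~\ref{lem:banaszczyk} and~\ref{lem:betterrhoLtbound} are in hand, and the only real ``computation'' is the optimization over $\tau$.
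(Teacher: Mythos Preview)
Your proof is correct and is essentially the standard Banaszczyk argument adapted to the shifted case: split the weight as $\rho(\vec{x}) = e^{-\pi(1-\tau)\|\vec{x}\|^2}\rho_{1/\sqrt{\tau}}(\vec{x})$, pass to the centered lattice via Lemma~\ref{lem:betterrhoLtbound}, rescale via Lemma~\ref{lem:banaszczyk}, and optimize. The paper does not give its own proof of this lemma---it is quoted directly from~\cite[Lemma~2.13]{cvpp}---so there is nothing to compare against here; your argument is precisely the one that reference uses, and your handling of the strict inequality (empty far set; all points forced onto a sphere when $\tau<1$; the boundary case $r=1/\sqrt{2\pi}$ via $\vec{0}\in\lat-\vec{t}$ or strictness in Lemma~\ref{lem:betterrhoLtbound}) is clean.
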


From these, we derive the following corollary.

\begin{corollary}
\label{cor:banaszczykcor}
For any lattice $\lat \subset \R^n$, $s > 0$, and $\vec{t} \in \R^n$, let $\alpha := \dist(\vec{t}, \lat)/(\sqrt{n}s)$. Then, for any $r \geq 1/\sqrt{2\pi}$,
\begin{equation}
\label{eq:approx-CVP}
\Pr_{\vec{X} \sim D_{\lat - \vec{t}, s}}[\length{\vec{X}} \geq r s\sqrt{n} ] <
e^{\pi n \alpha^2}\big( \sqrt{2 \pi e r^2} \exp(-\pi r^2) \big)^n
\; .
\end{equation}
Furthermore, if $\alpha \leq 2^n$, we have that 
\[
\Pr[\|\vec{X}\|^2 \geq \dist(\vec{t}, \lat)^2 + 2(sn)^2] \leq e^{-3n^2} \text{.}
\]
\end{corollary}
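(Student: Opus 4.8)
The plan is to derive both claims of Corollary~\ref{cor:banaszczykcor} from Lemma~\ref{lem:banaszczyktail} by controlling the ratio $\rho_s(\lat)/\rho_s(\lat - \vec{t})$. First I would handle \eqref{eq:approx-CVP}: by Lemma~\ref{lem:betterrhoLtbound} we have $\rho_s(\lat-\vec{t}) \geq \rho_s(\vec{t})\rho_s(\lat)$, hence $\rho_s(\lat)/\rho_s(\lat - \vec{t}) \leq 1/\rho_s(\vec{t}) = \exp(\pi\length{\vec{t}}^2/s^2)$. Now $\vec{t}$ is only defined up to $\lat$ in the distribution $D_{\lat-\vec{t},s}$, and the quantity $\length{\vec{X}}$ is unchanged if we replace $\vec{t}$ by $\vec{t} - \vec{v}$ for $\vec{v}\in\lat$; choosing $\vec{v}$ to be a closest vector gives $\length{\vec{t}} = \dist(\vec{t},\lat)$ in the bound, so $\rho_s(\lat)/\rho_s(\lat-\vec{t}) \leq \exp(\pi\dist(\vec{t},\lat)^2/s^2) = \exp(\pi n \alpha^2)$ by the definition $\alpha = \dist(\vec{t},\lat)/(\sqrt{n}s)$. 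Plugging this into Lemma~\ref{lem:banaszczyktail} yields \eqref{eq:approx-CVP} directly (one should note the hypothesis $\alpha\le 2^n$ or similar is not needed for this part, only $r\ge 1/\sqrt{2\pi}$).

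For the second (``furthermore'') statement, the goal is to show $\Pr[\length{\vec{X}}^2 \geq \dist(\vec{t},\lat)^2 + 2(sn)^2] \leq e^{-3n^2}$ when $\alpha\le 2^n$. The event $\length{\vec{X}}^2 \geq \dist(\vec{t},\lat)^2 + 2(sn)^2 = n s^2(\alpha^2 n + 2n^2)$ is contained in the event $\length{\vec{X}} \geq s\sqrt{n}\cdot r$ with $r := \sqrt{\alpha^2 n + 2n^2}$. I would check that this $r$ satisfies $r \geq 1/\sqrt{2\pi}$ (clearly, since $r \geq \sqrt{2}\,n \geq 1$), so \eqref{eq:approx-CVP} applies and gives the bound $e^{\pi n\alpha^2}(\sqrt{2\pi e r^2}\exp(-\pi r^2))^n = \exp\big(\pi n\alpha^2 + \tfrac{n}{2}\log(2\pi e r^2) - \pi n r^2\big)$. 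Since $r^2 = \alpha^2 n + 2n^2$, the term $-\pi n r^2 = -\pi n^2 \alpha^2 - 2\pi n^3$ and the $\pi n\alpha^2$ term is dwarfed: $\pi n\alpha^2 - \pi n^2\alpha^2 \leq 0$ for $n\geq 1$, so the exponent is at most $\tfrac{n}{2}\log(2\pi e r^2) - 2\pi n^3$. It remains to verify the logarithmic term is negligible against $2\pi n^3$: using $\alpha \leq 2^n$ we get $r^2 \leq 4^n + 2n^2 \leq 4^{n+1}$ (say for $n\ge 1$), so $\log(2\pi e r^2) \leq \log(2\pi e) + 2(n+1)\log 2 = O(n)$, whence $\tfrac{n}{2}\log(2\pi e r^2) = O(n^2)$, which is indeed $\leq 2\pi n^3 - 3n^2$ for all sufficiently large $n$; the claim for small $n$ can be absorbed into the universal-constant conventions or handled by a direct check. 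This gives the exponent $\leq -3n^2$ as claimed.

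The main obstacle, such as it is, is purely bookkeeping: making sure the constant $3$ in the exponent $e^{-3n^2}$ actually comes out, i.e., that $2\pi n^3 - O(n^2) \geq 3n^2$, which forces a ``for $n$ large enough'' caveat somewhere, or a slightly more careful accounting of the $O(n)$ term in $\log(2\pi e r^2)$. I would make the constant in $r^2 \leq 4^{n+1}$ explicit and track that $\tfrac n2 \log(2\pi e r^2) \le \tfrac n2(2(n+1) + 3) \le 2n^2$ for $n\ge 1$ (using $\log 2 < 1$ and $\log(2\pi e) < 3$), so the exponent is at most $2n^2 - 2\pi n^3 \le -3n^2$ whenever $2\pi n^3 \ge 5n^2$, i.e. $n \ge 1$. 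That closes the argument with no caveat at all. The only genuinely non-mechanical point is the translation-invariance observation that lets us take $\length{\vec{t}} = \dist(\vec{t},\lat)$ in the bound from Lemma~\ref{lem:betterrhoLtbound}; everything else is substitution and monotonicity.
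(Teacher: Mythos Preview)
Your approach matches the paper's exactly: the first inequality follows from combining Lemmas~\ref{lem:betterrhoLtbound} and~\ref{lem:banaszczyktail} after translating so that $\length{\vec{t}} = \dist(\vec{t},\lat)$, and the second follows by applying the first with a suitable $r$.

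However, the second part contains an arithmetic slip that, as written, breaks the argument. You assert $\dist(\vec{t},\lat)^2 + 2(sn)^2 = ns^2(\alpha^2 n + 2n^2)$, but since $\dist(\vec{t},\lat)^2 = \alpha^2 n s^2$ and $2(sn)^2 = 2n^2 s^2$, the correct identity is $\dist(\vec{t},\lat)^2 + 2(sn)^2 = ns^2(\alpha^2 + 2n)$. Hence the right choice is $r = \sqrt{\alpha^2 + 2n}$ (as in the paper), not $r = \sqrt{\alpha^2 n + 2n^2}$. With your inflated $r$ one has $(s\sqrt{n}\,r)^2 = n\cdot\big(\dist(\vec{t},\lat)^2 + 2(sn)^2\big)$, so the event $\{\length{\vec{X}} \ge s\sqrt{n}\,r\}$ is \emph{contained in}, rather than containing, the event $\{\length{\vec{X}}^2 \ge \dist(\vec{t},\lat)^2 + 2(sn)^2\}$; the tail bound you derive therefore controls the wrong (smaller) event. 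Once you plug in the correct $r^2 = \alpha^2 + 2n$, the exponent becomes $\tfrac{n}{2}\ln\!\big(2\pi e(\alpha^2+2n)\big) - 2\pi n^2$ (the $\pi n\alpha^2$ terms cancel exactly, not merely up to sign), and using $\alpha \le 2^n$ to bound $\alpha^2 + 2n \le 2\cdot 4^n$ gives the exponent at most $(\ln 2)\,n^2 + O(n) - 2\pi n^2 \le -3n^2$ for all $n \ge 1$, exactly as in the paper. Your concluding constant-tracking is then unnecessary: the main negative term is $-2\pi n^2$, not $-2\pi n^3$.
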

\begin{proof}
We can assume without loss of generality that $\vec{0}$ is a closest vector to
$\vec{t}$ in $\lat$ and therefore $d := \dist(\vec{t}, \lat) = \length{\vec{t}}$.
Equation~\ref{eq:approx-CVP} then follows from combining
Lemma~\ref{lem:betterrhoLtbound} with Lemma~\ref{lem:banaszczyktail}. 

Let $r = \sqrt{\alpha^2 + 2n} \geq 1/\sqrt{2\pi}$, and note that $r s\sqrt{n} =
\sqrt{d^2 + 2(ns)^2}$. Then, by the first part of the corollary, we have that 
\begin{align*}
\Pr[\|\vec{X}\|^2 \geq d^2 + 2(sn)^2] &= \Pr[\|\vec{X}\| \geq rs\sqrt{n}] \\
&\leq e^{\pi \alpha^2 n} \cdot \left(2 \pi e(\alpha^2+2n) \right)^{n/2} \cdot
e^{-n\pi (\alpha^2+2n)} \\
&\leq \left(4 \pi e 2^{2n}\right)^{n/2} e^{-2\pi n^2}  \\
&\leq e^{(\ln(4 \pi e)/2) n + (\ln 2) n^2 - 2\pi n^2} \\
&\leq e^{-3n^2} \; ,
\end{align*}
as needed.
\end{proof}

\subsection{The Gram-Schmidt orthogonalization and \texorpdfstring{$\gamma$}{gamma}-HKZ bases}

Given a basis, $\basis = (\vec{b}_1,\ldots, \vec{b}_n)$, we define its Gram-Schmidt orthogonalization $(\gs{\vec{b}}_1,\ldots, \gs{\vec{b}}_n)$ by
\[  \gs{\vec{b}}_i = \pi_{\{b_1, \ldots, b_{i-1} \}^\perp}(\vec{b}_i) \; , \]
and the corresponding Gram-Schmidt coefficients $\mu_{i,j}$ by
\[ \mu_{i,j}= \frac{\inner{\vec{b}_i, \gs{\vec{b}}_j}}{\length{\gs{\vec{b}}_j}^2}\; . \]
Here, $\pi_A$ is the orthogonal projection on the subspace $A$ and $\{\vec{b}_1,
\ldots, \vec{b}_{i-1} \}^\perp$ denotes the subspace orthogonal to $\vec{b}_1,
\ldots, \vec{b}_{i-1}$.

\begin{definition}
A basis $\basis= (\vec{b}_1,\ldots, \vec{b}_n )$ of $\lat$ is a $\gamma$-approximate Hermite-Korkin-Zolotarev ($\gamma$-HKZ) basis if
\begin{enumerate}
\item $\length{\vec{b}_1} \leq \gamma \cdot \lambda_1(\lat)$;
\item the Gram-Schmidt coefficients of $\basis$ satisfy $\abs{ \mu_{i,j} } \leq \frac{1}{2}$ for all $j<i$; and
\item $\pi_{\{ \vec{b_1} \}^\perp}(\vec{b}_2), \ldots, \pi_{\{ \vec{b}_1 \}^\perp}(\vec{b}_n)$ is a $\gamma$-HKZ basis of $\pi_{\{ \vec{b_1} \}^\perp}(\lat)$.
\end{enumerate}
\end{definition}

We use $\gamma$-HKZ bases in the sequel to find \scarequotes{sublattices that contain all short vectors.} In particular, note that if $(\vec{b}_1, \ldots, \vec{b}_n)$ is a $\gamma$-HKZ basis for $\lat$, then for any index $k$, $\lat(\vec{b}_1,\ldots, \vec{b}_{k-1})$ contains all lattice vectors $\vec{y} \in \lat$ with $\length{\vec{y}} < \length{\gs{\vec{b}}_{k}}/\gamma$. When $\gamma = 1$, we omit it.

\subsection{Lattice problems}

\begin{definition}
For $\gamma = \gamma(n) \geq 1$ (the approximation factor), the search problem $\gamma\text{-}\problem{CVP}$ (Closest Vector Problem) is defined as follows: The input is a basis $\basis$ for a lattice $\lat \subset \R^n$ and a target vector $\vec{t} \in \R^n$. The goal is to output a vector $\vec{y} \in \lat $ with $\length{\vec{y} - \vec{t}} \leq \gamma \cdot \dist(\vec{t}, \lat)$.
\end{definition}

When $\gamma = 1$, we omit it and call the problem \emph{exact} \problem{CVP} or simply \problem{CVP}.

\begin{definition}
\label{def:dgs}
For $\eps \geq 0$ (the error), $\sigma$ (the minimal parameter) a function that maps shifted lattices to non-negative real numbers,
and $m$ (the desired number of output vectors) a function that maps shifted lattices and positive real numbers to natural numbers,
$\DGS{\eps}{\sigma}{m}$ (the Discrete Gaussian Sampling problem) is defined as follows: 
The input is a basis $\basis$ for a lattice $\lat \subset \R^n$, a shift $\vec{t} \in \R^n$, and a parameter $s > \sigma(\lat - \vec{t})$. The goal is to output a sequence of $\hat{m} \geq m(\lat - \vec{t} ,s)$ vectors whose joint distribution is $\eps$-close to $D_{\lat-\vec{t}, s}^{\hat{m}}$.
\end{definition}

We stress that $\eps$ bounds the statistical distance between the \emph{joint} distribution of the output vectors and $\hat{m}$ independent samples from $D_{\lat - \vec{t},s}$. 

\subsection{Some known algorithms}

The following theorem was proven by Ajtai, Kumar, and Sivakumar~\cite{AKS01}, building on work of Schnorr~\cite{Schnorr87}.
\begin{theorem}
\label{thm:BKZ}
There is an algorithm that takes as input a lattice $\lat \subset \R^n$, target $\vec{t} \in \R^n$, and parameter $u \geq 2$ and outputs a $\gamma$-HKZ basis of $\lat$ and a $\gamma'$-approximate closest vector to $\vec{t}$ in time $2^{O(u)} \cdot \poly(n)$, where $\gamma := u^{n/u}$ and $\gamma' := \sqrt{n} u^{n/u}$.
\end{theorem}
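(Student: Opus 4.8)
The plan is to assemble two classical components, both powered by a single subroutine: the AKS sieve of \cite{AKS01}, which solves \emph{exact} $\svp$ on a rank-$u$ lattice in time $2^{O(u)}\poly(u)$. The first component is Schnorr's blockwise basis reduction with block size $u$, which upgrades an exact-$\svp$ oracle in dimension $u$ into a $\gamma$-HKZ basis of the input rank-$n$ lattice; the second is Babai's nearest-plane algorithm, which converts such a basis into a $\gamma'$-approximate closest vector. The only quantitative work is tracking the running time and the Hermite-type factor that accrues.

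First I would compute the basis. Writing $\pi_i$ for the orthogonal projection onto $\{\vec{b}_1,\dots,\vec{b}_{i-1}\}^{\perp}$, blockwise reduction repeats the following until the basis stops changing: for each $i$, use the AKS oracle to find a shortest vector of the projected block $\lat(\pi_i(\vec{b}_i),\dots,\pi_i(\vec{b}_{\min(i+u-1,n)}))$, lift it and update $\vec{b}_i$ and the later basis vectors, then size-reduce so that $\abs{\mu_{i,j}}\le\tfrac12$ for all $j<i$. Two facts from Schnorr's analysis \cite{Schnorr87} complete this step: (i) a potential-function argument in the style of LLL --- a suitable product of the quantities $d_i := \prod_{j\le i}\length{\gs{\vec{b}}_j}^2$ is an integer bounded in terms of $\basis$ that decreases by a factor bounded away from $1$ at each successful update --- shows that the procedure halts after $\poly(n,\log\length{\basis})$ oracle calls; and (ii) on termination, Hermite's inequality applied to each block, together with the bound $\gamma_u\le u$ on Hermite's constant, telescopes to give $\length{\gs{\vec{b}}_i}\le u^{O(n/u)}\,\lambda_1(\pi_i(\lat))$ for every $i$. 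Taking $i=1$ gives $\length{\vec{b}_1}\le\gamma\,\lambda_1(\lat)$; applying the same bound within each projection, and noting that the Gram--Schmidt coefficients of a projected basis are among the original ones (hence still at most $\tfrac12$ in absolute value), gives the recursive HKZ condition by induction on the rank. So the output is a $\gamma$-HKZ basis with $\gamma = u^{n/u}$.

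Next I would run Babai's nearest-plane algorithm on this basis with target $\vec{t}$, obtaining $\vec{y}\in\lat$ with $\vec{t}-\vec{y}=\sum_i e_i\gs{\vec{b}}_i$ and $\abs{e_i}\le\tfrac12$. To upgrade this to a multiplicative guarantee I would induct on the rank against a fixed closest vector $\vec{z}$, splitting on whether the top rounding $\round{\inner{\vec{t},\gs{\vec{b}}_n}/\length{\gs{\vec{b}}_n}^2}$ equals the $\vec{b}_n$-coordinate of $\vec{z}$. If it does, the recursion runs on a target whose distance to $\lat(\vec{b}_1,\dots,\vec{b}_{n-1})$ is at most $\dist(\vec{t},\lat)$ and the errors add orthogonally, so the bound survives with an extra additive $\dist(\vec{t},\lat)^2$. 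If it does not, then the $\gs{\vec{b}}_n$-component of $\vec{t}-\vec{z}$ has length $\ge\tfrac12\length{\gs{\vec{b}}_n}$, so $\length{\gs{\vec{b}}_n}\le 2\dist(\vec{t},\lat)$; the HKZ inequalities $\length{\gs{\vec{b}}_i}\le\gamma\,\lambda_1(\pi_i(\lat))\le\gamma\,\lambda_i(\lat)$, combined with Lagarias--Lenstra--Schnorr-type lower bounds on the Gram--Schmidt norms, then bound every $\length{\gs{\vec{b}}_i}$ by $\poly(n)\,\gamma\,\dist(\vec{t},\lat)$, and $\length{\vec{y}-\vec{t}}^2\le\tfrac14\sum_i\length{\gs{\vec{b}}_i}^2$ closes the case. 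Accounting carefully for the polynomial factors yields $\length{\vec{y}-\vec{t}}\le\sqrt{n}\,\gamma\cdot\dist(\vec{t},\lat)=\gamma'\,\dist(\vec{t},\lat)$, the standard nearest-plane bound for an HKZ basis.

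For the running time: $\poly(n)$ calls to the AKS sieve at $2^{O(u)}\poly(u)$ each, plus $\poly(n)$ arithmetic for size reduction and for the nearest-plane pass, for a total of $2^{O(u)}\poly(n)$. I expect the main obstacle to be the blockwise-reduction analysis --- establishing both the polynomial bound on the number of oracle calls and the Hermite-factor bound $\length{\gs{\vec{b}}_i}\le u^{O(n/u)}\lambda_1(\pi_i(\lat))$; these are the genuinely nontrivial classical inputs, and once they are granted the rest is assembly (indeed, the statement may simply be cited from \cite{AKS01,Schnorr87}).
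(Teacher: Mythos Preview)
The paper does not prove this theorem: it sits in the preliminaries under ``Some known algorithms'' and is attributed to~\cite{AKS01} building on~\cite{Schnorr87}, with no argument supplied. Your closing parenthetical --- that the statement may simply be cited --- is precisely what the paper does.

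Your decomposition is the standard one, and the $\gamma$-HKZ half is essentially right: Schnorr-style block reduction with block size $u$, invoking the AKS exact-$\svp$ oracle on each block, together with the potential/termination argument and the Hermite-constant telescoping, does yield a $u^{O(n/u)}$-HKZ basis in $2^{O(u)}\cdot\poly(n)$ time. The nearest-plane half, however, has a gap. In your Case~2 you derive $\|\gs{\vec{b}}_n\|\le 2\dist(\vec{t},\lat)$ and then assert that the HKZ inequality $\|\gs{\vec{b}}_i\|\le\gamma\,\lambda_i(\lat)$ combined with ``Lagarias--Lenstra--Schnorr-type lower bounds'' forces every $\|\gs{\vec{b}}_i\|\le\poly(n)\,\gamma\,\dist(\vec{t},\lat)$. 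This does not follow: the LLS inequalities bound $\lambda_i(\lat)$ in terms of $\|\gs{\vec{b}}_1\|,\dots,\|\gs{\vec{b}}_i\|$, not in terms of $\|\gs{\vec{b}}_n\|$, and there is no route from a bound on the \emph{last} Gram--Schmidt norm to a uniform bound on the earlier ones --- nor from $\lambda_i(\lat)$ to $\dist(\vec{t},\lat)$, which can be arbitrarily smaller. The induction you want should instead let $j$ be the largest index at which Babai disagrees with a fixed closest vector $\vec{z}$ (so the coordinates above $j$ agree with $\vec{z}$ and contribute at most $\dist(\vec{t},\lat)^2$ to the squared error), obtain $\|\gs{\vec{b}}_j\|\le 2\dist(\vec{t},\lat)$, and then recurse on the rank-$j$ sublattice with a target whose distance is still at most $\dist(\vec{t},\lat)$; but closing this with the precise factor $\sqrt{n}\,\gamma$ still takes some care, which is presumably why both the paper and you ultimately defer to the literature.
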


The next theorem was proven by~\cite{GMSS99}.

\begin{theorem}
\label{thm:HKZtoCVP}
For any $\gamma = \gamma(n) \geq 1$, there is an efficient dimension-preserving reduction from the problem of computing a $\gamma$-HKZ basis to $\gamma$-CVP.
\end{theorem}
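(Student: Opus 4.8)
The plan is to realize the recursive definition of a $\gamma$-HKZ basis literally: peel off a first vector $\vec{b}_1$ that is a $\gamma$-approximate shortest vector of $\lat$, recurse to produce a $\gamma$-HKZ basis of the rank-$(n-1)$ projected lattice $\pi_{\{\vec{b}_1\}^\perp}(\lat)$, and then lift the recursively computed vectors back into $\lat$ and size-reduce them against $\vec{b}_1$. The only external tool I need is a way to produce a $\gamma$-approximate shortest vector from a $\gamma$-CVP oracle, for which I use the classical dimension-preserving reduction of \cite{GMSS99}: given a basis $(\vec{b}_1, \ldots, \vec{b}_n)$ of $\lat$, query the oracle for a closest vector $\vec{u}_i$ to $\vec{b}_i$ in $\lat(\vec{b}_1, \ldots, 2\vec{b}_i, \ldots, \vec{b}_n)$ for each $i$; the answer $\vec{b}_i - \vec{u}_i$ is a nonzero lattice vector whose $i$-th coordinate is odd, these vectors range over \emph{all} lattice vectors with odd $i$-th coordinate, a shortest nonzero vector of $\lat$ has an odd coordinate in at least one position (otherwise half of it would be a shorter nonzero lattice vector), and hence the shortest of the $n$ answers has norm at most $\gamma\lambda_1(\lat)$. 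Since a $\gamma$-approximate shortest vector need not be primitive once $\gamma \geq 2$, I would then replace it by the primitive vector generating the same line (obtained in polynomial time by dividing out the gcd of its coordinates in some basis), which is no longer and can therefore be extended to a basis of $\lat$.

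Concretely the recursion is: (i) compute a primitive $\gamma$-approximate shortest vector $\vec{b}_1$ of $\lat$ as above and extend it to a basis; (ii) project the other basis vectors onto $W := \{\vec{b}_1\}^\perp$ to obtain a basis of the rank-$(n-1)$ lattice $\lat' := \pi_W(\lat) \subset W$; (iii) recursively compute a $\gamma$-HKZ basis $(\vec{b}_2', \ldots, \vec{b}_n')$ of $\lat'$ --- note the recursive calls invoke the oracle only on lattices of rank at most $n$, so the reduction stays dimension-preserving; and (iv) for $i = 2, \ldots, n$, choose a preimage $\vec{b}_i \in \lat$ of $\vec{b}_i'$ under $\pi_W$ (preimages differ by integer multiples of $\vec{b}_1$) whose $\vec{b}_1$-coefficient lies in $[-\tfrac12, \tfrac12]$, and output $(\vec{b}_1, \ldots, \vec{b}_n)$. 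The base case of rank $1$ is trivial.

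For correctness I would verify the three conditions of the $\gamma$-HKZ definition for $(\vec{b}_1, \ldots, \vec{b}_n)$. Condition~1 is immediate from step (i). For conditions~2 and~3 the crucial point is that $\vec{b}_1 = \gs{\vec{b}}_1$ is orthogonal to $\gs{\vec{b}}_j$ for every $j \geq 2$; from this one gets $\gs{\vec{b}}_i = \gs{\vec{b}_i'}$ (Gram--Schmidt in $\lat'$) for all $i \geq 2$, hence $\mu_{i,j}$ for $2 \leq j < i$ coincides with the corresponding Gram--Schmidt coefficient in $(\vec{b}_2', \ldots, \vec{b}_n')$ --- which is at most $\tfrac12$ in absolute value by the inductive hypothesis --- while $\mu_{i,1}$ equals exactly the $\vec{b}_1$-coefficient chosen in step (iv), which is at most $\tfrac12$ by construction; this gives condition~2. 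Condition~3 holds because $\pi_W(\vec{b}_i) = \vec{b}_i'$ and $(\vec{b}_2', \ldots, \vec{b}_n')$ was constructed to be a $\gamma$-HKZ basis of $\lat' = \pi_W(\lat)$. Efficiency is clear: $n$ levels of recursion, each performing $O(n)$ oracle calls (for the shortest-vector step) together with polynomial-time linear algebra, for a total of $O(n^2)$ oracle calls.

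I expect the only real work to be bookkeeping rather than any conceptual difficulty. The two places that need care are (a) ensuring $\vec{b}_1$ is \emph{primitive}, so that it can genuinely start a basis even for $\gamma \geq 2$ --- which is why the gcd-division step is included --- and (b) checking that size-reducing $\mu_{i,1}$ against $\vec{b}_1$ after the recursive call does not spoil the coefficients $\mu_{i,j}$ with $j \geq 2$ that the recursion already made small. Both follow from the single observation above that $\vec{b}_1$ is orthogonal to all the higher Gram--Schmidt vectors, so that subtracting a multiple of $\vec{b}_1$ from $\vec{b}_i$ changes only $\mu_{i,1}$ and leaves $\gs{\vec{b}}_i$ and every other $\mu_{i,j}$ untouched.
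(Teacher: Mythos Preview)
Your proposal is correct and is essentially the standard argument. The paper itself does not give a proof of this theorem at all: it simply attributes the result to \cite{GMSS99} and moves on. Your write-up unpacks exactly what that citation entails --- the GMSS odd-coordinate trick to get $\gamma$-SVP from $\gamma$-CVP, followed by the obvious recursion dictated by the definition of a $\gamma$-HKZ basis (approximate shortest vector, project, recurse, lift, size-reduce). The two bookkeeping points you flag, primitivity of $\vec{b}_1$ and the fact that size-reduction against $\vec{b}_1$ does not disturb the higher $\mu_{i,j}$, are handled correctly. There is nothing further to compare.
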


We will also need the following algorithm.
\begin{theorem}[{\cite[Theorem 3.3]{ADRS15}}]
\label{thm:squaresampler}
There is an algorithm that takes as input $\kappa \geq 2$ (the confidence parameter)
and $M$ elements from $\{1, \ldots, N \}$ and outputs a sequence of elements from the same set such that
\begin{enumerate}
\item \label{item:squareruntime} the running time is $M \cdot \poly(\log \kappa, \log N)$;
\item \label{item:squareinputoutput} each $i \in \{1,\ldots, N\}$ appears at least twice as often in the input as in the output; and
\item  \label{item:squaredistribution} if the input consists of 
$M \geq 10 \kappa^2/\max p_i$ 
independent samples from the distribution that assigns probability $p_i$ to element $i$, then
the output is within statistical distance $C_1 M N \log N \exp(-C_2\kappa)$ of $m$ independent samples with respective probabilities $p_i^2/\sum p_j^2$ where $m \geq  M\cdot \sum p_i^2/(32\kappa \max p_i)$ is a random variable.
\end{enumerate}
\end{theorem}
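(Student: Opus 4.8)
The algorithm I would use is the natural implementation of rejection sampling for the map $p \mapsto p^2$, with the unknown probabilities replaced by empirical frequencies. Read the $M$ inputs and bucket them by value, computing $k_i = |\{j : X_j = i\}|$ and $k_{\max} = \max_i k_i$; this takes $M\cdot\poly(\log N)$ time with hashing or radix sort, and $\poly(\log\kappa)$ bits of precision suffice for the coin flips below, so item~\ref{item:squareruntime} holds. Now keep each input $X_j$ independently with probability $k_{X_j}/(4 k_{\max})$ (a legal probability since $k_{X_j} \le k_{\max}$) and output the kept elements in a uniformly random order; one may picture this as subsampling, at the common rate $\Theta(1/k_{\max})$, the within-bucket pairs of inputs and outputting $i$ once per surviving pair, which makes the connection to collision probabilities transparent. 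Item~\ref{item:squareinputoutput} is then immediate: the number of outputs equal to $i$ has expectation at most $k_i\cdot\frac{k_i}{4k_{\max}}\le k_i/4$, and can be forced to be $\le k_i/2$ always by discarding any surplus copies of a bucket — an event the distributional analysis below shows has probability $e^{-\Omega(\kappa)}$, hence absorbable into the error. The constant $4$ is there precisely to absorb the heaviest bucket, where $k_i = k_{\max}$.

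For item~\ref{item:squaredistribution} I would first analyze the \emph{idealized} procedure in which $p_i$ and $\max_\ell p_\ell$ are known exactly and $X_j$ is kept with probability $p_{X_j}/(4\max_\ell p_\ell)$. Conditioning on $M$ i.i.d.\ inputs, a kept element then has value $i$ with probability proportional to $p_i\cdot p_i = p_i^2$, i.e.\ distributed as $q_i := p_i^2/\sum_\ell p_\ell^2$, different kept elements are independent, and the vector of kept-counts together with the number of discarded samples is $\mathrm{Multinomial}\big(M;\,\tfrac{p_1^2}{4\max p},\dots,\tfrac{p_N^2}{4\max p},\,1-\tfrac{\sum p_\ell^2}{4\max p}\big)$; hence, conditioned on its length $m$, the idealized output is \emph{exactly} $m$ i.i.d.\ draws from $q$, with $\mathbb{E}[m] = \tfrac{M}{4}\cdot\tfrac{\sum_\ell p_\ell^2}{\max_\ell p_\ell}$. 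The hypothesis $M \ge 10\kappa^2/\max p_\ell$ gives $\mathbb{E}[m]\ge \tfrac{M\max p}{4}\ge\tfrac{5}{2}\kappa^2$ (using $\sum p_\ell^2\ge(\max p_\ell)^2$), so since $\tfrac{M}{32\kappa}\cdot\tfrac{\sum p_\ell^2}{\max p_\ell} = \mathbb{E}[m]/(8\kappa)$, a lower-tail Chernoff bound yields $m\ge \mathbb{E}[m]/(8\kappa)$ except with probability $e^{-\Omega(\kappa^2)}$, which is the claimed lower bound on $m$.

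It remains to transfer this to the actual algorithm, which uses $k_i$ and $k_{\max}$ in place of $M p_i$ and $M\max p_\ell$. Here I would condition first on the multiset of inputs (equivalently, on the count vector $(k_i)$), under which the retention coins are independent; by Chernoff and a union bound over the $\le N$ values, outside an event of probability $C_1 N\exp(-C_2\kappa)$ the statistics $k_{\max}$, $\sum_i k_i^2/k_{\max}$, and the empirical frequencies $k_i/M$ are close enough to their idealized counterparts that (a) the output length again satisfies the stated bound and (b) a hybrid argument — swapping the empirical retention probabilities for the idealized ones one bucket at a time, and comparing the random-arrangement output against i.i.d.\ draws via a standard with-/without-replacement (Poissonization) estimate — bounds the conditional statistical distance to $m$ i.i.d.\ $q$-draws by $O(M\log N)\cdot e^{-\Omega(\kappa)}$. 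Averaging over the input multiset and absorbing the bad event gives the bound $C_1 M N\log N\exp(-C_2\kappa)$.

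The step I expect to be the real obstacle is exactly (b): because each sample's retention probability depends on the \emph{global} quantity $k_{\max}$ and on its own bucket count, the kept elements are not cleanly conditionally independent, and one must decouple this dependence and then push the resulting error all the way down to exponentially small in $\kappa$ — in particular carefully separating the heavy buckets (those with $k_i = \Omega(\kappa^2)$, which concentrate sharply and dominate the output) from the aggregate of the light buckets (individually ill-concentrated but collectively controllable because rare values are genuinely rare). This is where the quadratic slack in the hypothesis $M\ge 10\kappa^2/\max p_\ell$ is used — it is exactly what makes the heaviest bucket, and hence the subsampling rate $\Theta(1/k_{\max})$, concentrated enough for the coupling to go through — and it is the part that demands the careful bookkeeping behind the universal constants $C_1,C_2$.
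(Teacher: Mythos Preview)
This theorem is not proved in the present paper: it is quoted as \cite[Theorem~3.3]{ADRS15} in the preliminaries and used only as a black box (specifically, in the proof of Proposition~\ref{prop:combiner} in the appendix). There is therefore no in-paper proof to compare your proposal against.

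For what it is worth, your sketch is in the right spirit for this kind of ``square sampling'' result --- bucket by value, reject each sample with probability depending on its empirical bucket count relative to $k_{\max}$, analyze first the idealized version with true $p_i$'s, and then transfer via concentration of the counts --- and the obstacle you single out in~(b) is indeed where the work lies. Your identification of the role of the quadratic slack $M \ge 10\kappa^2/\max_i p_i$ (making $k_{\max}$ concentrate so that the subsampling rate is stable) is also on point. But since the paper under review defers the argument entirely to~\cite{ADRS15}, you should check your proposal against that reference rather than this one.
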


\section{Some inequalities concerning Gaussians on shifted lattices}
\label{sec:ineq}
We first prove an inequality (Corollary~\ref{cor:RSHolder}) concerning the Gaussian measure over shifted lattices. We will use this inequality to show that our sampler outputs sufficiently many samples; and to show that our recursive \problem{CVP} algorithm will \scarequotes{find a cluster with a closest point} with high probability. 
The inequality is similar in flavor to the main inequality in~\cite{RegevS15}, and it (or the more general form given in Lemma~\ref{lem:RSHolder}) may have additional applications. The proof uses the following identity from~\cite{RegevS15}.

\begin{lemma}[{\cite[Eq. (3)]{RegevS15}}]
\label{lem:RS15} 
For any lattice $\lat\subset\R^n$, any two vectors $\vec{x}, \vec{y} \in \R^n$, and $s > 0$, we have
\[
\rho_s(\lat - \vec{x}) \rho_s(\lat - \vec{y}) = \sum_{\vec{c} \in \lat / (2\lat)} \rho_{\sqrt{2}s}(\vec{c} - \vec{x} - \vec{y}) \rho_{\sqrt{2}s}(\vec{c} - \vec{x} + \vec{y}) %
\; .
\]
\end{lemma}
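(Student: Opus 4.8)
The plan is to prove this as a pure identity of absolutely convergent sums: expand the left-hand side as a double sum over $\lat\times\lat$, convert each product of two Gaussians into a product of a ``sum'' Gaussian and a ``difference'' Gaussian by the parallelogram law, and then reindex the double sum by the sum and difference of the two lattice points. The coset structure of $2\lat$ appears precisely because the sum-and-difference map is a bijection onto the pairs lying in a common coset of $2\lat$.

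Concretely, first I would write
\[
\rho_s(\lat - \vec{x})\,\rho_s(\lat - \vec{y}) = \sum_{\vec{u}, \vec{v} \in \lat} \rho_s(\vec{u} - \vec{x})\,\rho_s(\vec{v} - \vec{y})\;.
\]
Since $\rho_s(\vec{a})\rho_s(\vec{b}) = \exp(-\pi(\length{\vec{a}}^2 + \length{\vec{b}}^2)/s^2)$, the parallelogram identity $\length{\vec{a}}^2 + \length{\vec{b}}^2 = \tfrac12(\length{\vec{a}+\vec{b}}^2 + \length{\vec{a}-\vec{b}}^2)$ gives the pointwise identity $\rho_s(\vec{a})\,\rho_s(\vec{b}) = \rho_{\sqrt{2}s}(\vec{a}+\vec{b})\,\rho_{\sqrt{2}s}(\vec{a}-\vec{b})$ (the factor $\tfrac12$ in the exponent is exactly what turns $s$ into $\sqrt{2}s$). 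Taking $\vec{a} = \vec{u}-\vec{x}$ and $\vec{b} = \vec{v}-\vec{y}$, each summand becomes $\rho_{\sqrt{2}s}(\vec{u}+\vec{v}-\vec{x}-\vec{y})\,\rho_{\sqrt{2}s}(\vec{u}-\vec{v}-\vec{x}+\vec{y})$.

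Next I would substitute $\vec{p} = \vec{u}+\vec{v}$, $\vec{q} = \vec{u}-\vec{v}$. The key point is that $(\vec{u},\vec{v}) \mapsto (\vec{p},\vec{q})$ is a bijection from $\lat\times\lat$ onto $\{(\vec{p},\vec{q}) \in \lat\times\lat : \vec{p} \equiv \vec{q} \pmod{2\lat}\}$: it is injective, it lands in that set since $\vec{p}+\vec{q} = 2\vec{u} \in 2\lat$, and conversely if $\vec{p} \equiv \vec{q} \pmod{2\lat}$ then both $\vec{p}-\vec{q}$ and $\vec{p}+\vec{q}$ lie in $2\lat$, so $\vec{u} = (\vec{p}+\vec{q})/2$ and $\vec{v} = (\vec{p}-\vec{q})/2$ are genuine lattice vectors. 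After the substitution the sum runs over pairs $\vec{p},\vec{q}$ lying in the same coset $\vec{c}\in\lat/(2\lat)$, and since the two Gaussian factors depend on $\vec{p}$ and $\vec{q}$ separately, grouping by $\vec{c}$ yields
\[
\sum_{\vec{c}\in\lat/(2\lat)} \Big(\sum_{\vec{p}\in\vec{c}} \rho_{\sqrt{2}s}(\vec{p}-\vec{x}-\vec{y})\Big)\Big(\sum_{\vec{q}\in\vec{c}} \rho_{\sqrt{2}s}(\vec{q}-\vec{x}+\vec{y})\Big) = \sum_{\vec{c}\in\lat/(2\lat)} \rho_{\sqrt{2}s}(\vec{c}-\vec{x}-\vec{y})\,\rho_{\sqrt{2}s}(\vec{c}-\vec{x}+\vec{y})\;,
\]
which is the right-hand side. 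I do not expect a real obstacle here; the only points needing care are the parameter rescaling in the parallelogram step and verifying that the sum-difference map is a bijection onto the ``same coset mod $2\lat$'' pairs (and not, e.g., $2$-to-$1$) — this is the place where the parity/coset structure enters, mirroring the informal observation that two lattice points average into the lattice exactly when they lie in a common coset of $2\lat$. Absolute convergence of all the sums (so that the reindexing is legitimate) is immediate from the rapid decay of $\rho$.
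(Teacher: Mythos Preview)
Your proof is correct and is the standard argument for this identity. Note, however, that the paper does not actually prove Lemma~\ref{lem:RS15}; it is quoted from~\cite{RegevS15} without proof, so there is no ``paper's own proof'' to compare against. The parallelogram-law rewriting $\rho_s(\vec{a})\rho_s(\vec{b}) = \rho_{\sqrt{2}s}(\vec{a}+\vec{b})\rho_{\sqrt{2}s}(\vec{a}-\vec{b})$ that you use is exactly the mechanism the paper employs in the appendix when proving the special case Lemma~\ref{lem:sumofgaussians}, and your sum--difference bijection onto same-coset pairs is precisely the right way to make the coset decomposition appear.
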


Our inequality then follows easily.

\begin{lemma}
\label{lem:RSHolder}
For any lattice $\lat\subset\R^n$, any two vectors $\vec{x}, \vec{y} \in \R^n$, and $s > 0$, we have
\[
\rho_s(\lat - \vec{x}) \rho_s(\lat - \vec{y}) \leq \max_{\vec{c} \in \lat/(2\lat)} \rho_{\sqrt{2} s}(\vec{c} - \vec{x} - \vec{y}) \cdot \rho_{\sqrt{2}s}(\lat - \vec{x} + \vec{y}) \; .
\]
\end{lemma}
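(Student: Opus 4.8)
The plan is to derive Lemma~\ref{lem:RSHolder} as an essentially immediate consequence of the identity in Lemma~\ref{lem:RS15}, using nothing more than a ``max times sum'' estimate together with the fact that the cosets of $2\lat$ partition $\lat$. Concretely, I would start from
\[
\rho_s(\lat - \vec{x}) \rho_s(\lat - \vec{y}) = \sum_{\vec{c} \in \lat/(2\lat)} \rho_{\sqrt{2}s}(\vec{c} - \vec{x} - \vec{y}) \, \rho_{\sqrt{2}s}(\vec{c} - \vec{x} + \vec{y}) \; ,
\]
which is exactly Lemma~\ref{lem:RS15}. Here each summand is a product of the $\rho_{\sqrt2 s}$-mass of the shifted coset $\vec{c} - \vec{x} - \vec{y}$ and that of the shifted coset $\vec{c} - \vec{x} + \vec{y}$, both of which are nonnegative.

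Next I would bound the first factor of every term by its maximum over cosets: $\rho_{\sqrt{2}s}(\vec{c} - \vec{x} - \vec{y}) \leq \max_{\vec{c}' \in \lat/(2\lat)} \rho_{\sqrt{2}s}(\vec{c}' - \vec{x} - \vec{y})$ for every $\vec{c}$. Pulling this maximum out of the sum leaves $\max_{\vec{c}'} \rho_{\sqrt2 s}(\vec{c}' - \vec{x} - \vec{y}) \cdot \sum_{\vec{c} \in \lat/(2\lat)} \rho_{\sqrt{2}s}(\vec{c} - \vec{x} + \vec{y})$. The remaining step is to recognize that this last sum collapses: as $\vec{c}$ ranges over the cosets of $2\lat$ in $\lat$, the translated sets $\vec{c} - \vec{x} + \vec{y}$ partition $\lat - \vec{x} + \vec{y}$, and $\rho_{\sqrt2 s}$ is additive over disjoint sets, so $\sum_{\vec{c}} \rho_{\sqrt2 s}(\vec{c} - \vec{x} + \vec{y}) = \rho_{\sqrt2 s}(\lat - \vec{x} + \vec{y})$. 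Combining these observations gives exactly the claimed inequality.

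I do not expect any real obstacle here; the lemma is a short corollary of Lemma~\ref{lem:RS15}. The only conceptual point worth noting is the asymmetry in how the two factors on the right-hand side are treated — one is replaced by its coset maximum and the other is kept as a full lattice sum — and that the choice of which factor plays which role is free, so one also obtains the companion bound with $\vec{x}+\vec{y}$ and $\vec{x}-\vec{y}$ interchanged. (Specializing to $\vec{x} = \vec{y} = \vec{t}$ then yields the form of the inequality, Corollary~\ref{cor:RSHolder}, that is actually used later to control the loss factor of the sampler and to locate heavy cosets.)
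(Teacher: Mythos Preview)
The proposal is correct and follows essentially the same approach as the paper's proof: invoke Lemma~\ref{lem:RS15}, bound the first factor in each summand by its maximum over cosets, and collapse the remaining sum over cosets into $\rho_{\sqrt{2}s}(\lat - \vec{x} + \vec{y})$.
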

\begin{proof}
Using Lemma~\ref{lem:RS15}, we get the following.
\begin{align*}
\rho_s(\lat - \vec{x}) \rho_s(\lat - \vec{y}) &= \sum_{\vec{c} \in \lat / (2\lat)} \rho_{\sqrt{2}s}(\vec{c} - \vec{x} - \vec{y}) \rho_{\sqrt{2}s}(\vec{c} - \vec{x} + \vec{y})\\
&\leq \max_{\vec{c} \in \lat/(2\lat)} \rho_{\sqrt{2} s}(\vec{c} - \vec{x} - \vec{y}) \cdot \sum_{\vec{d} \in \lat/(2\lat)} \rho_{\sqrt{2}s}(\vec{d} - \vec{x} + \vec{y}) \\
&= \max_{\vec{c} \in \lat/(2\lat)} \rho_{\sqrt{2} s}(\vec{c} - \vec{x} - \vec{y}) \cdot \rho_{\sqrt{2}s}(\lat - \vec{x} + \vec{y}) \; . \qedhere
\end{align*}
\end{proof}

Setting $\vec{x} = \vec{y} = \vec{w}+\vec{t}$ for any $\vec{w} \in \lat$ and switching $2\lat$ with $\lat$ gives the following inequality.
\begin{corollary}
\label{cor:RSHolder}
For any lattice $\lat\subset\R^n$, $\vec{t} \in \R^n$, and $s > 0$, we have
\[
\max_{\vec{c} \in \lat/(2\lat)} \rho_{s}(\vec{c} - \vec{t})^2 \leq \max_{\vec{c}
\in \lat/(2\lat)} \rho_{s/\sqrt{2}}(\vec{c} - \vec{t}) \cdot \rho_{s/\sqrt{2}}(\lat)
\;.
\]
\end{corollary}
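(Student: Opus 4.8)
The plan is to deduce Corollary~\ref{cor:RSHolder} directly from Lemma~\ref{lem:RSHolder}, applied not to $\lat$ itself but to the sublattice $2\lat$, and then to rescale everything by a factor of $2$. The point is that under the dilation $\vec z \mapsto 2\vec z$, the chain $4\lat \subset 2\lat$ is carried onto the chain $2\lat \subset \lat$, and the parameter $\sqrt2 s$ appearing on the right-hand side of Lemma~\ref{lem:RSHolder} is carried onto $s/\sqrt2$, which is exactly the parameter we want in the corollary.

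First I would record the elementary homothety identity: for $a,b > 0$ and any set $S \subset \R^n$, $\rho_a(bS) = \rho_{a/b}(S)$, since $\sum_{\vec p \in S} \exp(-\pi\|b\vec p\|^2/a^2) = \sum_{\vec p \in S}\exp(-\pi\|\vec p\|^2/(a/b)^2)$. I will use it in two forms: $\rho_{\sqrt{2}s}(2\lat) = \rho_{s/\sqrt{2}}(\lat)$, and, for any coset $\vec d = 2\lat + \vec u \in \lat/(2\lat)$, $\rho_{\sqrt{2}s}(2\vec d - 2\vec t) = \rho_{s/\sqrt{2}}(\vec d - \vec t)$. The key structural observation accompanying this is that $\vec d \mapsto 2\vec d := \{2\vec z : \vec z \in \vec d\} = 4\lat + 2\vec u$ is a bijection from $\lat/(2\lat)$ onto $2\lat/(4\lat)$.

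Next, fix an arbitrary $\vec w \in \lat$ and apply Lemma~\ref{lem:RSHolder} with the lattice $2\lat$ in place of $\lat$ and with $\vec x = \vec y = \vec w + \vec t$; since then $\vec x - \vec y = \vec 0$ and $\vec x + \vec y = 2\vec w + 2\vec t$, this gives
\[
\rho_s(2\lat - \vec w - \vec t)^2 \;\le\; \max_{\vec c \in 2\lat/(4\lat)} \rho_{\sqrt{2}s}(\vec c - 2\vec w - 2\vec t)\cdot \rho_{\sqrt{2}s}(2\lat)\;.
\]
I would then simplify the right-hand side: the second factor is $\rho_{s/\sqrt{2}}(\lat)$ by the homothety identity, and for the first factor I use that $2\vec w \in 2\lat$, so the translation $\vec c \mapsto \vec c - 2\vec w$ merely permutes $2\lat/(4\lat)$, whence the maximum is unchanged if we replace $\vec c - 2\vec w - 2\vec t$ by $\vec c - 2\vec t$; writing $\vec c = 2\vec d$ with $\vec d$ ranging over $\lat/(2\lat)$ and applying the homothety identity again turns the first factor into $\max_{\vec d \in \lat/(2\lat)} \rho_{s/\sqrt{2}}(\vec d - \vec t)$. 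Thus for every $\vec w \in \lat$,
\[
\rho_s(2\lat - \vec w - \vec t)^2 \;\le\; \max_{\vec d \in \lat/(2\lat)} \rho_{s/\sqrt{2}}(\vec d - \vec t)\cdot \rho_{s/\sqrt{2}}(\lat)\;.
\]
Finally, as $\vec w$ ranges over $\lat$ the coset $2\lat - \vec w = 2\lat + (-\vec w)$ ranges over all of $\lat/(2\lat)$, so taking the maximum over $\vec w \in \lat$ on the left turns the left-hand side into $\max_{\vec c \in \lat/(2\lat)} \rho_s(\vec c - \vec t)^2$, which is precisely the claimed inequality.

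I do not expect any genuine obstacle here — the statement is essentially Lemma~\ref{lem:RSHolder} viewed at the right scale. The only thing requiring care is the bookkeeping of the two nested index-$2^n$ quotients $2\lat/(4\lat)$ versus $\lat/(2\lat)$ and of which translations act trivially on them; in particular, the hypothesis $\vec w \in \lat$ (rather than merely $\vec w \in \R^n$) is exactly what makes $\vec c \mapsto \vec c - 2\vec w$ fix the set of cosets of $4\lat$ in $2\lat$, and everything else is the scale-invariance of $\rho$ under $\vec z \mapsto 2\vec z$.
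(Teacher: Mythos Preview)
Your proposal is correct and is exactly the approach the paper takes: the paper's one-line proof reads ``Setting $\vec{x} = \vec{y} = \vec{w}+\vec{t}$ for any $\vec{w} \in \lat$ and switching $2\lat$ with $\lat$ gives the following inequality,'' and you have simply spelled out what that sentence means---apply Lemma~\ref{lem:RSHolder} with the lattice $2\lat$, set $\vec{x}=\vec{y}=\vec{w}+\vec{t}$, and use the homothety $\rho_a(2S)=\rho_{a/2}(S)$ together with the fact that translation by $2\vec{w}\in 2\lat$ permutes $2\lat/(4\lat)$ to rewrite everything in terms of $\lat/(2\lat)$ at parameter $s/\sqrt{2}$.
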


\section{Sampling from the discrete Gaussian}
\label{sec:main-dgs}

\subsection{Combining discrete Gaussian samples}

The following lemma and proposition are the shifted analogues of~\cite[Lemma 3.4]{ADRS15} and~\cite[Proposition 3.5]{ADRS15} respectively. Their proofs are nearly identical to the related proofs in \cite{ADRS15}, and we include them in the appendix for completeness. (We note that Lemma~\ref{lem:sumofgaussians} can be viewed as a special case of Lemma~\ref{lem:RS15}.)

\begin{lemma}
\label{lem:sumofgaussians}
Let $\lat \subset \R^n$, $s > 0$ and $\vec{t} \in \R^n$. Then for all $\vec{y} \in \lat - \vec{t}$,
\begin{align}\label{eq:sumofgaussians}
\Pr_{(\vec{X}_1, \vec{X}_2) \sim D_{\lat-\vec{t}, s}^2}[(\vec{X}_1 + \vec{X}_2)/2 = \vec{y} ~|~ \vec{X}_1 + \vec{X}_2 \in 2\lat - 2\vec{t}] 
= \Pr_{\vec{X} \sim D_{\lat-\vec{t}, s/\sqrt{2}}}[\vec{X} = \vec{y}]
\; .
\end{align}
\end{lemma}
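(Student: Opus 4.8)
The plan is to prove the identity by a direct computation whose engine is the polarization identity for the Gaussian: for any $\vec{y}, \vec{z} \in \R^n$ one has $\length{\vec{y} + \vec{z}}^2 + \length{\vec{y} - \vec{z}}^2 = 2\length{\vec{y}}^2 + 2\length{\vec{z}}^2$, and hence $\rho_s(\vec{y} + \vec{z})\,\rho_s(\vec{y} - \vec{z}) = \rho_{s/\sqrt{2}}(\vec{y})\,\rho_{s/\sqrt{2}}(\vec{z})$. Reparametrizing a pair $(\vec{x}_1, \vec{x}_2)$ by its half-sum $\vec{y} = (\vec{x}_1 + \vec{x}_2)/2$ and half-difference $\vec{z} = (\vec{x}_1 - \vec{x}_2)/2$ turns the joint Gaussian weight $\rho_s(\vec{x}_1)\rho_s(\vec{x}_2)$ into the product $\rho_{s/\sqrt{2}}(\vec{y})\rho_{s/\sqrt{2}}(\vec{z})$, which is exactly the separation of variables needed to read off a discrete Gaussian with parameter $s/\sqrt{2}$ in the $\vec{y}$ coordinate.

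First I would fix $\vec{y} \in \lat - \vec{t}$ and enumerate the pairs $(\vec{x}_1, \vec{x}_2) \in (\lat - \vec{t})^2$ with $(\vec{x}_1 + \vec{x}_2)/2 = \vec{y}$. Writing $\vec{x}_1 = \vec{y} + \vec{z}$ and $\vec{x}_2 = \vec{y} - \vec{z}$, membership of both $\vec{x}_1$ and $\vec{x}_2$ in $\lat - \vec{t}$ is equivalent (using $\vec{y} \in \lat - \vec{t}$) to $\vec{z} \in \lat$; so these pairs are in bijection with $\vec{z} \in \lat$. Moreover, whenever $(\vec{x}_1 + \vec{x}_2)/2 = \vec{y} \in \lat - \vec{t}$, the conditioning event $\vec{x}_1 + \vec{x}_2 \in 2\lat - 2\vec{t}$ holds automatically. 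Hence the numerator (before dividing by the common normalization $\rho_s(\lat - \vec{t})^2$) is $\sum_{\vec{z} \in \lat} \rho_s(\vec{y} + \vec{z})\rho_s(\vec{y} - \vec{z}) = \rho_{s/\sqrt{2}}(\vec{y}) \cdot \rho_{s/\sqrt{2}}(\lat)$, applying the polarization identity termwise. Summing this over all $\vec{y} \in \lat - \vec{t}$ gives the probability of the conditioning event (again up to $\rho_s(\lat - \vec{t})^2$) as $\rho_{s/\sqrt{2}}(\lat) \cdot \rho_{s/\sqrt{2}}(\lat - \vec{t})$. Dividing, the factors $\rho_{s/\sqrt{2}}(\lat)$ and $\rho_s(\lat - \vec{t})^2$ cancel, leaving $\rho_{s/\sqrt{2}}(\vec{y})/\rho_{s/\sqrt{2}}(\lat - \vec{t})$, which is precisely $\Pr_{\vec{X} \sim D_{\lat - \vec{t}, s/\sqrt{2}}}[\vec{X} = \vec{y}]$.

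There is essentially no hard step here; the whole argument is a reorganization of sums. The one place to be careful is the bookkeeping around the conditioning: one must check both that the event $\{\vec{X}_1 + \vec{X}_2 \in 2\lat - 2\vec{t}\}$ is implied by $\{(\vec{X}_1 + \vec{X}_2)/2 = \vec{y}\}$ for our fixed $\vec{y}$, and that the difference variable $\vec{z}$ ranges over all of $\lat$ (not a proper coset or sublattice) once $\vec{y} \in \lat - \vec{t}$ is fixed --- this is what makes the $\vec{z}$-sum collapse to the clean, $\vec{y}$-independent factor $\rho_{s/\sqrt{2}}(\lat)$. Alternatively, as the excerpt remarks, one can deduce the statement from Lemma~\ref{lem:RS15} by a suitable substitution and rescaling; I would nonetheless keep the direct computation above, since it is self-contained and makes the separation-of-variables mechanism transparent.
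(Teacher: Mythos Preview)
Your proposal is correct and follows essentially the same approach as the paper: both use the parallelogram/polarization identity $\rho_s(\vec{y}+\vec{z})\rho_s(\vec{y}-\vec{z}) = \rho_{s/\sqrt{2}}(\vec{y})\rho_{s/\sqrt{2}}(\vec{z})$ to separate the half-sum from the half-difference, with the paper parametrizing by $\vec{x}_1$ (so $\vec{x}_2 = 2\vec{y}-\vec{x}_1$) rather than by $\vec{z}$. The only cosmetic difference is that you explicitly sum over $\vec{y}$ to compute the conditioning probability, whereas the paper simply observes that both sides of~\eqref{eq:sumofgaussians} are proportional and, being probability distributions over $\lat-\vec{t}$, must therefore be equal.
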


\begin{proposition}
\label{prop:combiner}
There is an algorithm that takes as input a lattice $\lat \subset \R^n$, $\vec{t} \in \R^n$, $\kappa \geq 2$ (the confidence parameter), 
and a sequence of vectors from $\lat - \vec{t}$,
and outputs a sequence of vectors from $\lat - \vec{t}$ such that, if the input consists of 
\[
M \geq 10 \kappa^2\cdot \frac{ \rho_s(\lat - \vec{t})}{\max_{\vec{c} \in \lat/(2\lat)}\rho_s(\vec{c} - \vec{t})}
\]
 independent samples from $D_{\lat - \vec{t}, s}$ for some $s > 0$, then the output is within statistical distance $M \exp(C_1 n-C_2\kappa)$ of $m$ independent samples from $D_{\lat-\vec{t}, s/\sqrt{2}}$ where $m$ is a random variable with
\[
m \geq  M \cdot \frac{1}{32\kappa}\cdot\frac{\rho_{s/\sqrt{2}}(\lat) \cdot \rho_{s/\sqrt{2}}(\lat - \vec{t})}{ \rho_s(\lat - \vec{t})\max_{\vec{c} \in \lat/(2\lat)}\rho_s(\vec{c} - \vec{t})}
\; .
\] 
The running time of the algorithm is at most $M \cdot \poly(n, \log \kappa)$.
\end{proposition}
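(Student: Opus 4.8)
The plan is to mirror the structure of the ADRS combiner, using Lemma~\ref{lem:sumofgaussians} to reduce the problem of combining pairs of Gaussian samples to the generic ``square sampler'' of Theorem~\ref{thm:squaresampler}, and then to translate the abstract guarantees of that theorem into statements about Gaussian masses on cosets of $2\lat$. Concretely, given the input sequence of vectors from $\lat - \vec{t}$, the algorithm first bins each vector according to its coset in $\lat/(2\lat)$ — there are $N = 2^n$ such cosets, and the coset of a vector in $\lat-\vec{t}$ is well-defined (two vectors $\vec{x}_1,\vec{x}_2 \in \lat - \vec{t}$ lie in the same coset of $2\lat$ iff $(\vec{x}_1+\vec{x}_2)/2 \in \lat - \vec{t}$, which is exactly the event we condition on in Lemma~\ref{lem:sumofgaussians}). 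Feeding this labeling into the algorithm of Theorem~\ref{thm:squaresampler} with confidence parameter $\kappa$, we obtain a sub-sequence of labels; for each output label we pair up two (previously unused) input vectors carrying that label and output their average. By property~\ref{item:squareinputoutput} of Theorem~\ref{thm:squaresampler}, each label appears at least twice as often in the square-sampler input as in its output, so there are always enough unused vectors to form these pairs. The running time is that of the square sampler, $M \cdot \poly(\log\kappa, \log N) = M\cdot\poly(n,\log\kappa)$, plus the cost of the coset arithmetic and averaging, which is $M\cdot\poly(n)$.

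For correctness, the key identification is between the probabilities $p_{\vec{c}}$ that Theorem~\ref{thm:squaresampler} works with and the Gaussian coset masses. If the input is $M$ i.i.d.\ samples from $D_{\lat-\vec{t},s}$, then the label of a sample is the coset $\vec{c}\in\lat/(2\lat)$ with probability $p_{\vec{c}} = \rho_s(\vec{c}-\vec{t})/\rho_s(\lat-\vec{t})$, so $\max p_{\vec{c}} = \max_{\vec{c}}\rho_s(\vec{c}-\vec{t})/\rho_s(\lat-\vec{t})$ and hence the hypothesis $M \ge 10\kappa^2/\max p_{\vec{c}}$ is exactly the stated lower bound on $M$. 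Property~\ref{item:squaredistribution} then gives that the output labels are (up to statistical distance $C_1 M N\log N\exp(-C_2\kappa) \le M\exp(C_1 n - C_2\kappa)$, absorbing the polynomial factors into the constant in the exponent) distributed as $m$ i.i.d.\ draws with probabilities $p_{\vec{c}}^2/\sum_{\vec{d}} p_{\vec{d}}^2$, where $m \ge M\cdot\sum p_{\vec{c}}^2/(32\kappa\max p_{\vec{c}})$. Now I invoke Lemma~\ref{lem:sumofgaussians}: conditioned on a pair of independent $D_{\lat-\vec{t},s}$ samples landing in a common coset $\vec{c}$, the average is distributed as $D_{(\vec{c}-\vec{t}),s/\sqrt2}$ restricted to that coset; and the probability that two independent samples both land in $\vec{c}$ is $p_{\vec{c}}^2$, while the probability of landing in a common coset at all is $\sum_{\vec{d}}p_{\vec{d}}^2$. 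Hence conditioned on a common coset, the coset is $\vec{c}$ with probability $p_{\vec{c}}^2/\sum p_{\vec{d}}^2$, and the average is a fresh sample from $D_{\lat-\vec{t},s/\sqrt2}$ within that coset. Combining, each output vector is an independent sample from $D_{\lat-\vec{t},s/\sqrt2}$ (globally, not just within a coset), which is what we want. Independence across the $m$ outputs follows from the independence in Theorem~\ref{thm:squaresampler} together with the fact that distinct outputs use disjoint pairs of input vectors.

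Finally I simplify the bound on $m$. Substituting $p_{\vec{c}} = \rho_s(\vec{c}-\vec{t})/\rho_s(\lat-\vec{t})$, one computes $\sum_{\vec{c}}p_{\vec{c}}^2 = \big(\sum_{\vec{c}\in\lat/(2\lat)}\rho_s(\vec{c}-\vec{t})^2\big)/\rho_s(\lat-\vec{t})^2$, and by Lemma~\ref{lem:RS15} (applied with $\vec{x}=\vec{y}$, which is the special case noted in the excerpt) the numerator equals $\rho_{s/\sqrt2}(\lat)\,\rho_{s/\sqrt2}(\lat-\vec{t})$. Therefore
\[
m \;\ge\; \frac{M}{32\kappa}\cdot\frac{\sum_{\vec{c}}p_{\vec{c}}^2}{\max_{\vec{c}}p_{\vec{c}}}
\;=\; \frac{M}{32\kappa}\cdot\frac{\rho_{s/\sqrt2}(\lat)\,\rho_{s/\sqrt2}(\lat-\vec{t})}{\rho_s(\lat-\vec{t})\,\max_{\vec{c}\in\lat/(2\lat)}\rho_s(\vec{c}-\vec{t})}\;,
\]
which is precisely the claimed lower bound. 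The one place to be careful — and the step I expect to require the most attention — is the ``fresh sample'' claim for the averages: Theorem~\ref{thm:squaresampler} only promises that the output \emph{labels} are close to i.i.d.\ with the squared distribution, so I must argue that re-randomizing over which pair of same-label vectors we average, together with Lemma~\ref{lem:sumofgaussians}, upgrades this to genuine independent samples of $D_{\lat-\vec{t},s/\sqrt2}$, and that the small statistical-distance error from the square sampler propagates additively (it does, since post-processing cannot increase statistical distance). The remaining identities are the routine algebra indicated above.
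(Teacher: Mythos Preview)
Your proposal is correct and follows essentially the same approach as the paper's proof: bin input vectors by their coset in $\lat/(2\lat)$, apply the square sampler of Theorem~\ref{thm:squaresampler}, pair and average same-coset vectors, then use Lemma~\ref{lem:sumofgaussians} for the output distribution and Lemma~\ref{lem:RS15} for the identity $\sum_{\vec{c}}\rho_s(\vec{c}-\vec{t})^2 = \rho_{s/\sqrt2}(\lat)\,\rho_{s/\sqrt2}(\lat-\vec{t})$. One tiny slip: the correct substitution in Lemma~\ref{lem:RS15} is $\vec{x}=\vec{t}$, $\vec{y}=\vec{0}$ (with $s$ replaced by $s/\sqrt2$), not $\vec{x}=\vec{y}$, but this does not affect the argument.
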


We will show in Theorem~\ref{thm:pipeline} that by calling the algorithm from Proposition~\ref{prop:combiner} repeatedly, we obtain a general discrete Gaussian combiner.

\begin{theorem}
\label{thm:pipeline}
There is an algorithm that takes as input a lattice $\lat \subset \R^n$, $\ell \in \N$ (the step parameter), 
$\kappa \geq 2$ (the confidence parameter), $\vec{t} \in \R^n$, and $M = (32\kappa)^{\ell+1} \cdot  2^n$ vectors in $\lat$ such that, if the input vectors are distributed as $D_{\lat-\vec{t}, s}$ for some $s > 0$, then the output is a list of vectors whose distribution is within statistical distance $\ell M \exp(C_1 n-C_2\kappa )$ of at least 
\[
m = \frac{ \rho_{2^{-\ell/2} s}(\lat - \vec{t})}{\max_{\vec{c} \in \lat/(2\lat)}\rho_{2^{-\ell/2} s}(\vec{c} - \vec{t})}
\] independent samples from $D_{\lat-\vec{t}, 2^{-\ell/2}s}$.
The algorithm runs in time $\ell M \cdot \poly(n,\log \kappa)$.
\end{theorem}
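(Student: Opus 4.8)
The plan is to iterate Proposition~\ref{prop:combiner} exactly $\ell$ times, feeding the output of one invocation into the next, and to control the accumulated loss factor by a telescoping argument using Corollary~\ref{cor:RSHolder}. At step $i$ (for $i = 0, 1, \ldots, \ell-1$) we will have a list of samples distributed (up to small statistical error) as $D_{\lat - \vec{t}, 2^{-i/2} s}$, and we apply the combiner to obtain samples from $D_{\lat - \vec{t}, 2^{-(i+1)/2} s}$. The running time and statistical-error bounds then follow by a routine union bound over the $\ell$ stages, giving total time $\ell M \cdot \poly(n, \log\kappa)$ and total statistical distance at most $\ell M \exp(C_1 n - C_2 \kappa)$ (absorbing constants).

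\textbf{Tracking the number of samples.} Write $s_i = 2^{-i/2} s$ and let $M_i$ denote the (random) number of samples after $i$ stages, so $M_0 = M$. By Proposition~\ref{prop:combiner}, conditioned on having $M_i$ good samples at parameter $s_i$,
\begin{equation*}
M_{i+1} \;\geq\; \frac{M_i}{32\kappa} \cdot \frac{\rho_{s_{i+1}}(\lat) \, \rho_{s_{i+1}}(\lat - \vec{t})}{\rho_{s_i}(\lat - \vec{t}) \max_{\vec{c} \in \lat/(2\lat)} \rho_{s_i}(\vec{c} - \vec{t})}
\; .
\end{equation*}
Multiplying these inequalities for $i = 0, \ldots, \ell-1$, the factors $\rho_{s_i}(\lat - \vec{t})$ telescope, leaving
\begin{equation*}
M_\ell \;\geq\; \frac{M}{(32\kappa)^\ell} \cdot \frac{\rho_{s_\ell}(\lat - \vec{t})}{\rho_{s_0}(\lat - \vec{t})} \cdot \frac{\prod_{i=1}^{\ell} \rho_{s_i}(\lat)}{\prod_{i=0}^{\ell-1} \max_{\vec{c}} \rho_{s_i}(\vec{c} - \vec{t})}
\; .
\end{equation*}
To handle the last ratio, apply Corollary~\ref{cor:RSHolder} with parameter $s_i$: it gives $\max_{\vec{c}} \rho_{s_i}(\vec{c} - \vec{t})^2 \leq \max_{\vec{c}} \rho_{s_{i+1}}(\vec{c} - \vec{t}) \cdot \rho_{s_{i+1}}(\lat)$. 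Rearranging, $\rho_{s_{i+1}}(\lat) \geq \max_{\vec{c}} \rho_{s_i}(\vec{c}-\vec{t})^2 / \max_{\vec{c}} \rho_{s_{i+1}}(\vec{c}-\vec{t})$. Substituting this lower bound for each $\rho_{s_i}(\lat)$ in the numerator produces another telescoping product: all the intermediate $\max_{\vec{c}} \rho_{s_i}(\vec{c} - \vec{t})$ terms cancel against those in the denominator (with the squares accounting for the overlap), and one is left, after the dust settles, with
\begin{equation*}
M_\ell \;\geq\; \frac{M}{(32\kappa)^\ell} \cdot \frac{1}{\rho_{s_0}(\lat - \vec{t})} \cdot \frac{\max_{\vec{c}} \rho_{s_0}(\vec{c} - \vec{t})}{\max_{\vec{c}} \rho_{s_\ell}(\vec{c} - \vec{t})} \cdot \rho_{s_\ell}(\lat - \vec{t})
\; ,
\end{equation*}
where I have used $\max_{\vec{c}} \rho_{s_0}(\vec{c}-\vec{t}) \leq \rho_{s_0}(\lat - \vec{t})$ once more to simplify. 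Since $\max_{\vec{c}} \rho_{s_0}(\vec{c} - \vec{t}) \leq \rho_{s_0}(\lat - \vec{t})$, the factor $\max_{\vec{c}} \rho_{s_0}(\vec{c}-\vec{t}) / \rho_{s_0}(\lat - \vec{t}) \leq 1$; crucially, in the combiner's input requirement $M \geq 10\kappa^2 \cdot \rho_{s_0}(\lat - \vec{t}) / \max_{\vec{c}} \rho_{s_0}(\vec{c} - \vec{t})$, this same ratio appears inverted, so when we take $M = (32\kappa)^{\ell+1} \cdot 2^n \cdot$ (this ratio is at least $2^{-n}$, hence $M \geq (32\kappa)^{\ell+1}$ suffices for the first call; one checks the hypothesis persists at every stage), the net effect is
\begin{equation*}
M_\ell \;\geq\; \frac{\rho_{s_\ell}(\lat - \vec{t})}{\max_{\vec{c} \in \lat/(2\lat)} \rho_{s_\ell}(\vec{c} - \vec{t})} \; = \; m
\; .
\end{equation*}

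\textbf{The main obstacle.} The genuinely delicate step is the bookkeeping for the combiner's \emph{input requirement} at each stage: Proposition~\ref{prop:combiner} only yields the claimed output guarantee when the input already has $M_i \geq 10\kappa^2 \cdot \rho_{s_i}(\lat - \vec{t}) / \max_{\vec{c}} \rho_{s_i}(\vec{c} - \vec{t})$ samples. One must verify inductively that the lower bound on $M_i$ derived above always dominates this threshold --- this is where the factor $2^n$ in $M$ and the inequality $\rho_{s_i}(\lat - \vec{t}) / \max_{\vec{c}} \rho_{s_i}(\vec{c} - \vec{t}) \leq 2^n$ (there are only $2^n$ cosets) are used. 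The rest --- the telescoping, the union bound over $\ell$ stages for running time and statistical distance, and the standard argument that a short chain of statistically-close reductions stays statistically close --- is routine, though one should be careful that the error terms $M_i \exp(C_1 n - C_2\kappa)$ at each stage are each bounded by $M \exp(C_1 n - C_2 \kappa)$ since $M_i \leq M$ (the combiner never increases the count beyond the input, by item~\ref{item:squareinputoutput} of Theorem~\ref{thm:squaresampler}), so the total is at most $\ell M \exp(C_1 n - C_2 \kappa)$ as claimed.
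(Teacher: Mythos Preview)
Your proposal is essentially correct and follows the same approach as the paper: iterate Proposition~\ref{prop:combiner} $\ell$ times, control the per-step loss via Corollary~\ref{cor:RSHolder}, and use $\rho_{s_0}(\lat-\vec t)\le 2^n\max_{\vec c}\rho_{s_0}(\vec c-\vec t)$ to absorb the $2^n$ in $M$.

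The paper packages the argument slightly more cleanly. Rather than multiplying all $\ell$ inequalities and telescoping at the end, it proves by induction the single statement
\[
M_i \;\ge\; (32\kappa)^{\ell-i+1}\cdot\frac{\rho_{s_i}(\lat-\vec t)}{\max_{\vec c}\rho_{s_i}(\vec c-\vec t)}\,,
\]
which simultaneously (i) yields $M_\ell\ge m$ at $i=\ell$ and (ii) verifies the combiner's input hypothesis $M_i\ge 10\kappa^2\,\rho_{s_i}(\lat-\vec t)/\max_{\vec c}\rho_{s_i}(\vec c-\vec t)$ at every stage (since $(32\kappa)^{\ell-i+1}\ge(32\kappa)^2\ge 10\kappa^2$ for $i<\ell$). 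Your telescoping, applied to the first $i$ steps, gives exactly this same bound, so the two arguments are equivalent; but as written you only assert ``one checks the hypothesis persists at every stage'' without carrying it out, and the paragraph around the ratio $\max_{\vec c}\rho_{s_0}(\vec c-\vec t)/\rho_{s_0}(\lat-\vec t)$ is muddled (you note it is $\le 1$, but what you actually need and use is that it is $\ge 2^{-n}$). Restating your telescoping as the above induction would close that loose end cleanly.
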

\begin{proof}
Let $\mathcal{X}_0 = (\vec{X}_1,\ldots, \vec{X}_M)$ be the sequence of input vectors. For $i = 0,\ldots, \ell-1$, the algorithm calls the procedure from Proposition~\ref{prop:combiner} with input $\lat$, $\kappa$, and $\mathcal{X}_i$, receiving an output sequence $\mathcal{X}_{i+1}$ of length $M_{i+1}$. Finally, the algorithm outputs the sequence $\mathcal{X}_\ell$.

The running time is clear. Fix $\lat$, $s$, $\vec{t}$ and $\ell$. Define $\theta(i) := \rho_{2^{-i/2} s}(\lat)$, $\phi(i) := \max_{\vec{c} \in \lat/(2\lat)} \rho_{2^{-i/2}s}(\vec{c} -\vec{t})$, and $\psi(i) := \rho_{2^{-i/2}s}(\lat-\vec{t})$.

 We wish to prove by induction that $\mathcal{X}_i$ is within statistical distance $i M\exp(C_1 n-C_2\kappa )$ of $D_{\lat - \vec{t},2^{-i/2}s}^{M_i}$ with 
\begin{align}
\label{eq:inductionhypo}
M_i   \geq (32\kappa)^{\ell - i + 1} \cdot \frac{\psi(i)}{\phi(i)} 
\; ,
\end{align}
for all $i \ge 1$. This implies that $M_\ell \geq m$ as needed.

Let
\[
L(i) := \frac{\theta(i+1) \psi(i+1)}{\psi(i) \phi(i)} \;,
\]
be the \scarequotes{loss factor} resulting from the $(i+1)$st run of the combiner, ignoring the factor of $32\kappa$.
By Corollary~\ref{cor:RSHolder}, we have
\begin{equation}
\label{eq:loss}
L(i) \ge \frac{\psi(i+1)}{\phi(i+1)} \cdot \frac{\phi(i)}{\psi(i)} \;. %
\end{equation}
By Proposition~\ref{prop:combiner}, up to statistical distance $M\exp(C_1 n-C_2\kappa )$, we have that $\mathcal{X}_1$ has the right distribution with
\begin{align*}
M_{1}   &\geq \frac{1}{32\kappa }\cdot M_0 \cdot L(0) \\
 &\ge  (32 \kappa)^{\ell} \cdot 2^n \cdot \frac{\psi(1)}{\phi(1)} \cdot \frac{\phi(0)}{\psi(0)}  
\; ,
\end{align*}
where we used Eq.~(\ref{eq:loss}) with $i=0$. 
By noting that $\psi(0) \le 2^n \phi(0)$, we see that~\eqref{eq:inductionhypo} holds when $i=1$. 

Suppose that $\mathcal{X}_i$ has the correct distribution and~\eqref{eq:inductionhypo} holds for some $i$ with $0 \leq i < \ell$. In particular, we have that $M_i$ is at least 
$10 \kappa^2 \psi(i)/\phi(i)$. This is precisely the condition
necessary to apply Proposition~\ref{prop:combiner}. So, 
we can apply the proposition and the induction hypothesis and obtain that (up to statistical distance at most $(i+1) M \exp(C_1 n - C_2 \kappa)$),
$\mathcal{X}_{i+1}$ has the correct distribution with
\[
M_{i+1}  \geq \frac{1}{32\kappa }\cdot M_i \cdot L(i)  
\ge (32\kappa)^{\ell - i} \cdot  \frac{\psi(i)}{\phi(i)} \cdot \frac{\phi(i)}{\psi(i)} \cdot \frac{\psi(i+1)}{\phi(i+1)}
= (32\kappa)^{\ell - i} \cdot  \frac{\psi(i+1)}{\phi(i+1)}
\; ,
\]
where in the second inequality we used the induction hypothesis and Eq.~(\ref{eq:loss}).
\end{proof}

\subsection{Initializing the sampler}

In order to use our combiner, we need to start with samples from the discrete Gaussian distribution with some large parameter $\hat{s}$.  For very large parameters, the algorithm introduced by Klein and further analyzed by Gentry, Peikert, and Vaikuntanathan suffices~\cite{Klein00, GPV08}. 
For convenience, we use the following strengthening of their result due to Brakerski et al., which provides exact samples and gives better bounds on the parameter $s$.
\begin{theorem}[{\cite[Lemma 2.3]{BLPRS13}}]
\label{thm:GPV}
There is a probabilistic polynomial-time algorithm that takes as input a basis $\basis $ for a lattice $\lat \subset \R^n$ with $n \geq 2$, a shift $\vec{t} \in \R^n$, and $\hat{s} > C\sqrt{ \log n} \cdot \length{\gs{\basis}}$ and outputs a vector that is distributed exactly as $D_{\lat - \vec{t}, \hat{s}}$, where $\length{\gs{\basis}} := \max \length{\gs{\vec{b}}_i}$.
\end{theorem}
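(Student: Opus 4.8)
This is a known result, included for later use; the proof I would give is the randomized nearest-plane (``round-off'') sampler of Klein, further analyzed in~\cite{Klein00, GPV08}, together with the rejection step of~\cite{BLPRS13} that eliminates the residual error and makes the output \emph{exactly} $D_{\lat - \vec{t}, \hat{s}}$. Write $\basis = (\vec{b}_1, \dots, \vec{b}_n)$, let $\gs{\vec{b}}_1, \dots, \gs{\vec{b}}_n$ be its Gram--Schmidt vectors, and set $s_i := \hat{s}/\length{\gs{\vec{b}}_i}$. The sampler processes coordinates from $i = n$ down to $i = 1$, maintaining a residual $\vec{c}_n := \vec{t}$; at step $i$ it computes $d_i := \inner{\vec{c}_i, \gs{\vec{b}}_i}/\length{\gs{\vec{b}}_i}^2$, draws an integer $z_i$ with $z_i - d_i \sim D_{\Z - d_i, s_i}$, and updates $\vec{c}_{i-1} := \vec{c}_i - z_i \vec{b}_i$; the candidate output is $\vec{x} := \sum_i z_i \vec{b}_i - \vec{t} \in \lat - \vec{t}$. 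The hypothesis $\hat{s} > C\sqrt{\log n}\cdot \length{\gs{\basis}}$ ensures $s_i \geq C\sqrt{\log n} \geq \eta_\eps(\Z)$ for every $i$, where $\eta_\eps(\Z) = \Theta(\sqrt{\log(1/\eps)})$ is the smoothing parameter of $\Z$ and $\eps = 1/\poly(n)$ is chosen small enough; since a one-dimensional discrete Gaussian with parameter at least $\eta_\eps(\Z)$ can be sampled in expected polynomial time by elementary rejection against a continuous Gaussian, the whole loop runs in expected polynomial time.

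The correctness of the \emph{approximate} version rests on a telescoping identity. Because the $\gs{\vec{b}}_i$ are orthogonal, one checks that $\vec{x} = \sum_i (z_i - d_i)\gs{\vec{b}}_i$, so $\rho_{\hat{s}}(\vec{x}) = \prod_i \rho_{s_i}(z_i - d_i)$; since $(z_1, \dots, z_n) \mapsto \vec{x}$ is a bijection $\Z^n \to \lat - \vec{t}$ and $d_i$ is determined by the earlier draws $z_{i+1}, \dots, z_n$, the loop outputs a fixed $\vec{x}$ with probability
\[
\Pr[\text{output} = \vec{x}] \;=\; \prod_{i=1}^n \frac{\rho_{s_i}(z_i - d_i)}{\rho_{s_i}(\Z - d_i)} \;=\; \frac{\rho_{\hat{s}}(\vec{x})}{\prod_{i=1}^n \rho_{s_i}(\Z - d_i)} \; .
\]
The only obstruction to this being exactly proportional to $\rho_{\hat{s}}(\vec{x})$ is the $\vec{x}$-dependence of the denominator. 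But $\rho_{s_i}(\Z - d_i) \leq \rho_{s_i}(\Z)$ always (Lemma~\ref{lem:betterrhoLtbound} with $\lat = \Z$), while the standard smoothing bound~\cite{MR07} gives $\rho_{s_i}(\Z - d_i) \geq (1-\eps)/(1+\eps)\cdot\rho_{s_i}(\Z)$ since $s_i \geq \eta_\eps(\Z)$; hence all denominators agree up to a factor $((1-\eps)/(1+\eps))^n = 1 - O(n\eps)$, which already yields a sampler within statistical distance $O(n\eps)$ of $D_{\lat - \vec{t}, \hat{s}}$.

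To make the output \emph{exact}, after drawing $(z_i, d_i)_i$ I accept the candidate $\vec{x}$ with probability $A := \prod_i \rho_{s_i}(\Z - d_i)\big/\prod_i \rho_{s_i}(\Z) \in [0,1]$ and otherwise restart from scratch. Then $\Pr[\text{output} = \vec{x} \text{ and accepted}] = \rho_{\hat{s}}(\vec{x})/\prod_i \rho_{s_i}(\Z)$ no longer depends on $\vec{x}$, so conditioned on acceptance the output is distributed exactly as $D_{\lat - \vec{t}, \hat{s}}$. Summing over $\vec{x}$, $\Pr[\text{accept}] = \rho_{\hat{s}}(\lat - \vec{t})/\prod_i \rho_{s_i}(\Z)$, which is at least $((1-\eps)/(1+\eps))^n$ because $\rho_{\hat{s}}(\vec{x}) \geq ((1-\eps)/(1+\eps))^n \cdot \Pr[\text{output} = \vec{x}]\cdot\prod_i \rho_{s_i}(\Z)$ for each $\vec{x}$; so with $\eps = 1/n^2$ the expected number of restarts is $O(1)$ and the total running time stays polynomial.

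I expect the main obstacle in turning this into a fully rigorous proof to be the word \emph{exactly}: one must verify that the one-dimensional discrete Gaussian itself can be sampled exactly (again by a suitably normalized rejection scheme) and that the ratio $A$ can be computed and compared against a uniform random real with enough precision that the \emph{sampled} distribution is literally $D_{\lat - \vec{t}, \hat{s}}$ rather than merely statistically close to it. This exact-arithmetic bookkeeping — not the probabilistic identities above, which are the classical Klein/GPV calculation — is precisely the strengthening contributed by~\cite{BLPRS13}, and it is where essentially all the care goes.
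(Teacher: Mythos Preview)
The paper does not prove this statement at all; it is quoted verbatim as \cite[Lemma~2.3]{BLPRS13} and used as a black box, so there is no ``paper's own proof'' to compare against. Your opening sentence already recognizes this, and the sketch you give --- Klein/GPV randomized nearest-plane sampling plus the \cite{BLPRS13} rejection step to remove the residual bias --- is the standard (and, as far as I know, the only) route to exactly this statement. Your telescoping identity $\vec{x}=\sum_i (z_i-d_i)\gs{\vec{b}}_i$ and the resulting factorization $\rho_{\hat s}(\vec{x})=\prod_i \rho_{s_i}(z_i-d_i)$ are correct, and your identification of the exact-sampling/precision bookkeeping as the delicate part is on target.
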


When instantiated with a $\gamma$-HKZ basis, Theorem~\ref{thm:GPV} allows us to sample with parameter $\hat{s} = \gamma \cdot \poly(n) \cdot \lambda_n(\lat)$. After running our combiner $o(n/\log n)$ times, this will allow us to sample with any parameter $s = \gamma \cdot  \lambda_n(\lat)/2^{o(n/\log n)}$. The following proposition and corollary show that we can sample with any parameter $s = \dist(\vec{t}, \lat)/2^{o(n/\log n)}$ by working over a shifted sublattice that will contain all high-mass vectors of the original lattice.
 
 \begin{proposition}
\label{prop:shiftedsublattice}
 There is an algorithm that takes as input a lattice $\lat \subset \R^n $, shift $\vec{t} \in \R^n$, $r > 0$,  and parameter $u \geq 2$, such that if \[
 r \geq u^{n/u} (1+\sqrt{n} u^{n/u}) \cdot \dist(\vec{t}, \lat)
 \; ,
 \] 
 then the output of the algorithm is $\vec{y} \in \lat$ and a basis $\basis'$ of a (possibly trivial) sublattice $\lat' \subseteq \lat $ such that all vectors from $\lat - \vec{t}$ of length at most $r/u^{n/u} - \dist(\vec{t}, \lat)$ are also contained in $\lat' - \vec{y} - \vec{t}$, and $\length{\gs{\basis}'} \leq r$. The algorithm runs in time $\poly(n) \cdot 2^{O(u)}$.
 \end{proposition}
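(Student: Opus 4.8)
The plan is to reduce to the basis-reduction algorithm of Theorem~\ref{thm:BKZ}. I would run that algorithm on $\lat$, $\vec{t}$, and $u$, which in time $2^{O(u)}\poly(n)$ produces a $\gamma$-HKZ basis $\basis = (\vec{b}_1,\dots,\vec{b}_n)$ of $\lat$ together with a $\gamma'$-approximate closest vector $\vec{z}\in\lat$ to $\vec{t}$, where $\gamma = u^{n/u}$ and $\gamma' = \sqrt{n}\,u^{n/u}$. Then let $k$ be the least index with $\length{\gs{\vec{b}}_k} > r$, taking $k = n+1$ if no such index exists, and output the (possibly empty) basis $\basis' := (\vec{b}_1,\dots,\vec{b}_{k-1})$ of the sublattice $\lat' := \lat(\basis')$ together with the vector $\vec{y} := -\vec{z}$. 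Everything after the call to Theorem~\ref{thm:BKZ} is polynomial-time, so the running-time bound is immediate; and since the Gram--Schmidt vectors of $\basis'$ are exactly $\gs{\vec{b}}_1,\dots,\gs{\vec{b}}_{k-1}$, the minimality of $k$ gives $\length{\gs{\basis}'} \le r$ at once.

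For the containment property I would use the fact recorded for $\gamma$-HKZ bases just after their definition: for every index $k$ the sublattice $\lat(\vec{b}_1,\dots,\vec{b}_{k-1})$ contains every lattice vector of length strictly less than $\length{\gs{\vec{b}}_k}/\gamma$, and by the choice of $k$ this threshold exceeds $r/\gamma$ (when $k\le n$; if $k = n+1$ then $\lat' = \lat$ and there is nothing to prove). Now fix an exact closest vector $\vec{y}_0\in\lat$, so $\length{\vec{y}_0 - \vec{t}} = \dist(\vec{t},\lat) =: d$, and take any $\vec{w} = \vec{x} - \vec{t}\in\lat - \vec{t}$ with $\vec{x}\in\lat$ and $\length{\vec{w}}\le r/\gamma - d$. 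The idea is to write $\vec{x} - \vec{z} = (\vec{x} - \vec{y}_0) + (\vec{y}_0 - \vec{z})$ as a sum of two lattice vectors, each short enough to lie in $\lat'$: by the triangle inequality $\length{\vec{x}-\vec{y}_0}\le (r/\gamma - d) + d = r/\gamma$, while $\length{\vec{y}_0-\vec{z}}\le d + \gamma' d = (1+\gamma')d \le r/\gamma$, where the final step is exactly the hypothesis $r \ge \gamma(1+\gamma')\,d$. Both summands therefore have length $\le r/\gamma < \length{\gs{\vec{b}}_k}/\gamma$, so both belong to $\lat'$, and hence so does $\vec{x} - \vec{z}$; since $\vec{y} = -\vec{z}$, the membership $\vec{x} - \vec{z}\in\lat'$ is precisely the assertion $\vec{w}\in\lat' - \vec{y} - \vec{t}$.

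The whole argument is short once one sees the two-term split of $\vec{x} - \vec{z}$, which is the step where the factor of $\gamma$ worth of Gram--Schmidt slack in the HKZ property is matched against the $(1+\gamma')$ factor in the hypothesis on $r$; I expect this to be the only genuinely substantive point. The remaining care is bookkeeping: outputting $-\vec{z}$ rather than $\vec{z}$ so that $\vec{x} - \vec{z}\in\lat'$ coincides with membership in $\lat' - \vec{y} - \vec{t}$, and checking the degenerate cases $k = 1$ (where $\lat' = \{\vec{0}\}$, so the argument forces $\vec{x} = \vec{z}$ and the unique qualifying $\vec{w}$ is $\vec{z} - \vec{t}$, consistent with $\lat' - \vec{y} - \vec{t} = \{\vec{z}-\vec{t}\}$) and $k = n+1$ (where $\lat' = \lat$ and the containment is trivial). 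None of these should cause real trouble.
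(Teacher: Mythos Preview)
Your proof is correct, and it takes a genuinely different route from the paper's. The paper computes the $\gamma$-HKZ basis, then makes a \emph{second} call to the approximate-CVP subroutine of Theorem~\ref{thm:BKZ} on the projected lattice $\M = \pi_k(\lat)$ with target $\pi_k(\vec{t})$; it argues that $\lambda_1(\M) > r/\gamma \ge (1+\gamma')\,\dist(\pi_k(\vec{t}),\M)$, so the returned $\gamma'$-approximate closest vector in $\M$ is in fact the \emph{exact} one, and any $\vec{w}\in\lat$ with the wrong projection is then far from $\vec{t}$ by the triangle inequality in the quotient. Your approach instead uses the approximate closest vector $\vec{z}$ in the \emph{full} lattice that Theorem~\ref{thm:BKZ} already hands you, and the two-term split $\vec{x}-\vec{z} = (\vec{x}-\vec{y}_0)+(\vec{y}_0-\vec{z})$ through an exact closest point $\vec{y}_0$; both pieces land in $\lat'$ because each has length at most $r/\gamma < \length{\gs{\vec{b}}_k}/\gamma$. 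Your argument is slightly more economical (one subroutine call rather than two, and no projection), while the paper's version makes explicit that the projected closest vector is unique and exactly recovered, which is a mildly stronger structural statement even if not needed here.
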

\begin{proof}
 On input a lattice $\lat \subset \R^n$, $\vec{t} \in \R^n$, and $r > 0$, the algorithm behaves as follows. First, it calls the procedure from  Theorem~\ref{thm:BKZ} to compute a $u^{n/u}$-HKZ basis $\basis  = (\vec{b}_1, \ldots, \vec{b}_n)$ of $\lat$. Let $(\gs{\vec{b}}_1, \ldots, \gs{\vec{b}}_n)$ be the corresponding Gram-Schmidt vectors. Let $k \geq 0$ be maximal such that $\length{\gs{\vec{b}}_i} \le r$ for $1 \le i \le k$, and let $\basis' = (\vec{b}_1, \ldots, \vec{b}_k)$.  Let $\pi_k = \pi_{\{ \vec{b}_1, \ldots, \vec{b}_k \}^\perp}$ and $\M = \pi_k(\lat)$. The algorithm then calls the procedure from Theorem~\ref{thm:BKZ} again with the same $s$ and input $\pi_k(\vec{t})$ and $\M$, receiving as output $\vec{x} = \sum_{i=k+1}^n a_i \pi_k(\vec{b}_i)$ where $a_i \in \Z$, a $\sqrt{n} u^{n/u}$-approximate closest vector to $\pi_k(\vec{t})$ in $\M$. Finally, the algorithm returns $\vec{y} = -\sum_{i=k+1}^n a_i \vec{b}_i$ and  $\basis' = (\vec{b}_1, \ldots, \vec{b}_k)$.
 
The running time is clear, as is the fact that $\length{\gs{\basis'}} \leq r$. It remains to prove that $\lat' - \vec{y} - \vec{t}$ contains all sufficiently short vectors in $\lat - \vec{t}$. If $k = n$, then $\lat' = \lat$ and $\vec{y}$ is irrelevant, so we may assume that $k < n$. Note that, since $\basis$ is a $u^{n/u}$-HKZ basis, $\lambda_1(\M) \geq \length{\gs{\vec{b}}_{k+1}}/u^{n/u} > r/u^{n/u}$.  In particular, $\lambda_1(\M)  > (1+\sqrt{n} \cdot u^{n/u})\cdot \dist(\vec{t}, \lat) \geq (1+\sqrt{n} \cdot u^{n/u})\cdot \dist(\pi_k(\vec{t}), \M)$. So, there is a unique closest vector to $\pi_k(\vec{t})$ in $\M$, and by triangle inequality, the next closest vector is at distance greater than $\sqrt{n} \cdot u^{n/u}\dist(\pi_k(\vec{t}), \M)$. Therefore, the call to the subprocedure from Theorem~\ref{thm:BKZ} will output the exact closest vector $\vec{x} \in \M$ to $\pi_k(\vec{t})$.

Let $\vec{w} \in \lat \setminus (\lat' - \vec{y})$ so that $\pi_k(\vec{w}) \neq \pi_k(-\vec{y}) = \vec{x}$. We need to show that $\vec{w} - \vec{t}$ is relatively long. Since $\basis$ is a $s^{n/s}$-HKZ basis, it follows that
\[
\length{\pi_k(\vec{w}) - \vec{x}} \geq \lambda_1(\M) > r/u^{n/u}
\; .
\]
Applying triangle inequality, we have
\begin{align*}
\length{\vec{w} - \vec{t}} \geq \length{\pi_k(\vec{w}) - \pi_k(\vec{t})}
\geq \length{\pi_k(\vec{w}) - \vec{x}} - \length{ \vec{x} - \pi_k(\vec{t})}
> r/u^{n/u}- \dist(\vec{t}, \lat)
\; ,
\end{align*}
as needed.
 \end{proof}

 \begin{corollary}
 \label{cor:start}
 There is an algorithm that takes as input a lattice $\lat \subset \R^n$ with $n \geq 2$, shift $\vec{t} \in \R^n$, $M \in \N$ (the desired number of output vectors), and parameters $u \geq 2$ and 
$\hat{s} > 0$
and outputs $\vec{y} \in \lat$, a (possibly trivial) sublattice $\lat' \subseteq \lat$, and $M$ vectors from $\lat' - \vec{y} - \vec{t}$ such that
if 
\[
\hat{s} \geq C\sqrt{n \log n} \cdot u^{2n/u} \cdot \dist(\vec{t}, \lat)
\; ,
\]
then the output vectors are distributed as $M$ independent samples from $D_{\lat' - \vec{y} - \vec{t}, \hat{s}}$, and $\lat' - \vec{y} - \vec{t}$ contains all vectors in $\lat - \vec{t}$ of length at most $C\hat{s}/(u^{n/u}\sqrt{\log n})$. The algorithm runs in time $\poly(n) \cdot 2^{O(u)} + \poly(n) \cdot M $.
\end{corollary}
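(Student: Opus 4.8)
The plan is to combine Proposition~\ref{prop:shiftedsublattice}, which replaces $\lat - \vec{t}$ by a shifted sublattice carrying all of its short vectors, with the exact-sampling version of Klein's algorithm, Theorem~\ref{thm:GPV}, which then produces discrete Gaussian samples with a large parameter over that sublattice. Fix a small universal constant $c_0 > 0$, small enough that $c_0$ times the constant $C'$ appearing in Theorem~\ref{thm:GPV} is less than $1$, and set $r := c_0 \hat{s}/\sqrt{\log n}$. The point of this choice is that any lattice basis with Gram--Schmidt norm at most $r$ is short enough to be used in Theorem~\ref{thm:GPV} with parameter $\hat{s}$, since then $r < \hat{s}/(C'\sqrt{\log n})$. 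First I would verify that the hypothesis $\hat{s} \ge C\sqrt{n\log n}\,u^{2n/u}\dist(\vec{t},\lat)$ forces the precondition $r \ge u^{n/u}(1+\sqrt{n}\,u^{n/u})\dist(\vec{t},\lat)$ of Proposition~\ref{prop:shiftedsublattice}: dividing the hypothesis by $\sqrt{\log n}$ gives $r \ge c_0 C\sqrt{n}\,u^{2n/u}\dist(\vec{t},\lat)$, while $u^{n/u}(1+\sqrt{n}\,u^{n/u}) \le 2\sqrt{n}\,u^{2n/u}$ for $u \ge 2$ (using $\sqrt{n}\,u^{n/u} \ge 1$), so it suffices to take the hypothesis constant $C$ with $c_0 C \ge 2$.

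Running the algorithm of Proposition~\ref{prop:shiftedsublattice} on input $\lat$, $\vec{t}$, $u$, $r$ then returns $\vec{y} \in \lat$ and a basis $\basis'$ of a sublattice $\lat' \subseteq \lat$ with $\length{\gs{\basis}'} \le r$, such that $\lat' - \vec{y} - \vec{t}$ contains every vector of $\lat - \vec{t}$ of length at most $r/u^{n/u} - \dist(\vec{t},\lat)$. Next I would call the algorithm of Theorem~\ref{thm:GPV} a total of $M$ times, each time on the lattice $\lat'$, shift $\vec{y} + \vec{t}$, and parameter $\hat{s}$. Since $\length{\gs{\basis}'} \le r < \hat{s}/(C'\sqrt{\log n})$ and $\rank\lat' \le n$, the parameter bound required by Theorem~\ref{thm:GPV} (applied in dimension $\rank\lat'$) is met, so each call returns a vector distributed exactly as $D_{\lat' - (\vec{y}+\vec{t}),\hat{s}} = D_{\lat' - \vec{y} - \vec{t},\hat{s}}$. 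Because these samples are exact, the $M$ outputs are jointly distributed exactly as $M$ independent draws from $D_{\lat' - \vec{y} - \vec{t},\hat{s}}$, with no statistical error to carry along. (The degenerate cases $\rank\lat' \in \{0,1\}$, not literally covered by Theorem~\ref{thm:GPV}, are handled directly, since the target distribution is then a point mass or a one-dimensional discrete Gaussian, both of which can be sampled exactly.)

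It then remains to check the covering claim and the running time. For the covering claim, substituting $r = c_0\hat{s}/\sqrt{\log n}$ shows that $\lat' - \vec{y} - \vec{t}$ covers all vectors of $\lat - \vec{t}$ of length at most $c_0\hat{s}/(u^{n/u}\sqrt{\log n}) - \dist(\vec{t},\lat)$; the hypothesis on $\hat{s}$ gives $\dist(\vec{t},\lat) \le \hat{s}/(C\sqrt{n\log n}\,u^{2n/u}) \le \hat{s}/(C\,u^{n/u}\sqrt{\log n})$ (again using $\sqrt{n}\,u^{n/u} \ge 1$), so for $C$ large enough relative to $c_0$ the difference is at least $\tfrac{1}{2}c_0\,\hat{s}/(u^{n/u}\sqrt{\log n})$, which is of the claimed form with a new universal constant. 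For the running time, Proposition~\ref{prop:shiftedsublattice} runs in $\poly(n)\cdot 2^{O(u)}$ time, and each of the $M$ calls to the algorithm of Theorem~\ref{thm:GPV} runs in $\poly(n)$ time, for a total of $\poly(n)\cdot 2^{O(u)} + \poly(n)\cdot M$.

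I expect the only delicate point to be the choice of the intermediate threshold $r$ together with the accompanying constant bookkeeping: $r$ must be large enough to meet the precondition of Proposition~\ref{prop:shiftedsublattice} and to make $\lat' - \vec{y} - \vec{t}$ cover a ball of radius $\Theta(\hat{s}/(u^{n/u}\sqrt{\log n}))$, yet small enough to stay compatible with the parameter bound $\hat{s} > C'\sqrt{\log n}\,\length{\gs{\basis}'}$ of Theorem~\ref{thm:GPV}; reconciling these two constraints is exactly what dictates the $\sqrt{n\log n}\,u^{2n/u}$ factor in the hypothesis. Beyond this there is essentially no analytic work, since using the exact-sampling version of Klein's algorithm means there are no statistical distances to combine.
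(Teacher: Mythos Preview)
Your proposal is correct and follows essentially the same approach as the paper: set $r$ proportional to $\hat{s}/\sqrt{\log n}$, invoke Proposition~\ref{prop:shiftedsublattice} to obtain $(\vec{y},\lat')$ with $\length{\gs{\basis}'}\le r$, then call Theorem~\ref{thm:GPV} $M$ times on $\lat'$ with shift $\vec{y}+\vec{t}$ and parameter $\hat{s}$. The paper's proof is terser about the constant bookkeeping and the degenerate low-rank cases you mention, but the structure and all substantive steps are the same.
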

\begin{proof}
 The algorithm first calls the procedure from Proposition~\ref{prop:shiftedsublattice} with input $\lat$, $\vec{t}$, and 
\[
r := \frac{C\hat{s}}{\sqrt{\log n}} \geq u^{n/u} (1+\sqrt{n} u^{n/u}) \cdot \dist(\vec{t}, \lat)
\; ,
\] receiving as output $\vec{y} \in \lat$ and a basis $\basis'$ of a sublattice $\lat' \subset \lat $. It then runs the algorithm from Theorem~\ref{thm:GPV} $M$ times with input $\lat'$, $\vec{y} + \vec{t}$, and $\hat{s}$ and outputs the resulting vectors, $\vec{y}$, and $\lat'$.

The running time is clear. By Proposition~\ref{prop:shiftedsublattice}, $\lat' - \vec{y} - \vec{t}$ contains all vectors of length at most $r/u^{n/u} - \dist(\vec{t}, \lat) \geq  C\hat{s}/(u^{n/u}\sqrt{\log n})$ in $\lat - \vec{t}$, and $\length{\gs{\basis}'} \leq r \leq C \hat{s}/\sqrt{ \log n} $. So, it follows from Theorem~\ref{thm:GPV} that the output has the correct distribution.
\end{proof}

\subsection{The sampler}
We are now ready to present our discrete Gaussian sampler.

\begin{theorem}
\label{thm:DGS}
For any efficiently computable function $f(n)\geq n^{\omega(1)}$, let $\sigma$ be the function defined by $\sigma(\lat - \vec{t}) := \dist(\vec{t}, \lat)/f(n)$ for any lattice $\lat \subset\R^n$ and $\vec{t} \in \R^n$. Let 
\[ 
m(\lat - \vec{t}, s) := \frac{\rho_s(\lat - \vec{t})}{\max_{\vec{c} \in \lat/(2\lat)} \rho_s(\vec{c} - \vec{t})}
\; .
\]
Then, there is an algorithm that solves $\DGS{\eps}{\sigma}{m}$ with $\eps(n) := 2^{-Cn^2}$ in time 
$
2^{n+O(\log n \log f(n))}$.
\end{theorem}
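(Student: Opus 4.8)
The plan is to combine Corollary~\ref{cor:start} with Theorem~\ref{thm:pipeline}: use the former to obtain many independent samples from $D_{\lat'-\vec y-\vec t,\hat s}$ for a large parameter $\hat s$ over a shifted sublattice $\lat'\subseteq\lat$ that captures all short vectors of $\lat-\vec t$, and use the latter to walk the parameter down to $s = 2^{-\ell/2}\hat s$. Concretely, on input $\basis$, $\vec t$, and $s > \sigma(\lat-\vec t) = \dist(\vec t,\lat)/f(n)$, I would fix $u = u(n)$ and $\kappa = \kappa(n)$ (chosen below) and the smallest $\ell\in\N$ such that $\hat s := 2^{\ell/2}s$ both meets the hypothesis of Corollary~\ref{cor:start} (i.e.\ $\hat s \geq C\sqrt{n\log n}\,u^{2n/u}\dist(\vec t,\lat)$) and is large enough that the captured radius $R := C\hat s/(u^{n/u}\sqrt{\log n})$ satisfies the strong tail bound invoked below. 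Calling Corollary~\ref{cor:start} with $M := (32\kappa)^{\ell+1}2^n$ (times a small constant) yields $\vec y\in\lat$, a sublattice $\lat'\subseteq\lat$, and $M$ independent samples from $D_{\lat'-\vec y-\vec t,\hat s}$ with $\lat'-\vec y-\vec t \supseteq \{\vec w\in\lat-\vec t : \length{\vec w}\leq R\}$; feeding these to Theorem~\ref{thm:pipeline} ($\ell$ iterations of Proposition~\ref{prop:combiner}) outputs, up to statistical distance $\ell M\exp(C_1 n - C_2\kappa)$, at least $m' := \rho_s(\lat'-\vec y-\vec t)/\max_{\vec c\in\lat'/(2\lat')}\rho_s(\vec c-\vec y-\vec t)$ independent samples from $D_{\lat'-\vec y-\vec t,s}$. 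Since $\vec y\in\lat$ and $\lat'\subseteq\lat$, each such sample is a point of $\lat-\vec t$. (If $s$ is already large enough, $\ell=0$ and Corollary~\ref{cor:start} alone suffices.)

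Two things must then be verified. First, $D_{\lat'-\vec y-\vec t,s}$ is statistically close to $D_{\lat-\vec t,s}$: as $\lat'-\vec y-\vec t$ contains every point of $\lat-\vec t$ of length $\leq R$, the statistical distance between the two equals $\rho_s((\lat-\vec t)\setminus(\lat'-\vec y-\vec t))/\rho_s(\lat-\vec t)$, which is at most $\Pr_{\vec X\sim D_{\lat-\vec t,s}}[\length{\vec X}>R]$; by the first inequality of Corollary~\ref{cor:banaszczykcor} (with $r := R/(s\sqrt n)$ and $\alpha := \dist(\vec t,\lat)/(s\sqrt n) < f(n)/\sqrt n$) this is at most $e^{\pi f(n)^2}\bigl(\sqrt{2\pi e r^2}\exp(-\pi r^2)\bigr)^n$, and choosing $\ell$ large enough — it suffices that $2^{\ell/2} \geq \poly(n)\cdot f(n)^{O(1)}\cdot u^{n/u}$, hence $\ell = O(\log f(n) + (n/u)\log u + \log n)$ — drives this, and $M$ times it, below $2^{-Cn^2}$. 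Second, the output count $m'$ is at least $m(\lat-\vec t,s)$: the same tail estimate gives $\rho_s(\lat'-\vec y-\vec t) \geq (1-2^{-Cn^2})\rho_s(\lat-\vec t)$, while $\max_{\vec c\in\lat'/(2\lat')}\rho_s(\vec c-\vec y-\vec t) \leq \max_{\vec c\in\lat/(2\lat)}\rho_s(\vec c-\vec t)$ because every coset of $2\lat'$ in $\lat'$, translated by $-\vec y$, is contained in a single coset of $2\lat$ in $\lat$ (using $\lat'\subseteq\lat$, $2\lat'\subseteq 2\lat$, $\vec y\in\lat$). Hence $m' \geq (1-2^{-Cn^2})m(\lat-\vec t,s)$, and starting with a constant factor more samples (which the induction of Theorem~\ref{thm:pipeline} propagates to the output) makes the final count at least $m(\lat-\vec t,s)$, as required by $\DGS{\eps}{\sigma}{m}$.

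Finally, one picks the parameters to hit the running time $2^{n+O(\log n\log f(n))}$. The cost is $\poly(n)\,2^{O(u)}$ for the initialization plus $\ell M\cdot\poly(n,\log\kappa) = 2^{n+O(\ell\log\kappa)}\poly(n)$ for the combiner. Since $f(n)\geq n^{\omega(1)}$ we have $\log f(n) = \omega(\log n)$, so choosing $u$ (roughly $u \approx n\log n/\log f(n)$, clamped to be $\geq 2$) so that $(n/u)\log u = O(\log f(n))$ forces $u = o(n)$ — hence $2^{O(u)} = 2^{o(n)}$ — and makes $\ell = O(\log f(n))$; taking $\kappa = \Theta(n^2 + \ell^2)$ gives $\log\kappa = O(\log n)$ (here using that $f$, being efficiently computable, has $\log f(n) = n^{O(1)}$, so $\log\ell = O(\log n)$) and also forces $\ell M\exp(C_1 n-C_2\kappa)\leq 2^{-Cn^2}$. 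Then $\ell\log\kappa = O(\log f(n)\log n)$, and the total running time is $2^{o(n)} + 2^{n+O(\log n\log f(n))}\poly(n) = 2^{n+O(\log n\log f(n))}$. I expect the main obstacle to be exactly this balancing act around the shifted sublattice: one must choose $\ell$ (equivalently $\hat s$) simultaneously small enough that $\hat s$ stays within a polynomial factor of $\dist(\vec t,\lat)\cdot f(n)$ (so $\ell = O(\log f(n))$) and large enough that the Gaussian tail mass outside $B(\vec 0,R)$ is smaller than the reciprocal of the enormous sample count $M$ — everything else is plumbing on top of Corollary~\ref{cor:start} and Theorem~\ref{thm:pipeline}, whose loss-factor analysis has already done the substantive work.
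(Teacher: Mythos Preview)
Your proposal is correct and follows essentially the same approach as the paper, with the same parameter choices ($u \approx n\log n/\log f(n)$, $\ell = \Theta(\log f(n))$, $\kappa = \Theta(n^2)$) and the same two verification steps (tail bound to compare $D_{\lat'-\vec y-\vec t,s}$ with $D_{\lat-\vec t,s}$, then a comparison of $m'$ with $m(\lat-\vec t,s)$). The only cosmetic difference is that the paper obtains the small slack in the sample count by running Theorem~\ref{thm:pipeline} twice on disjoint halves of the initial samples rather than by starting with a constant factor more; your explicit coset-containment inequality $\max_{\vec c\in\lat'/(2\lat')}\rho_s(\vec c-\vec y-\vec t) \leq \max_{\vec c\in\lat/(2\lat)}\rho_s(\vec c-\vec t)$ is exactly the ``easy calculation'' the paper leaves to the reader.
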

\begin{proof}
We assume without loss of generality that $f(n) \geq 2n > 10$. The algorithm behaves as follows on input a lattice $\lat \subset \R^n$, a shift $\vec{t}$, and a parameter $s > \sigma(\lat - \vec{t})$. First, it runs the procedure from Corollary~\ref{cor:start} with input $\lat$, $\vec{t}$, $M := (Cn^2)^{\ell + 2} \cdot 2^n$ with $\ell :=  C\ceil{\log f(n)}$, $u := C n \log n/ \log f(n) + 2$, and 
\[
\hat{s} := 2^\ell s > C\sqrt{n \log n} \cdot u^{2n/u} \cdot \dist(\vec{t}, \lat)
\; .
\]
(Note that $u^{n/u} \leq f(n)^{C}$.) It receives as output $\lat' \subset \R^n$, $\vec{y} \in \lat$, and $(\vec{X}_1, \ldots, \vec{X}_M) \in \lat' - \vec{y} - \vec{t}$. It then runs the procedure from Theorem~\ref{thm:pipeline} \emph{twice}, first with input $\lat'$, $\ell$, $\kappa := Cn^2$, $\vec{t}$, and the first half of the vectors, $(\vec{X}_1, \ldots, \vec{X}_{M/2})$; and next with input $\lat'$, $\ell$, $\kappa$, $\vec{t}$, and the second half of the vectors, $(\vec{X}_{M/2+1}, \ldots, \vec{X}_{M})$. Finally, it outputs the resulting vectors.

The running time follows from the respective running times of the two subprocedures. In particular, the procedure from Corollary~\ref{cor:start} runs in time $\poly(n) \cdot (2^{O(u)} + M) =  n^{O(n/\log f(n))} + 2^{n+O(\log n \log f(n))} =  2^{n+O(\log n \log f(n))}$, and the procedure from Theorem~\ref{thm:pipeline} runs in time $\ell M \cdot \poly(n, \log \kappa) = 2^{n+O(\log n\log f(n))}$.

By Corollary~\ref{cor:start}, the $\vec{X}_i$ are $M$ independent samples from $D_{\lat' - \vec{y} - \vec{t}, \hat{s}}$  and $\lat' - \vec{y} - \vec{t}$ contains all vectors in $\lat - \vec{t}$ of length at most $C\hat{s} /(u^{n/u}\sqrt{ \log n})$. By Theorem~\ref{thm:pipeline}, the output contains at least $2m(\lat' - \vec{t}, s)$ vectors whose distribution is within statistical distance $2^{-Cn^2}$ of independent samples from $D_{\lat' - \vec{y} - \vec{t}, s}$. 

We now show that $D_{\lat' -\vec{y} - \vec{t}, s}$ is statistically close to $D_{\lat - \vec{t}, s}$. Let $d := \dist(\vec{t}, \lat)$ and 
\[
r := \frac{C 2^\ell}{u^{n/u} \sqrt{n \log n}} \geq f(n)^C \geq \frac{1}{\sqrt{2\pi}}
\; .
\] The statistical distance is exactly 
\begin{align*}
\Pr_{\vec{w} \sim D_{\lat - \vec{t}}, s} \big[\vec{w} \notin \lat' - \vec{y} - \vec{t} \big] &< \Pr_{\vec{w} \sim D_{\lat - \vec{t}}, s} \big[\length{\vec{w}} > C\hat{s}/(u^{n/u}\sqrt{\log n}) \big]\\
&= \Pr_{\vec{w} \sim D_{\lat - \vec{t}}, s} \big[\length{\vec{w}} > rs \sqrt{n} \big]\\
&< e^{\pi d^2/s^2} e^{- f(n)^C}\\
&< 2^{-C n^2}
\; ,
\end{align*}
where we have used Corollary~\ref{cor:banaszczykcor}. It follows that the output has the correct size and distribution. In particular, it follows from applying union bound over the output samples that the distribution of the output is within statistical distance $\eps$ of independent samples from $D_{\lat - \vec{t}, s}$, and an easy calculation shows that $2m(\lat' - \vec{t}, s) > m(\lat - \vec{t}, s)$.
\end{proof}

From Theorem~\ref{thm:DGS} and Corollary~\ref{cor:banaszczykcor}, we immediately get a weaker version of our main result, a $2^{n+o(n)}$-time algorithm for $\gamma\text{-}\problem{CVP}$ for any $\gamma = 1+2^{-o(n/\log n)}$. 

\begin{corollary}
\label{cor:approxCVP}
For any efficiently computable function $f(n) \geq n^{\omega(1)}$, there is an algorithm solving $(1+1/f(n))\text{-}\problem{CVP}$ (with high probability) in time 
$
2^{n+O(\log n \log f(n))}
$. In particular, if $f(n) = 2^{o(n/\log n)}$, the algorithm runs in time $2^{n+o(n)}$.
\end{corollary}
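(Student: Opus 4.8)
The plan is to combine the discrete Gaussian sampler of Theorem~\ref{thm:DGS} with the sharp concentration bound in the second part of Corollary~\ref{cor:banaszczykcor}. Fix a lattice $\lat$ and target $\vec t$, and write $d := \dist(\vec t, \lat)$. For a well-chosen parameter $s$, the sampler returns at least $m(\lat - \vec t, s) = \rho_s(\lat - \vec t)/\max_{\vec c}\rho_s(\vec c - \vec t) \ge 1$ vectors whose marginal law is within statistical distance $\eps = 2^{-Cn^2}$ of $D_{\lat - \vec t, s}$; the bound $m \ge 1$ is immediate, since $\rho_s(\lat - \vec t) = \sum_{\vec c \in \lat/(2\lat)}\rho_s(\vec c - \vec t)$ is a sum with nonnegative terms. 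Corollary~\ref{cor:banaszczykcor} then says that, provided $\alpha := d/(\sqrt n s) \le 2^n$, any single such sample $\vec X$ satisfies $\|\vec X\|^2 \le d^2 + 2(sn)^2$ except with probability $e^{-3n^2}$; so it suffices to pick $s$ small enough that $2(sn)^2 \le 2 d^2/f(n)$, since then $\|\vec X\| \le \sqrt{d^2(1 + 2/f(n))} \le (1 + 1/f(n)) d$ and $\vec X + \vec t \in \lat$ is a valid $(1+1/f(n))$-\problem{CVP} solution. The constraints on $s$ are therefore: $s \le d/(n\sqrt{f(n)})$ (closeness), $s > \sigma(\lat - \vec t)$ for whatever minimal-parameter function we feed to Theorem~\ref{thm:DGS} (correctness of the sampler), and $s \ge d\cdot 2^{-n}/\sqrt n$ (so that $\alpha \le 2^n$).

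Since we do not know $d$ in advance, I would first run the algorithm of Theorem~\ref{thm:BKZ} with parameter $u := \lceil C n \log n/\log f(n) \rceil$. As $\log f(n) = \omega(\log n)$, we have $2 \le u = o(n)$, so this call costs $2^{O(u)} \cdot \poly(n) = 2^{o(n)}$, and for a suitable constant $C$ the returned approximate closest vector $\vec v$ has approximation factor $\gamma' = \sqrt n\, u^{n/u} \le \sqrt{f(n)}$ (larger $u$ only shrinks $u^{n/u}$, so the ceiling is harmless); in particular $D := \|\vec v - \vec t\|$ satisfies $D/\gamma' \le d \le D$. If $D = 0$ then $\vec v$ is an exact closest vector and we stop; otherwise set $s := D/(\gamma' n \sqrt{f(n)})$, so that $d/(\gamma' n\sqrt{f(n)}) \le s \le d/(n\sqrt{f(n)})$. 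We then invoke Theorem~\ref{thm:DGS} with the minimal-parameter function induced by $g(n) := 2 n f(n)$ (so $\sigma(\lat - \vec t) = d/(2nf(n)) < d/(nf(n)) \le s$, using $\gamma' \le \sqrt{f(n)}$), on input $\lat, \vec t, s$, and output $\vec X + \vec t$ for one returned vector $\vec X$. All three constraints on $s$ hold: the first by construction; the second as just noted; and $\alpha = d/(\sqrt n s) \le \gamma' \sqrt n \sqrt{f(n)} = n^{1/2} f(n) < 2^n$, using $\gamma' \le \sqrt{f(n)}$ and (see below) $f(n) < 2^n/\sqrt n$. By the previous paragraph this output is a valid $(1+1/f(n))$-\problem{CVP} solution except with probability at most $e^{-3n^2} + \eps$, and the total running time is $2^{o(n)} + 2^{n + O(\log n \log g(n))} = 2^{n + O(\log n \log f(n))}$ since $\log g(n) = O(\log f(n))$.

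The one regime this argument misses is $f(n)$ huge: the bounds above need $f(n) < 2^n/\sqrt n$. But if $f(n) \ge 2^{n/\log n}$, then $\log n \log f(n) \ge n$, so $2^{n + O(\log n \log f(n))}$ (with a large enough implied constant) subsumes the $\widetilde O(4^n)$ running time of the deterministic exact \problem{CVP} algorithm of Micciancio and Voulgaris~\cite{MV13}, which we run instead in this case; its output is exact, hence a fortiori $(1+1/f(n))$-approximate. So we may indeed assume $n^{\omega(1)} \le f(n) < 2^{n/\log n} \le 2^n/\sqrt n$, which closes the last gap. Finally, the bound $f(n) = 2^{o(n/\log n)}$ makes $\log n \log f(n) = o(n)$, giving the stated $2^{n+o(n)}$ running time in that case.

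The main --- and essentially only --- obstacle is the parameter bookkeeping above: verifying that the admissible interval for $s$ is nonempty, which is exactly where the hypothesis that $f$ is at most exponential enters (after the Micciancio--Voulgaris fallback disposes of the super-exponential regime), and verifying that a crude estimate of $d$ cheap enough to compute is still good enough, i.e.\ that one can extract $\gamma' \le \sqrt{f(n)}$ from Theorem~\ref{thm:BKZ} at cost only $2^{o(n)}$. The latter forces the choice $u \approx n\log n/\log f(n)$ --- the same scaling that appears inside the proof of Theorem~\ref{thm:DGS} --- but is otherwise routine. As the paper indicates, once these parameters are in place the corollary follows immediately from Theorem~\ref{thm:DGS} and Corollary~\ref{cor:banaszczykcor}.
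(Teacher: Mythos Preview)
Your proposal is correct and follows exactly the route the paper indicates: combine Theorem~\ref{thm:DGS} with Corollary~\ref{cor:banaszczykcor}, picking $s \approx d/(n\sqrt{f(n)})$ so that the tail bound yields a $(1+1/f(n))$-approximate closest vector. The paper gives no proof beyond the one-line remark that the corollary is immediate from these two results, so what you have written is precisely the omitted bookkeeping: obtaining a crude estimate of $d$ via Theorem~\ref{thm:BKZ} (with the same scaling $u \approx n\log n/\log f(n)$ used inside the proof of Theorem~\ref{thm:DGS}), verifying the parameter window for $s$ is nonempty, and disposing of the super-exponential-$f$ regime by falling back on~\cite{MV13}. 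All of this is sound; the only cosmetic point is that the equality ``$\gamma'\sqrt{n}\sqrt{f(n)} = n^{1/2}f(n)$'' should be an inequality.
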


\section{Reduction from exact CVP to a variant of approximate CVP}
\label{sec:CVPtoneighbor}
We now introduce a new variant of approximate \problem{CVP} that suggests a recursive algorithm for exact \problem{CVP}. The goal is to find a lattice point $\vec{y}$ that is within some very small distance $\alpha \cdot \dist(\vec{t}, \lat)$ of a closest point $\vec{y}'$ to the target $\vec{t}$. In Section~\ref{sec:clusters}, we show that the approximate closest points are arranged in \scarequotes{clusters,} where $\vec{y}$ and $\vec{y}'$ are in the same cluster. So, we call this problem the \emph{cluster} Closest Vector Problem (cCVP). 

In fact, it will suffice for our purposes to output many lattice vectors $\vec{y}_1,\ldots, \vec{y}_{\hat{p}}$ with the guarantee that at least one of these points is within distance $\alpha \cdot \dist(\vec{t}, \lat)$ to the closest vector.

\begin{definition}
\label{def:nCVP-new}
For $\alpha = \alpha(n) \geq 0$ (the additive error) and $p = p(n) \geq 1$ (a bound on the output size), the search problem $\alpha\text{-}\problem{cCVP}^p$ (cluster Closest Vector Problem) is defined as follows: The input is a basis $\basis$ for a lattice $\lat \subset \R^n$ and a target vector $\vec{t} \in \R^n$. The goal is to output lattice vectors $\vec{y}_1,\ldots, \vec{y}_{\hat{p}} \in \lat $ with $\hat{p} \leq p(n)$ such that there exists an index $j$ and $\vec{y}' \in \lat$ with $\length{\vec{y}' - \vec{t}} = \dist(\vec{t}, \lat)$ and $\length{\vec{y}_j - \vec{y}'} \leq \alpha(n) \cdot \dist(\vec{t}, \lat)$.
\end{definition}

Note that there is a trivial reduction from $(1+\alpha)\text{-}\problem{CVP}$ to $\alpha\text{-}\problem{cCVP}^p$. Furthermore, we may assume without loss of generality that all of the output vectors are solutions to $(1+\alpha)^2\text{-}\problem{CVP}$. (We can simply throw out any vectors $\vec{y}_j$ with $\length{\vec{y}_j - \vec{t}} \geq (1+\alpha) \cdot \min_i \length{\vec{y}_i - \vec{t}}$.) 

We are primarily interested in $\alpha\text{-}\problem{cCVP}^p$ for very large $p$ (e.g., $p = 2^n$), but we first present a simple recursive reduction from \emph{exact} $\problem{CVP}$ to $\alpha\text{-}\problem{cCVP}^1$ for $\alpha(n) \leq C/\sqrt{n}$. Our more general reduction will essentially just run this procedure many times, with each run corresponding to an output vector from the $\alpha\text{-}\problem{cCVP}^p$ oracle.

\begin{claim}
\label{clm:simplenCVPreduction}
There is a polynomial-time, dimension-preserving reduction from \problem{CVP} to $\alpha\text{-}\problem{cCVP}^1$ for $\alpha(n) \leq C/\sqrt{n}$.
\end{claim}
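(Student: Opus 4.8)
The plan is to set up a recursion on the rank $d$ of the lattice, where at each step we use the $\alpha\text{-}\problem{cCVP}^1$ oracle to identify a lower-rank affine subspace (a shift of a sublattice $\lat' \subsetneq \lat$) that is guaranteed to contain a closest vector, and then recurse on that shifted sublattice. The base case is $\dist(\vec t, \lat) = 0$ (the target is in the lattice, detected because a $0$-approximate solution is returned), or rank $0$, which is trivial. The key structural fact I would need is that when $\alpha \le C/\sqrt n$ for a sufficiently small constant $C$, all the approximate closest vectors within distance $(1+\alpha)\cdot\dist(\vec t,\lat)$ of $\vec t$ lie in a bounded number of clusters of diameter $\le \alpha\cdot\dist(\vec t,\lat)$, and moreover a single cluster spans an affine subspace of dimension strictly less than $n$. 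This is exactly the clustering phenomenon advertised in Section~\ref{sec:clusters}; concretely, if $\vec y_1,\dots,\vec y_k \in \lat$ are all within distance roughly $\dist(\vec t,\lat)$ of $\vec t$ and pairwise within $\alpha\cdot\dist(\vec t,\lat)$ of each other, then since they all lie in a ball of radius $O(\alpha\cdot\dist(\vec t,\lat))$, and lattice points in a small ball span a low-dimensional space (this is where the $C/\sqrt n$ threshold enters, via a packing/successive-minima bound such as Theorem~\ref{thm:lat-pt-bnd} or Lemma~\ref{lem:coveringradius}), the differences $\vec y_i - \vec y_1$ generate a sublattice of rank $< n$.

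Here is the reduction in order. First, solve $(1+\alpha)$-CVP (by the trivial reduction to $\alpha\text{-}\problem{cCVP}^1$, or directly, using the oracle once) to get an approximation $\hat d$ to $\dist(\vec t,\lat)$ with $\hat d \le (1+\alpha)\dist(\vec t,\lat)$; if $\hat d = 0$ output the returned lattice point and stop. Second, call the $\alpha\text{-}\problem{cCVP}^1$ oracle to obtain a single vector $\vec y$ with the guarantee that there is an exact closest vector $\vec y'$ with $\length{\vec y - \vec y'} \le \alpha\cdot\dist(\vec t,\lat)$. Third, compute a $\gamma$-HKZ basis $\basis = (\vec b_1,\dots,\vec b_n)$ of $\lat$ for $\gamma$ close to $1$ (via Theorem~\ref{thm:BKZ} with a suitable parameter, or via Theorem~\ref{thm:HKZtoCVP} using the oracle); let $k$ be the largest index with $\length{\gs{\vec b}_k} < 2\alpha\hat d/\gamma$ (or $k = 0$ if none), and set $\lat' := \lat(\vec b_1,\dots,\vec b_k)$, a sublattice of rank $k$. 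By the HKZ property, $\lat'$ contains every lattice vector of length $< \length{\gs{\vec b}_{k+1}}/\gamma \ge$ (roughly) $2\alpha\hat d/\gamma^2$; in particular, since $\vec y' - \vec y$ has length $\le \alpha\cdot\dist(\vec t,\lat) \le \alpha\hat d$, we get $\vec y' - \vec y \in \lat'$, i.e., $\vec y'$ lies in the coset $\lat' + \vec y$. Crucially one also checks $k < n$: if $k = n$ then every nonzero lattice vector would have length $< 2\alpha\hat d/\gamma \lesssim \dist(\vec t,\lat)/\sqrt n$ after scaling, which combined with Lemma~\ref{lem:coveringradius} would force $\mu(\lat) < \dist(\vec t,\lat)$, a contradiction — so the rank genuinely drops. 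Fourth, recurse on the lower-rank instance $(\lat' + \vec y, \vec t)$: since a closest vector to $\vec t$ in $\lat$ lies in $\lat' + \vec y$, a closest vector to $\vec t$ in the affine lattice $\lat' + \vec y$ is a closest vector in all of $\lat$. Recursion terminates because the rank strictly decreases each time, so there are at most $n$ levels, and each level makes polynomially many oracle calls and does polynomial work; the reduction is clearly dimension-preserving.

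The main obstacle, and the point that needs the most care, is the threshold argument showing that $\alpha \le C/\sqrt n$ suffices both for (a) $\vec y' - \vec y$ to be captured inside $\lat'$ and (b) the rank to strictly decrease — i.e., choosing $C$ small enough (and $\gamma$ close enough to $1$, which Theorem~\ref{thm:BKZ} allows at the cost of a $2^{O(u)}$ factor that is still polynomial for an appropriate $u$, or one can simply afford $\gamma = 1$ exactness in the relevant projected coordinates since $\lat'$ has low rank) that the Gram–Schmidt cutoff at scale $\Theta(\alpha\cdot\dist(\vec t,\lat)) = \Theta(\dist(\vec t,\lat)/\sqrt n)$ lands strictly below $\lambda_n(\lat) \le \mu(\lat)$-sized directions. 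The quantitative input is Lemma~\ref{lem:coveringradius}: $\mu(\lat)^2 \le \frac14\sum_i \length{\gs{\vec b}_i}^2$, so if all $\length{\gs{\vec b}_i}$ were at most $2\alpha\hat d/\gamma$ then $\dist(\vec t,\lat) \le \mu(\lat) \le \frac12\sqrt n \cdot 2\alpha\hat d/\gamma < \dist(\vec t,\lat)$ for $\alpha$ a small enough multiple of $1/\sqrt n$ (using $\hat d \le (1+\alpha)\dist(\vec t,\lat)$), the desired contradiction; hence $k < n$. The other direction (b) is the elementary inclusion from the HKZ property. I would also need to double-check the edge cases $k=0$ (then $\lat' + \vec y = \{\vec y\}$ and $\vec y$ itself must be the unique closest vector, which the $\alpha\text{-}\problem{cCVP}^1$ guarantee delivers) and $\dist(\vec t,\lat)=0$.
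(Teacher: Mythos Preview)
Your overall strategy matches the paper's: compute a $(1+\alpha)$-HKZ basis using the oracle (via Theorem~\ref{thm:HKZtoCVP}), call the oracle to get $\vec y$ close to some exact closest $\vec y'$, identify a proper sublattice $\lat'$ containing $\vec y - \vec y'$, and recurse on $\lat' + \vec y$. However, your specific choice of $\lat'$ has a gap that breaks the rank-reduction step.

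You define $k$ as the largest index with $\|\gs{\vec b}_k\| < 2\alpha\hat d/\gamma$ and then argue that $k = n$ would force \emph{all} $\|\gs{\vec b}_i\|$ below the threshold, yielding a contradiction via Lemma~\ref{lem:coveringradius}. But this inference is false: the Gram--Schmidt lengths of an HKZ basis need not be monotone, so $k = n$ only tells you $\|\gs{\vec b}_n\|$ is small, not that all of them are. For a concrete failure, take a lattice whose HKZ basis satisfies $\|\gs{\vec b}_i\| = 1$ for $i < n$ and $\|\gs{\vec b}_n\| = \epsilon$ very small (e.g., $\Z^{n-1}$ together with the vector $(1/2,\ldots,1/2,\epsilon)$); for a deep-enough target the threshold lands in $(\epsilon, 1)$, so your rule gives $k = n$ and $\lat' = \lat$, yet the contradiction you claim does not fire.

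The fix is immediate and is exactly what the paper does: rather than thresholding from below, pick any index $k$ with $\|\gs{\vec b}_k\| > C'\|\vec y - \vec t\|/\sqrt n$. Such an index exists because $\|\vec y - \vec t\|^2 \le (1+\alpha)^2\mu(\lat)^2 < \sum_i\|\gs{\vec b}_i\|^2 \le n\max_i\|\gs{\vec b}_i\|^2$ by Lemma~\ref{lem:coveringradius}. Then set $\lat' = \lat(\vec b_1,\ldots,\vec b_{k-1})$: rank reduction is now automatic (rank $k-1 \le n-1$), and since $\|\vec y - \vec y'\| \le \alpha\dist(\vec t,\lat) \le (C/\sqrt n)\|\vec y - \vec t\| < \|\gs{\vec b}_k\|/\gamma$ for suitably related constants, the HKZ property gives $\vec y - \vec y' \in \lat'$ as needed. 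Alternatively, you could keep your framing but redefine $k$ as the largest index such that $\|\gs{\vec b}_i\|$ is below the threshold for \emph{all} $i \le k$; then your Lemma~\ref{lem:coveringradius} argument does go through verbatim.
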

\begin{proof}
On input $\lat \subset \R^n$ and $\vec{t} \in \R^n$, the reduction behaves as follows. First, if $n =1$, it solves the one-dimensional CVP instance in the straightforward way. Otherwise, it uses Theorem~\ref{thm:HKZtoCVP} and its cCVP oracle to compute a $(1+\alpha)$-HKZ basis $(\vec{b}_1, \ldots, \vec{b}_n)$ for $\lat$. It then calls its cCVP oracle on input $\lat$ and $\vec{t}$ and receives as output $\vec{y} \in \lat$. Let $(\gs{\vec{b}}_1, \ldots, \gs{\vec{b}}_n)$ be the Gram-Schmidt orthogonalization of the $\vec{b}_i$, and choose any index $k$ such that $ \length{\gs{\vec{b}}_k} > C\length{\vec{y} - \vec{t}}/\sqrt{n}$. Let $\lat' := \lat(\vec{b}_1,\ldots, \vec{b}_{k-1})$. The reduction then calls itself recursively on input $\lat'$ and $\vec{t} - \vec{y}$, receiving as output $\vec{x} \in \lat'$. Finally, it returns $\vec{y} + \vec{x}$.

It is clear that the reduction preserves dimension and runs in polynomial time. If $n = 1$, then correctness is also clear. Otherwise, by Lemma~\ref{lem:coveringradius}, 
$
\length{\vec{y} - \vec{t}}^2 < \sum \length{\gs{\vec{b}}_i}^2 \leq n \max \length{\gs{\vec{b}}_i}^2
$,
so there must exist an index $k$ as above. We assume for induction that the reduction is correct when the dimension of the lattice is less than $n$.  By the definition of cCVP, there is a vector $\vec{y}' \in \lat$ that is closest to $\vec{t}$ in $\lat$ with $\length{\vec{y} - \vec{y}'} \leq C \dist(\vec{t}, \lat)/\sqrt{n}$. Since $\length{\vec{y} - \vec{y}'} < \length{\gs{\vec{b}}_k} = \lambda_1(\lat')$, it follows that $\vec{y}' \in \lat' + \vec{y}$. 
By the induction hypothesis, $\vec{x}$ is a closest vector to $\vec{t} - \vec{y}$ in $\lat'$, and it follows that $\vec{y} + \vec{x}$ is a closest vector to $\vec{t}$ in $\lat' + \vec{y} = \lat' + \vec{y}'$. Therefore, $\vec{y} + \vec{x}$ is a closest vector to $\vec{t}$ in $\lat$, as needed.
\end{proof}

\subsection{Clusters of approximate closest lattice vectors}

\label{sec:clusters} 

We now wish to analyze a natural generalization of Claim~\ref{clm:simplenCVPreduction} that works with $\alpha\text{-}\problem{cCVP}^p$ for arbitrary $p$. In particular, we consider a reduction that solves \problem{CVP} recursively over many shifted sublattices $\lat' + \vec{y}_i$ where the $\vec{y}_i$ are the output of the cCVP oracle and $\lat'$ is some fixed sublattice. Correctness of such an algorithm follows immediately from Claim~\ref{clm:simplenCVPreduction}, but in order to bound the running time, we will need to bound the number of relevant shifts $\lat' + \vec{y}_i$. 

We accomplish this by showing that the approximate closest lattice vectors to $\vec{t}$ form \scarequotes{clusters} according to their cosets mod $2\lat$. This simple fact proves to be quite useful, and we will use it again in the next section to show that our DGS algorithm yields a solution to cCVP. We suspect that it will have other applications as well.

\begin{lemma}
\label{lem:clusters}
For any lattice $\lat \subset \R^n $, $\vec{t} \in \R^n$, $r_1,r_2 > 0$, and
$\vec{w}_1, \vec{w}_2 \in \lat-\vec{t}$ with $\vec{w}_1 \equiv \vec{w}_2
\imod{2\lat}$, if the $\vec{w}_i$ satisfy $\length{\vec{w}_i}^2 < \dist(\vec{t}, \lat)^2 + r_i^2$, then 
$\length{\vec{w}_1 - \vec{w}_2}^2 < 2(r_1^2+r_2^2)$.
\end{lemma}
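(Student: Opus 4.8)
The plan is a two-line argument combining a parity observation with the parallelogram law. First I would note that the hypothesis $\vec{w}_1 \equiv \vec{w}_2 \imod{2\lat}$ means $\vec{w}_1 - \vec{w}_2 \in 2\lat$. Writing $\vec{w}_i = \vec{v}_i - \vec{t}$ with $\vec{v}_i \in \lat$, we have $\vec{v}_1 - \vec{v}_2 \in 2\lat$, and since a sum and a difference of two lattice vectors lie in the same coset of $2\lat$, also $\vec{v}_1 + \vec{v}_2 \in 2\lat$. Hence the midpoint $(\vec{v}_1 + \vec{v}_2)/2$ lies in $\lat$, so $(\vec{w}_1 + \vec{w}_2)/2 = (\vec{v}_1+\vec{v}_2)/2 - \vec{t} \in \lat - \vec{t}$, which gives $\length{\vec{w}_1 + \vec{w}_2}^2 \geq 4\dist(\vec{t},\lat)^2$.

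Next I would apply the parallelogram identity $\length{\vec{w}_1 - \vec{w}_2}^2 = 2\length{\vec{w}_1}^2 + 2\length{\vec{w}_2}^2 - \length{\vec{w}_1 + \vec{w}_2}^2$, substitute the bound from the previous paragraph on the last term, and then substitute the hypotheses $\length{\vec{w}_i}^2 < \dist(\vec{t},\lat)^2 + r_i^2$ into the first two terms. This yields
\[
\length{\vec{w}_1 - \vec{w}_2}^2 < 2(\dist(\vec{t},\lat)^2 + r_1^2) + 2(\dist(\vec{t},\lat)^2 + r_2^2) - 4\dist(\vec{t},\lat)^2 = 2(r_1^2 + r_2^2),
\]
which is exactly the claimed bound.

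There is really no hard step here; the only thing to be careful about is the parity claim that $\vec{v}_1 - \vec{v}_2 \in 2\lat \Rightarrow \vec{v}_1 + \vec{v}_2 \in 2\lat$ (equivalently, that the midpoint of two same-coset lattice points is again a lattice point), and making sure the shift $\vec{t}$ is handled correctly so that the midpoint of $\vec{w}_1$ and $\vec{w}_2$ lands back in $\lat - \vec{t}$ rather than in $\lat - 2\vec{t}$ or similar. This is the same observation underlying the ADRS combiner (and Lemma~\ref{lem:sumofgaussians}). Everything else is the parallelogram law and direct substitution.
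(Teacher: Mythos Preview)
Your proposal is correct and is essentially identical to the paper's proof: both observe that $(\vec{w}_1+\vec{w}_2)/2 \in \lat - \vec{t}$ (so its norm is at least $\dist(\vec{t},\lat)$) and then apply the parallelogram law with the hypotheses on $\length{\vec{w}_i}$. Your version just spells out the parity step in slightly more detail than the paper does.
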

\begin{proof}
Since $\vec{w}_1 \equiv \vec{w}_2 \imod{2\lat}$, we have that $(\vec{w}_1 +
\vec{w}_2)/2 \in \lat-\vec{t}$. Therefore, we have that
\begin{align*}
\|\vec{w}_1-\vec{w}_2\|^2 
   &= 2\|\vec{w}_1\|^2 + 2\|\vec{w}_2\|^2 - 4\|(\vec{w}_1+\vec{w}_2)/2\|^2 \\
   &< 2(\dist(\vec{t}, \lat)^2+r_1^2) + 2(\dist(\vec{t}, \lat)^2+r_2^2) - 4 \dist(\vec{t}, \lat)^2 \\
   &= 2(r_1^2+r_2^2) \; . \qedhere
\end{align*}
\end{proof}

In particular, Lemma~\ref{lem:clusters} shows that there are at most $2^n$ clusters of approximate closest points. We now derive an immediate corollary, which shows that, if the points are very close to $\vec{t}$, then each cluster lies in a shift of a lower-rank sublattice $\lat'$ defined in terms of a $\gamma$-HKZ basis, as we need for our reduction.

\begin{corollary}
\label{cor:sparse-project}
For any $\lat \subset \R^n$ with $\gamma$-HKZ basis $(\vec{b}_1, \ldots, \vec{b}_n)$ for some $\gamma \geq 1$, $\vec{t} \in
\R^n$, and $k \in [n]$, let $\lat' := \lat(\vec{b}_1,\ldots, \vec{b}_{k-1})$.
If $\vec{w}_1, \vec{w}_2 \in \lat - \vec{t}$ with $\vec{w}_1 \equiv \vec{w}_2
\imod{2\lat}$ satisfy $\length{\vec{w}_i}^2 < \dist(\vec{t}, \lat)^2 +
\length{\gs{\vec{b}}_k}^2/\gamma^2$, then $\vec{w}_1 \in \lat' + \vec{w}_2$.
\end{corollary}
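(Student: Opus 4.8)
The plan is to prove the slightly stronger statement that $(\vec{w}_1 - \vec{w}_2)/2 \in \lat'$; since $\lat'$ is a lattice, this immediately gives $\vec{w}_1 - \vec{w}_2 = 2 \cdot \big( (\vec{w}_1 - \vec{w}_2)/2 \big) \in \lat'$, i.e. $\vec{w}_1 \in \lat' + \vec{w}_2$. Set $R := \length{\gs{\vec{b}}_k}/\gamma$ and recall the fact noted just after the definition of $\gamma$-HKZ bases: $\lat' = \lat(\vec{b}_1,\ldots,\vec{b}_{k-1})$ contains every $\vec{y} \in \lat$ with $\length{\vec{y}} < R$. So it suffices to exhibit $(\vec{w}_1 - \vec{w}_2)/2$ as a lattice vector of length strictly less than $R$.

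First I would check membership in the lattice: since $\vec{w}_1, \vec{w}_2 \in \lat - \vec{t}$ we have $\vec{w}_1 - \vec{w}_2 \in \lat$, and the hypothesis $\vec{w}_1 \equiv \vec{w}_2 \imod{2\lat}$ upgrades this to $\vec{w}_1 - \vec{w}_2 \in 2\lat$, so $(\vec{w}_1 - \vec{w}_2)/2 \in \lat$. Next I would bound its length by applying Lemma~\ref{lem:clusters} with $r_1 = r_2 = R$: the hypothesis $\length{\vec{w}_i}^2 < \dist(\vec{t},\lat)^2 + R^2$ is exactly what that lemma requires, and it yields $\length{\vec{w}_1 - \vec{w}_2}^2 < 2(R^2 + R^2) = 4R^2$, hence $\length{(\vec{w}_1 - \vec{w}_2)/2} < R$. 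Feeding this into the $\gamma$-HKZ fact above gives $(\vec{w}_1 - \vec{w}_2)/2 \in \lat'$, which completes the argument.

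The only subtlety is the factor of $2$. A direct appeal to Lemma~\ref{lem:clusters} bounds $\length{\vec{w}_1 - \vec{w}_2}$ only by $2R$, which is too large to force $\vec{w}_1 - \vec{w}_2$ itself into $\lat'$ through the HKZ length bound. The resolution is to use the mod-$2\lat$ congruence a second time --- beyond its role inside Lemma~\ref{lem:clusters} --- to observe that $\vec{w}_1 - \vec{w}_2$ lies in $2\lat$, so the genuinely short lattice vector to test is the halved difference, whose length is below $R$. I expect no further obstacles; in particular the boundary case $k=1$ (where $\lat' = \{\vec{0}\}$ and $R \le \lambda_1(\lat)$ by the first $\gamma$-HKZ condition, so the only vector of length $< R$ is $\vec{0}$) is subsumed by the same reasoning.
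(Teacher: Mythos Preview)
Your proposal is correct and matches the paper's own proof essentially line for line: the paper also sets $\vec{v} = (\vec{w}_1 - \vec{w}_2)/2 \in \lat$, invokes Lemma~\ref{lem:clusters} to get $\length{\vec{v}} < \length{\gs{\vec{b}}_k}/\gamma$, and then uses the $\gamma$-HKZ property (phrased there as $\lambda_1(\pi_{\lat^{\prime\perp}}(\lat)) \geq \length{\gs{\vec{b}}_k}/\gamma$) to conclude $\vec{v} \in \lat'$. Your explicit handling of the factor of $2$ and the $k=1$ case is a bit more careful than the paper's terse argument, but the substance is identical.
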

\begin{proof}
Let $2\vec{v} = \vec{w}_1 - \vec{w}_2 \neq \vec0$. Note that $\vec{v} \in \lat$ by hypothesis, and by Lemma~\ref{lem:clusters}, we have that $\| \vec{v} \| < \length{\gs{\vec{b}}_k}/\gamma $. Since $\lambda_1(\pi_{\lat^{\prime \perp}}(\lat)) \geq \length{\gs{\vec{b}}_k}/\gamma$, it follows that $\vec{v} \in \lat'$, as needed. 
\end{proof}

To achieve our desired running time, we must show that, if $\lat'$ has relatively high rank, there must be significantly fewer than $2^n$ shifts of $\lat'$ that contain approximate closest vectors. This will allow us to bound the number of recursive calls that we make on high-rank sublattices. We accomplish this with the following two technical lemmas.

\begin{lemma} 
\label{lem:sparse-proj-remix}
For any $\lat \subset \R^n$ with $\gamma$-HKZ basis $(\vec{b}_1, \ldots, \vec{b}_n)$ for some $\gamma \geq 1$, $\vec{t} \in \R^n$, and $k \in [n]$, let $\lat' := \lat(\vec{b}_1,\ldots, \vec{b}_{k-1})$. If $r > 0$, $s>0$, and $k \leq \ell \leq n+1$ satisfy 
\begin{equation}
\label{eq:sparse-proj-cond}
r^2 + \frac{(k-1)}{2} \cdot \sum_{i=1}^{k-1}\|\gs{\vec{b}}_i\|^2 \leq \frac{1}{\gamma} \cdot \begin{cases} 
s^2\|\gs{\vec{b}}_k\|^2 &:  \ell = n+1 \\ 
\min \set{ s^2 \|\gs{\vec{b}}_k\|^2,
\|\gs{\vec{b}}_\ell\|^2 } &: \text{ otherwise } \end{cases} 
\; ,
\end{equation}
then we have that
\[
\big| \set{ \vec{c} \in \lat/\lat'\ :\ \dist(\vec{t}, \vec{c})^2 < \dist(\vec{t}, \lat)^2 + r^2} \big| \leq 2^{n-k+1}(2\ceil{2 s}^{\ell-k}-1) 
\; .
\]
\end{lemma}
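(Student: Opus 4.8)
The plan is to bound the number of cosets $\vec{c} \in \lat/\lat'$ that contain a near-closest vector by (1) choosing, inside each such coset, a canonical representative obtained by reducing against the Gram-Schmidt vectors, (2) using the clustering lemma to show these representatives lie in a bounded region, and (3) counting lattice points of $\M := \pi_{\lat'^\perp}(\lat)$ in that region via Theorem~\ref{thm:lat-pt-bnd}. Concretely, suppose $\vec{c}$ is a coset with $\dist(\vec{t},\vec{c})^2 < \dist(\vec{t},\lat)^2 + r^2$, and fix $\vec{w}_{\vec{c}} \in \vec{c}$ achieving (nearly) this distance. Two such representatives $\vec{w}_{\vec{c}_1}, \vec{w}_{\vec{c}_2}$ that additionally lie in the same coset mod $2\lat$ satisfy $\length{\vec{w}_i}^2 < \dist(\vec{t},\lat)^2 + r^2$, so Lemma~\ref{lem:clusters} gives $\length{\vec{w}_{\vec{c}_1}-\vec{w}_{\vec{c}_2}}^2 < 4r^2$; hence $\vec{w}_{\vec{c}_1}-\vec{w}_{\vec{c}_2} \in \lat$ has norm $< 2r$. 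The first case of condition~\eqref{eq:sparse-proj-cond} (with $\ell=n+1$) ensures $2r < \length{\gs{\vec{b}}_k}/\gamma \leq \lambda_1(\M)$, so this difference lies in $\lat'$, i.e. the two cosets of $\lat'$ coincide. This shows that within each of the $\leq 2^{n-k+1}$ cosets of $2\lat$ intersected with the span of $(\vec{b}_k,\dots,\vec{b}_n)$ (really: each coset of $2\lat + \lat'$ inside $\lat/\lat'$), there is at most one relevant coset of $\lat'$ — giving the factor $2^{n-k+1}$ in the bound, and finishing the case $\ell = n+1$ where the second factor $2\lceil 2s\rceil^{0}-1 = 1$.

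For the general case $\ell \leq n$, I need the sharper count $2\lceil 2s\rceil^{\ell-k}-1$ on the number of relevant $\lat'$-cosets within a single $(2\lat+\lat')$-coset. The idea is to pass to $\M = \pi_{\lat'^\perp}(\lat)$, whose first $\ell - k + 1$ successive minima are controlled: $\lambda_1(\M) \geq \length{\gs{\vec{b}}_k}/\gamma$, and more importantly $\lambda_{\ell-k+1}(\M) \leq$ something like $\max_{k \leq i \leq \ell}\length{\gs{\vec{b}}_i} \leq \length{\gs{\vec{b}}_\ell}$ (using that the $\gs{\vec{b}}_i$ form a basis of $\M$ after projection, and the $\gamma$-HKZ property keeps the tail Gram-Schmidt norms from growing past $\length{\gs{\vec{b}}_\ell}$ up to the $\gamma$ factor — this is where the $\length{\gs{\vec{b}}_\ell}^2$ term in~\eqref{eq:sparse-proj-cond} enters). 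A relevant coset $\vec{c}$ of $\lat'$ projects to a point $\pi_{\lat'^\perp}(\vec{w}_{\vec{c}})$ of $\M$ near $\pi_{\lat'^\perp}(\vec{t})$; I must show that the projected distance from $\pi_{\lat'^\perp}(\vec{t})$ is small enough, roughly $\leq s \cdot \lambda_1(\M)$. Here the term $\frac{k-1}{2}\sum_{i=1}^{k-1}\length{\gs{\vec{b}}_i}^2$ on the left-hand side of~\eqref{eq:sparse-proj-cond} accounts for the "lifting correction": a vector $\vec{w}_{\vec{c}} \in \vec{c}$ of minimal norm can be size-reduced against $\vec{b}_1,\dots,\vec{b}_{k-1}$ so that its component in $\spn(\lat')$ has squared norm at most $\frac14 \sum_{i<k}\length{\gs{\vec{b}}_i}^2$ (or the slightly weaker $\frac{k-1}{2}$-type bound the lemma states), and therefore $\length{\pi_{\lat'^\perp}(\vec{w}_{\vec{c}}) - \pi_{\lat'^\perp}(\vec{t})}^2 \leq \length{\vec{w}_{\vec{c}}}^2 - (\text{component in }\spn\lat')^2 \leq \dist(\vec{t},\lat)^2 + r^2 - (\dist(\vec{t},\lat)^2 - \text{correction})$; after cancellation and using~\eqref{eq:sparse-proj-cond} this is $\leq (s\lambda_1(\M))^2$. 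Then Theorem~\ref{thm:lat-pt-bnd} applied to $\M$, together with the fact that all relevant projected points lie in a ball of radius $s\lambda_1(\M)$ centered near $\pi_{\lat'^\perp}(\vec{t})$ and are therefore $\leq 2\lceil 2s\rceil^{\text{rank}}-1$ in number — but I must argue the ambient dimension for this count is effectively $\ell - k$, not $n-k$, because within one $(2\lat+\lat')$-coset the differences lie in $\spn(\gs{\vec{b}}_k,\dots,\gs{\vec{b}}_\ell)$ — gives the $2\lceil 2s\rceil^{\ell-k}-1$ factor. Multiplying by the $\leq 2^{n-k+1}$ bound on $(2\lat+\lat')$-cosets yields the claim.

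The main obstacle I anticipate is pinning down exactly why the relevant projected points, within a fixed coset of $2\lat$, are confined to a ball inside the $(\ell - k)$-dimensional span of $\gs{\vec{b}}_k,\dots,\gs{\vec{b}}_{\ell-1}$ (equivalently, why their pairwise differences avoid the later Gram-Schmidt directions). This requires combining the clustering bound $\length{\vec{w}_1 - \vec{w}_2}^2 < 4r^2$ with the second case of~\eqref{eq:sparse-proj-cond}, namely $r^2 + (\text{correction}) \leq \length{\gs{\vec{b}}_\ell}^2/\gamma$, to conclude that any difference vector $\vec{w}_1 - \vec{w}_2 \in \lat$ with norm $< 2r$ has no component along $\gs{\vec{b}}_i$ for $i \geq \ell$: indeed such a vector projects into $\pi_{\lat(\vec{b}_1,\dots,\vec{b}_{\ell-1})^\perp}(\lat)$, which has minimum $\geq \length{\gs{\vec{b}}_\ell}/\gamma > 2r$, forcing the projection to vanish, i.e. $\vec{w}_1 - \vec{w}_2 \in \lat(\vec{b}_1,\dots,\vec{b}_{\ell-1})$. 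Getting the constants right in the size-reduction correction term and verifying the edge cases ($k=1$, $\ell = k$, $\ell = n+1$) are the remaining routine-but-delicate points.
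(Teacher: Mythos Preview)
Your outline is right—partition the relevant $\lat'$-cosets by their image in $\lat/(2\lat+\lat')$ (which has size $2^{n-k+1}$), then bound the count within each part via the clustering lemma and Theorem~\ref{thm:lat-pt-bnd}. But there is a genuine gap in how you invoke Lemma~\ref{lem:clusters}. That lemma requires $\vec{w}_1\equiv\vec{w}_2\imod{2\lat}$, not merely $\imod{2\lat+\lat'}$. Two near-closest representatives in the same $(2\lat+\lat')$-coset need \emph{not} lie in the same $2\lat$-coset, so you cannot deduce $\length{\vec{w}_1-\vec{w}_2}<2r$ directly; and if you partition by $2\lat$-cosets instead, you get $2^n$ parts, not $2^{n-k+1}$.

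The paper's fix—and the actual role of the term $\tfrac{k-1}{2}\sum_{i<k}\|\gs{\vec{b}}_i\|^2$—is a parity-alignment shift, not size-reduction. If $\vec{w}_1,\vec{w}_2$ lie in the same $(2\lat+\lat')$-coset, then $\vec{w}_1-\vec{w}_2\in 2\lat+\lat'$, so there exist $a_i\in\{-1,0,1\}$ with $\vec{w}_1-\sum_{i<k}a_i\vec{b}_i\equiv\vec{w}_2\imod{2\lat}$; choosing signs greedily costs at most $\sum_{i<k}\|\vec{b}_i\|^2\le(k-1)\sum_{i<k}\|\gs{\vec{b}}_i\|^2$ in squared norm. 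Now Lemma~\ref{lem:clusters} applies with \emph{asymmetric} radii $r_1^2=r^2+(k-1)\sum$ and $r_2^2=r^2$, giving $\|\vec{v}\|^2<r^2+\tfrac{k-1}{2}\sum$ for the half-difference $\vec{v}$—exactly the left side of~\eqref{eq:sparse-proj-cond}. Your interpretation of the correction as controlling the $\spn(\lat')$-component of a single representative is not what is happening.

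Two smaller points. First, you have the cases swapped: the second factor equals $1$ when $\ell=k$ (exponent $\ell-k=0$), not when $\ell=n+1$; and the $\ell=n+1$ case of~\eqref{eq:sparse-proj-cond} does not give $2r<\|\gs{\vec{b}}_k\|/\gamma$. Second, the plan in your middle paragraph of bounding $\|\pi_k(\vec{w}_{\vec{c}}-\vec{t})\|$ cannot work: there is no lower bound on the $\spn(\lat')$-component of $\vec{w}_{\vec{c}}-\vec{t}$, so that projected distance can be as large as $\dist(\vec{t},\lat)$, far beyond $s\lambda_1(\M)$. The paper instead bounds the short half-difference $\pi_k(\vec{v})$, shows via the $\|\gs{\vec{b}}_\ell\|$ branch of~\eqref{eq:sparse-proj-cond} that it lands in the $(\ell-k)$-dimensional lattice $\M=\pi_k(\lat(\vec{b}_k,\ldots,\vec{b}_{\ell-1}))$, observes that $\pi_k(\vec{v})$ determines the $\lat'$-coset of $\vec{w}_1$ once $\vec{w}_2$ is fixed, and then applies Theorem~\ref{thm:lat-pt-bnd}. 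Your final paragraph is headed in this direction, but it too relies on the unavailable bound $\length{\vec{w}_1-\vec{w}_2}<2r$ without the parity shift.
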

\begin{proof}
For each $\vec{d} \in \lat/(2\lat + \lat')$, let 
\[
S_{\vec{d}} := \set{\vec{c} \in \lat/\lat'\ :\ \vec{c} \subset \vec{d} \text{ and } \dist(\vec{t}, \vec{c})^2 < \dist(\vec{t}, \lat)^2 + r^2}
\]
be the set of shifts of $\lat'$ that are subsets of $\vec{d}$ and contain an approximate closest vector. Since $\lat/\lat'$ is a refinement of $\lat/(2\lat + \lat')$ and $|\lat/(2\lat + \lat')| = 2^{n-k+1}$, it suffices to show that $|S_{\vec{d}}| \leq (2\ceil{2s}^{l-k}-1)$ for all $\vec{d} \in \lat/(2\lat + \lat')$.

Fix $\vec{d}$. Let $\vec{w}_1, \vec{w}_2 \in \vec{d} - \vec{t}$. Suppose $\length{\vec{w}_i}^2 < \dist(\vec{t}, \lat)^2 + r^2$.
A simple computation shows that there exist $a_1,\ldots, a_{k-1} \in \{-1,0,1\}$ such that $\vec{w}_1 - \sum_{i=1}^{k-1} a_i \vec{b}_i \equiv \vec{w}_2 \imod{2\lat}$ and
\[
\Big\| \vec{w}_1 - \sum_{i=1}^{k-1} a_i \vec{b}_i \Big\|^2 \leq \length{\vec{w}_1}^2 + \sum_{i=1}^{k-1} \length{\vec{b}_i}^2 < \dist(\vec{t}, \lat)^2 + r^2 + \sum_{i=1}^{k-1} \length{\vec{b}_i}^2 
\;.
\]
Since the $\vec{b}_i$ are $\gamma$-HKZ, we have that $\length{\vec{b}_i}^2 \leq \length{\gs{\vec{b}}_i}^2 + \frac{1}{4} \sum_{i=j}^{i-1}\length{\gs{\vec{b}}_j}^2$. Therefore, 
\[
\Big\| \vec{w}_1 - \sum_{i=1}^{k-1} a_i \vec{b}_i \Big\|^2 < \dist(\vec{t}, \lat)^2 + r^2  + (k-1)\sum_{i=1}^{k-1} \length{\gs{\vec{b}_i}}^2
\; .
\]
Let $2\vec{v} := \vec{w}_1 - \sum_{i=1}^{k-1} a_i \vec{b}_i - \vec{w}_2 \in 2\lat$. Since $\vec{w}_1 - \sum_{i=1}^{k-1} a_i \vec{b}_i \equiv \vec{w}_2 \imod{2\lat}$, we may apply Lemma~\ref{lem:clusters} to obtain
\[
\length{\vec{v}}^2 < r^2 + \frac{k-1}{2} \cdot \sum_{i=1}^{k-1} \length{\gs{\vec{b}_i}}^2 \; .
\]
Let $\pi_k := \pi_{\set{\vec{b}_1,\ldots, \vec{b}_{k-1}}^\perp}$ and $\M := \pi_k(\lat(\vec{b}_k,\ldots, \vec{b}_{\ell-1}))$. From the above, we have
\[
\length{\pi_k(\vec{v})}^2 \le \length{\vec{v}}^2 < r^2 + \frac{k-1}{2} \cdot \sum_{i=1}^{k-1} \length{\gs{\vec{b}_i}}^2
\; .
\]
Recalling the constraint on $\ell$ imposed by Eq.~\eqref{eq:sparse-proj-cond}, this implies that $\pi_k(\vec{v}) \in \M$. Furthermore, note that $\vec{w}_1 \in \lat' + \vec{w}_2$ if and only if $\pi_k(\vec{w}_1 - \vec{w}_2) = \pi_k(\vec{v}) = \vec0$. Therefore,
\[
|S_{\vec{d}}| \leq \Big| \Big\{ \vec{y} \in \M\ :\ \length{\vec{y}} < r^2 + \frac{k-1}{2} \cdot \sum_{i=1}^{k-1} \length{\gs{\vec{b}_i}}^2 \Big\} \Big|
\; .
\]

Finally, note that $\lambda_1(\M) \geq \length{\gs{\vec{b}}_k}/\gamma$. By Eq.~\eqref{eq:sparse-proj-cond} the length bound in the above equation is at most $s \lambda_1(\M)$. The result then follows from applying Theorem~\ref{thm:lat-pt-bnd} and noting that $\dim \M = \ell - k$.
\end{proof}

This next lemma shows that we can choose an index $k$ such that either $\lat'$ has fairly small rank or relatively few shifts of $\lat'$ contain approximate closest vectors.

\begin{lemma}
\label{lem:good-index}
For any lattice $\lat \subset \R^n$ with $\gamma$-HKZ basis $(\vec{b}_1,\ldots, \vec{b}_n)$ for some $n \geq 2$ and $1 \leq \gamma \leq 1+\frac{1}{10n^2}$, any efficiently computable function $f : \Z^+ \mapsto \Z^+$, and
 \[
 r := n^{-2f(n)} \max_{i \in [n]} \|\gs{\vec{b}}_i\| %
\; , 
\]
there exists $k \in [n]$ such that
 if $\lat' := \lat(\vec{b}_1,\ldots, \vec{b}_{k-1})$, then $\length{\gs{\vec{b}}_k} \geq \gamma \cdot \frac{\mu(\lat)}{n^{2f(n)}}$ and
\[
\big| \set{ \vec{c} \in \lat/\lat'\ :\ \dist(\vec{t}, \vec{c})^2 <
\dist(\vec{t}, \lat)^2 + r^2} \big| \leq \begin{cases} 2^{n-k+1}&:  \text{ if }\; n-f(n) < k
\leq n \\ 2^{n-k+2}n^{n/f(n)}&: \text{ otherwise} \end{cases}
\]
Furthermore, the index $k$ can be computed efficiently from the $\vec{b}_i$.
\end{lemma}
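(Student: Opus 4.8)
The plan is to apply Lemma~\ref{lem:sparse-proj-remix} to a carefully chosen index $k$: either one for which $\|\gs{\vec b}_k\|^2$ dominates the weighted prefix sum $\tfrac{k-1}{2}\sum_{i<k}\|\gs{\vec b}_i\|^2$ so strongly that, with $\ell=k$ and arbitrarily large $s$, the lemma gives the sharp bound $2^{n-k+1}$; or one where the domination holds only up to a factor whose cost in the lattice‑point count of Theorem~\ref{thm:lat-pt-bnd} is $n^{n/f(n)}$. First I record two estimates. Writing $B:=\max_i\|\gs{\vec b}_i\|$ (so $r=B/n^{2f(n)}$) and $\tau:=(\gamma\,\mu(\lat)/n^{2f(n)})^2$, Lemma~\ref{lem:coveringradius} gives $\mu(\lat)^2\le\tfrac14\sum_i\|\gs{\vec b}_i\|^2\le\tfrac n4B^2$, and since the basis is $\gamma$‑HKZ, $\mu(\lat)\ge\mu\bigl(\pi_{\{\vec b_1,\dots,\vec b_{i-1}\}^\perp}(\lat)\bigr)\ge\tfrac12\lambda_1\bigl(\pi_{\{\vec b_1,\dots,\vec b_{i-1}\}^\perp}(\lat)\bigr)\ge\tfrac1{2\gamma}\|\gs{\vec b}_i\|$ for every $i$, so $B\le 2\gamma\mu(\lat)$. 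Hence $\sum_i\|\gs{\vec b}_i\|^2\ge 4\mu(\lat)^2\ge B^2/\gamma^2$, and $r^2/4\le\tau\le\tfrac14\gamma^2n\,r^2$, $\tau\le B^2/n^{4f(n)-1}$. In particular condition~1 is exactly ``$\|\gs{\vec b}_k\|^2\ge\tau$'', and is implied by $\|\gs{\vec b}_k\|^2\gtrsim r^2$.

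\emph{First case.} Suppose some $k\in(n-f(n),n]$ has $\|\gs{\vec b}_k\|^2\ge\tau$ and $r^2+\tfrac{k-1}2\sum_{i<k}\|\gs{\vec b}_i\|^2\le\|\gs{\vec b}_k\|^2/\gamma$; take the largest such $k$ (testing the $\le n$ indices of the top block makes $k$ efficiently computable). Apply Lemma~\ref{lem:sparse-proj-remix} with this $k$, $\ell=k$, and any $s\ge1$: hypothesis~\eqref{eq:sparse-proj-cond} reads $r^2+\tfrac{k-1}2\sum_{i<k}\|\gs{\vec b}_i\|^2\le\tfrac1\gamma\min\{s^2\|\gs{\vec b}_k\|^2,\|\gs{\vec b}_k\|^2\}=\|\gs{\vec b}_k\|^2/\gamma$, which holds, and the conclusion is the bound $2^{n-k+1}(2\lceil2s\rceil^{\,0}-1)=2^{n-k+1}$ — exactly the first case of the conclusion, with condition~1 holding by construction.

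\emph{Second case.} No such $k$ exists. Then $k^\dagger:=\min\{i:\|\gs{\vec b}_i\|^2\ge\tau\}$ satisfies $k^\dagger\le n-f(n)$: for each $j\in(n-f(n),n]$, failure of $j$ in the first case forces $\|\gs{\vec b}_j\|^2<\gamma\bigl(r^2+\tfrac n2\Sigma_j\bigr)$ where $\Sigma_j:=\sum_{i<j}\|\gs{\vec b}_i\|^2$, so $\Sigma_{j+1}\le(1+\gamma n)\Sigma_j+\gamma r^2$; iterating over the top block gives $\Sigma_{n+1}\le(2\gamma n)^{f(n)}(\Sigma_{n-f(n)+1}+r^2)$, and comparing with $\Sigma_{n+1}\ge B^2/\gamma^2$, $r^2=B^2/n^{4f(n)}$, $\tau\le B^2/n^{4f(n)-1}$ (and handling $f(n)\in\{1,2\}$ and tiny $n$ separately) yields $\Sigma_{n-f(n)+1}\ge(n-f(n))\tau$, hence $\max_{i\le n-f(n)}\|\gs{\vec b}_i\|^2\ge\tau$ and $k^\dagger\le n-f(n)$. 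Now I choose $k\le n-f(n)$ with $\|\gs{\vec b}_k\|^2\ge\tau$ so that Lemma~\ref{lem:sparse-proj-remix} yields $2^{n-k+2}n^{n/f(n)}$: for a suitable $C=\poly(n)$, consider the scale indices $k_t:=\min\{i:\|\gs{\vec b}_i\|^2\ge n^{t/C}\tau\}$, which are nondecreasing, run over $\gtrsim Cf(n)$ values (as $B^2/\tau\ge n^{4f(n)-1}$), and telescope ($k_{t_{\max}}-k_0\le n$); a pigeonhole over windows then produces $t$ with $k=k_t\le n-f(n)$, $\|\gs{\vec b}_k\|^2\ge\tau$, $\Sigma_k$ small relative to $\|\gs{\vec b}_k\|^2$, and a nearby scale $t'$ with $\|\gs{\vec b}_{k_{t'}}\|^2\ge\gamma(r^2+\tfrac{k-1}2\Sigma_k)$ and $k_{t'}-k_t\le n/(2f(n))$. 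Applying Lemma~\ref{lem:sparse-proj-remix} with this $k$, $\ell=k_{t'}$, and $s=n\sqrt\gamma$ verifies~\eqref{eq:sparse-proj-cond} and gives the bound $2^{n-k+1}\bigl(2\lceil2s\rceil^{\,\ell-k}-1\bigr)\le2^{n-k+2}(3n)^{n/(2f(n))}\le2^{n-k+2}n^{n/f(n)}$; condition~1 is immediate. (When the maximizer of $\|\gs{\vec b}_i\|$ lies in the top block one instead uses $k=k^\dagger$ with $\ell=n+1$, which works because $\Sigma_{k^\dagger}$ is then negligible.)

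\textbf{Main obstacle.} The delicate point is the pigeonhole step in the second case: one must simultaneously keep $k\le n-f(n)$ (so the conclusion we land in is the second, lossy one) \emph{and} keep the scale‑gap $\ell-k$ small enough that the rank‑$(\ell-k)$ lattice $\M=\pi_k(\lat(\vec b_k,\dots,\vec b_{\ell-1}))$ contributes only the permitted factor $n^{n/f(n)}$ to the count from Theorem~\ref{thm:lat-pt-bnd}. Balancing the window width (which must span a ratio $\gtrsim n^2$ in Gram--Schmidt norm) against the number of available scales, tuning the constant $C$ so that $(3n)^{\ell-k}\le n^{n/f(n)}$ survives, and disposing of the corner cases ($f(n)$ small, $n$ small, maximizer in the top block) is where the real work lies; everything else is bookkeeping with the universal constants.
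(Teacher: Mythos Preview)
Your overall strategy---apply Lemma~\ref{lem:sparse-proj-remix} after a pigeonhole on Gram--Schmidt scales---is the right one, but your execution is substantially more tangled than necessary, and the decisive pigeonhole step (which you yourself flag as the ``main obstacle'') is not actually carried out. The paper's argument is much cleaner, and the difference is instructive.

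The paper does not split according to whether a good $k$ lives in the top block $(n-f(n),n]$. Instead it defines, for $0\le j<2f(n)$, the index $m_j$ to be the \emph{smallest} $i$ with $\|\gs{\vec b}_i\|\ge\gamma R/n^j$ (where $R=\max_i\|\gs{\vec b}_i\|$). The point of this definition is that the prefix sum is \emph{automatically} controlled: since every $i<m_j$ has $\|\gs{\vec b}_i\|<\gamma R/n^j$, one gets directly
\[
r^2+\tfrac{m_j-1}{2}\sum_{i<m_j}\|\gs{\vec b}_i\|^2 < \frac{R^2}{n^{2j-2}}\,.
\]
This single estimate replaces your recursive bound on $\Sigma_j$ and all the attendant case analysis. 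The dichotomy is then: either some consecutive pair coincides, $m_{j-1}=m_j$ for $j\le f(n)$, in which case $\|\gs{\vec b}_{m_j}\|\ge\gamma R/n^{j-1}$ and Lemma~\ref{lem:sparse-proj-remix} applies with $\ell=k$ to give $2^{n-k+1}$ (which satisfies \emph{both} branches of the conclusion, so there is no need to force $k$ into the top block); or else $m_0>m_1>\cdots>m_{f(n)}$ are all distinct, forcing $m_{f(n)}\le n-f(n)$, and a one-line pigeonhole among the remaining $f(n)$ levels gives some $j$ with $m_{j-1}-m_j<n/f(n)$, so Lemma~\ref{lem:sparse-proj-remix} with $k=m_j$, $\ell=m_{j-1}$, $s=n$ yields the lossy bound.

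By contrast, your first case is artificially restricted to $k>n-f(n)$, which then obliges you in the second case to first argue $k^\dagger\le n-f(n)$ via an iterated recursion on $\Sigma_j$ (and your implication ``failure of $j$ forces $\|\gs{\vec b}_j\|^2<\gamma(r^2+\tfrac n2\Sigma_j)$'' is not quite right, since failure could instead be $\|\gs{\vec b}_j\|^2<\tau$), and then to run a second, vaguer pigeonhole on scale indices $k_t$ with an unspecified $C=\poly(n)$. None of this is needed: the paper's choice of $m_j$ bakes the prefix-sum control into the definition and makes the pigeonhole a triviality.
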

\begin{proof}
Let $R := \max_{i \in [n]} \|\gs{\vec{b}}_i\| = n^{2f(n)} r$. Define $m_j \in [n]$ for $0 \le j < 2f(n)$ to be the smallest index $i$ such that $ \|\gs{\vec{b}}_{i}\| \ge \gamma \cdot \frac{R}{n^j}$. Then, by definition, we have that $m_0 \ge m_1 \ge \cdots \ge m_{2f(n)-1}$.
Furthermore,
\begin{align}
\label{eq:gap}
r^2 + \frac{m_j-1}{2} \cdot \sum_{i=1}^{m_j-1} \|\gs{\vec{b}}_i\|^2 & < R^2 \cdot \Big(\frac{1}{n^{4f(n)}} + \gamma^2 \cdot \frac{(m_j-1)^2}{n^{2j}} \Big) \nonumber \\  
 & \leq \frac{R^2}{n^{2j}} \cdot \Big(\frac{1}{n^{4f(n)-2j}} + \gamma^2 \cdot (n-1)^2 \Big) \nonumber \\
& < \frac{ R^2}{n^{2j-2}} \;.                                                              
\end{align}
First, consider the case when there exists $j \leq f(n)$ such that $m_{j} = m_{j-1}$. In this case, we claim that the required index is $k = m_j$. To see this, simply note that $\|\gs{\vec{b}}_k\| \ge \gamma \cdot \frac{R}{n^{j-1}}$ by definition. Then, by Eq.~(\ref{eq:gap}), the conditions for Lemma~\ref{lem:sparse-proj-remix} are satisfied with $\ell = k$ and $s = n$.  Applying Lemma~\ref{lem:coveringradius} gives $\|\gs{\vec{b}}_k\| > \gamma \frac{\mu(\lat)}{n^{2f(n)}}$, as needed.

So, it suffices to assume that  $m_0 > m_1 > \cdots > m_{f(n)}$. In this case, clearly $m_{f(n)} \le n - f(n)$. Now, by the pigeonhole principle, there exists $j \in \{f(n), f(n) + 1, \ldots, 2f(n)-1\}$ such that $m_{j-1} - m_j< \frac{n}{f(n)}$. Then, let $k = m_j$, and $\ell = m_{j-1}$. Noting the fact that $\|\gs{\vec{b}}_k\| \ge \frac{R}{n^{j}}$ and $\|\gs{\vec{b}}_\ell\| \ge \frac{R}{n^{j-1}}$, the bound on the number of shifts follows from Lemma~\ref{lem:sparse-proj-remix} and Eq.~(\ref{eq:gap}). The bound on $\|\gs{\vec{b}}_k\|$ again follows from applying Lemma~\ref{lem:coveringradius}.
\end{proof}

\subsection{The reduction}

We can now present our more general reduction. We note in passing that, if the cCVP oracle happens to output a nearby point for each exact closest lattice vector, then (a minor modification of) our reduction actually finds all closest vectors.

\begin{theorem}
\label{thm:CVPtonCVP}
For any constant $\delta \in [0,1)$, there is a reduction from exact CVP to $\alpha\text{-}\problem{cCVP}^p$ where $\alpha(n) :=  1/(10n^{4n^{\delta} +1})$ such that the maximal number of oracle calls that the reduction makes on lattices of dimension $d$ when the input lattice has dimension $n$ is 
\[
g(n,d) \leq \min\Big\{ 2^{n-d+O(n^{2-2\delta} \log n)},\ \poly(n) \prod_{i=d+1}^n p(i) \Big\} 
\; .
\] The running time of the reduction is $\poly(n) \cdot \sum_d p(d) g(n,d)$.
\end{theorem}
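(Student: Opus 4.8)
The plan is to build the reduction recursively, with one layer of recursion peeling off a block of coordinates using the cCVP oracle and the structural results of Section~\ref{sec:clusters}. On input a lattice $\lat \subset \R^n$ and target $\vec{t}$, if $n = 1$ we solve directly. Otherwise, we first use the $\alpha\text{-}\problem{cCVP}^p$ oracle together with the reduction of Theorem~\ref{thm:HKZtoCVP} (which reduces computing a $\gamma$-HKZ basis to $\gamma$-CVP, and note that $(1+\alpha)\text{-}\problem{CVP}$ trivially reduces to $\alpha\text{-}\problem{cCVP}^1$, hence to $\alpha\text{-}\problem{cCVP}^p$) to compute a $\gamma$-HKZ basis $(\vec{b}_1,\ldots,\vec{b}_n)$ for $\gamma = 1+\alpha \leq 1+\frac{1}{10n^2}$. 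We then invoke Lemma~\ref{lem:good-index} with $f(n)$ chosen so that $n^{2f(n)} \approx 1/\alpha(n)$, i.e.\ $f(n) = \Theta(n^\delta)$, to obtain an index $k$ and the sublattice $\lat' := \lat(\vec{b}_1,\ldots,\vec{b}_{k-1})$ with $\length{\gs{\vec{b}}_k} \geq \gamma\,\mu(\lat)/n^{2f(n)}$. Next, we call the $\alpha\text{-}\problem{cCVP}^p$ oracle on $(\lat,\vec{t})$, receiving vectors $\vec{y}_1,\ldots,\vec{y}_{\hat p}$; we discard any $\vec{y}_i$ that is not a $(1+\alpha)^2$-approximate closest vector (as permitted in the remark after Definition~\ref{def:nCVP-new}), group the survivors by their coset in $\lat/\lat'$, pick one representative per distinct coset, and recurse on each $(\lat'+\vec{y}_i,\ \vec{t})$ — equivalently on $(\lat',\ \vec{t}-\vec{y}_i)$. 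Finally we output the closest point to $\vec{t}$ among all the results.

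For correctness, by the definition of $\alpha\text{-}\problem{cCVP}^p$ there is an index $j$ and an exact closest vector $\vec{y}'$ with $\length{\vec{y}_j - \vec{y}'} \leq \alpha\cdot\dist(\vec{t},\lat)$. Since $\dist(\vec{t},\lat)\leq \mu(\lat)$ and $\alpha < \length{\gs{\vec{b}}_k}^2/\gamma^2$-type bounds hold by the choice of $f$, Corollary~\ref{cor:sparse-project} (applied with $\vec{w}_1 = \vec{y}'-\vec{t}$, $\vec{w}_2 = \vec{y}_j-\vec{t}$, which lie in the same coset mod $2\lat$ — this is forced because $\vec{y}_j$ was retained, so both are approximate closest vectors in the same cluster, and being in the same coset mod $2\lat$ is exactly the clustering criterion of Lemma~\ref{lem:clusters}) gives $\vec{y}' \in \lat' + \vec{y}_j$. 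Wait — more carefully: we only know $\vec{y}_j$ and $\vec{y}'$ are within $\alpha\dist(\vec{t},\lat)$; $\length{\vec{y}_j - \vec{y}'} < \length{\gs{\vec{b}}_k}/\gamma = \lambda_1(\pi_{{\lat'}^\perp}(\lat))$ by the bound on $\length{\gs{\vec{b}}_k}$, so $\pi_{{\lat'}^\perp}(\vec{y}_j - \vec{y}') = \vec0$, i.e.\ $\vec{y}' \in \lat' + \vec{y}_j$. Hence the recursive call on $(\lat',\vec{t}-\vec{y}_j)$ — whose representative coset survives the deduplication — returns a closest vector $\vec{x}$ to $\vec{t}-\vec{y}_j$ in $\lat'$, and $\vec{y}_j + \vec{x}$ is a closest vector to $\vec{t}$ in $\lat'+\vec{y}_j \ni \vec{y}'$, hence in $\lat$; taking the overall closest output point is then correct. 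The base case and the induction on dimension mirror Claim~\ref{clm:simplenCVPreduction}.

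For the oracle-call count $g(n,d)$, the recursion tree branches at each node into at most (number of distinct cosets of $\lat'$ containing a retained approximate closest vector) children, and the rank drops from $n$ to $k-1$. The crude bound $\prod_{i=d+1}^n p(i)\cdot\poly(n)$ follows since each call produces at most $p(i)$ vectors hence at most $p(i)$ children at rank-$i$ nodes, and the HKZ subcalls contribute only $\poly$ factors. For the sharper bound, Lemma~\ref{lem:good-index} guarantees that the branching factor at a node of rank $n$ with child-rank $k-1$ is at most $2^{n-k+1}$ when $k > n-f(n)$, and at most $2^{n-k+2} n^{n/f(n)}$ otherwise; the first case is a clean doubling bound that telescopes to $2^{n-d}$ over a chain of such steps, while the second case is rare — it can only occur when the rank drops by more than $f(n)$, so it happens at most $n/f(n)$ times along any root-to-leaf path, each time contributing an extra factor $n^{n/f(n)}$, for a total overhead of $(n^{n/f(n)})^{n/f(n)} = n^{n^2/f(n)^2} = 2^{O(n^{2-2\delta}\log n)}$ with $f(n) = \Theta(n^\delta)$. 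Combining, $g(n,d) \leq 2^{n-d + O(n^{2-2\delta}\log n)}$. The running time is then $\poly(n)\cdot\sum_d p(d) g(n,d)$, since each rank-$d$ oracle call costs $\poly(d)$ beyond the oracle itself and there are $g(n,d)$ of them. The main obstacle I expect is the bookkeeping in the recursion-tree argument: one must carefully track, level by level, how the two cases of Lemma~\ref{lem:good-index} interleave (the "rare bad drop" versus "clean doubling" dichotomy) and verify that the bad case genuinely cannot recur more than $n/f(n)$ times along a path, so that the overheads multiply rather than compound uncontrollably — and simultaneously check that the $\gamma$-HKZ basis recomputation at each recursive node does not blow up the $\poly$ factors or violate the $\gamma \leq 1+\frac{1}{10n^2}$ hypothesis of Lemma~\ref{lem:good-index} at every recursion depth.
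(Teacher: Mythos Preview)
Your proposal is correct and follows essentially the same approach as the paper: the same recursive reduction (compute a $(1+\alpha)$-HKZ basis via Theorem~\ref{thm:HKZtoCVP}, select the index $k$ via Lemma~\ref{lem:good-index} with $f(n)=n^\delta$, call the oracle, group by coset of $\lat'$, recurse), and the same correctness argument reducing to Claim~\ref{clm:simplenCVPreduction}.

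The one genuine difference is in how you establish the sharp bound $g(n,d)\le 2^{n-d+O(n^{2-2\delta}\log n)}$. The paper proves this by induction on $n$ with a potential term $C^* n^{2-2\delta}\log n$ in the exponent: in the bad case $\hat n < n-f(n)$, the overhead $n^{n/f(n)}$ is absorbed because $(n-f(n))^{2-2\delta}\le n^{2-2\delta}-(2-2\delta)n^{1-2\delta}f(n)$, and $C^*(2-2\delta)n^{1-\delta}\log n$ dominates $n^{1-\delta}\log_2 n$ for suitable $C^*$. You instead count bad steps along a root-to-leaf path and multiply their overheads. Your argument is valid, but note one wrinkle you correctly flag at the end: at a node of rank $m$ the bad-case threshold is $f(m)=m^\delta$, not the root value $f(n)$, so the claim ``at most $n/f(n)$ bad steps'' does not follow from simple division. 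It does follow from the integral estimate $\sum_i 1 \le \int_1^n \frac{dx}{x^\delta}=O(n^{1-\delta})$ applied to a sequence $r_{i+1}<r_i-r_i^\delta$, after which bounding each overhead by $n^{n^{1-\delta}}$ gives the desired $2^{O(n^{2-2\delta}\log n)}$. The paper's inductive framing avoids this detour by baking the potential directly into the hypothesis; your path-counting is more transparent about where the exponent comes from but requires this extra estimate.
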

\begin{proof}
The reduction behaves quite similarly to the simple procedure from Claim~\ref{clm:simplenCVPreduction}. The only difference is that this new reduction chooses $\lat'$ more carefully and makes recursive calls on many shifts of $\lat'$ corresponding to the many outputs of its $\alpha\text{-}\problem{cCVP}^p$ oracle. In particular, on input $\lat \subset \R^n$ and $\vec{t} \in \R^n$, the reduction behaves as follows. First, if $n = 1$, it solves the one-dimensional CVP instance in the straightforward manner. Otherwise, it uses Theorem~\ref{thm:HKZtoCVP} and its oracle to compute a $(1+\alpha)$-HKZ basis $(\vec{b}_1, \ldots, \vec{b}_n)$ for $\lat$. It then calls its oracle on input $\lat$ and $\vec{t}$ and receives as output $\vec{y}_1,\ldots, \vec{y}_{\hat{p}} \in \lat$. As we noted below Definition~\ref{def:nCVP-new}, we may assume without loss of generality that
\begin{equation}
\label{eq:additivedistancebound}
\length{\vec{y}_j - \vec{t}}^2 
\leq (1+\alpha(n))^4 \dist(\vec{t}, \lat)^2 < \dist(\vec{t}, \lat)^2 + n^{-4n^\delta} \max_{i \in [n]} \|\gs{\vec{b}}_i\|^2 \; .
\end{equation}

The reduction then computes the index $k$ as in Lemma~\ref{lem:good-index} with $f(n) := n^\delta$. Let $\lat' := \lat(\vec{b}_1,\ldots, \vec{b}_{k-1})$. The reduction groups the $\vec{y}_i$ according to their coset mod $\lat'$. For each such coset $\vec{c}$, it picks an arbitrary representative $\vec{y}_{\vec{c}} \in \vec{c}$ and calls itself recursively on input $\lat'$ and $\vec{t} - \vec{y}_{\vec{c}}$, receiving as output $\vec{x}_{\vec{c}}$. Finally, it outputs the closest $\vec{x}_{\vec{c}} + \vec{y}_{\vec{c}}$ to $\vec{t}$.

Correctness follows immediately from the proof of Claim~\ref{clm:simplenCVPreduction}. In particular, consider a sequence of recursive calls such that the corresponding $\vec{y}_{\vec{c}}$ represent valid solutions to their respective $\alpha\text{-}\problem{cCVP}^1$ instances and note that the reduction behaves identically to the procedure from Claim~\ref{clm:simplenCVPreduction} along this sequence.

The statement about the running time is clear. We now analyze the number of recursive calls. Consider a single thread with $\dim \lat = n$ and $\dim \lat' = \hat{n}$. The total number of recursive calls made by this thread is
\begin{align}
L(n, \hat{n}) &:= \big| \set{ \vec{c} \in \lat/\lat'\ :\ \exists\ i \text{ with } \vec{y}_{i} \in \vec{c}} \big| \nonumber \\
&\leq 
\min \Big\{ p(n)\ ,\ 
\big| \set{ \vec{c} \in \lat/\lat'\ :\ \dist(\vec{t}, \vec{c})^2 <\dist(\vec{t}, \lat)^2 + n^{-2f(n)}} \big| 
\Big\}
\label{eq:Lbound}
\; .
\end{align}
Note that $g(n,d)$ satisfies the recurrence relation
\begin{equation}
g(n, d) \le \max_{d \leq \hat{n} < n} L(n, \hat{n}) g(\hat{n},d) 
\; ,
\label{eq:grecurrence}
\end{equation}
with base case $g(d,d) = \poly(n)$.
The bound $g(n,d) \leq \poly(n) \prod_{i=d+1}^n p(i)$ follows immediately from the fact that $L(n,\hat{n}) \leq p(n)$.

Now, we wish to prove by induction that for any $d$ and $n$, we have $g(n,d) \leq 2^{n-d+C^*n^{2 - 2\delta}\log n}$ for some constant $C^*$. For $n=1$ or $d=n$, this is trivial. Suppose that the induction hypothesis holds for dimensions less than $n$. 
By Eq.~\eqref{eq:grecurrence}, it suffices to prove that $L(n, \hat{n}) g(\hat{n},d) \leq 2^{n-d+C^*n^{2 - 2\delta}\log n}$ for all $\hat{n} < n$. Note that Eq.~\eqref{eq:additivedistancebound} gives us the bound that we need to apply Lemma~\ref{lem:good-index}. Plugging the lemma into Eq.~\eqref{eq:Lbound}, we have 
\[
L(n, \hat{n}) \leq 
\begin{cases} 2^{n-\hat{n}}
&:  \text{ if }\; \hat{n} \geq n-f(n)  \\ 
2^{n-\hat{n}+1}n^{n/f(n)}
&: \text{ otherwise} \end{cases}
\; .
\]
If $\hat{n} \geq n-f(n)$, then by this bound and the induction hypothesis,
\[
L(n, \hat{n}) g(\hat{n},d)  \leq 2^{n-\hat{n}} \cdot g(\hat{n},d) \leq 2^{n-d+C^* n^{2-2\delta} \log n}
\; ,
\]
as needed. Otherwise, $\hat{n} < n-f(n)$, and we have
\begin{align*}
L(n, \hat{n}) g(\hat{n},d) 
&\leq 2^{n-d+1 + C^* \hat{n}^{2-2\delta}\log n}n^{n/f(n)}\\
&\leq 2^{n-d+1 + C^* (n-f(n))^{2-2\delta}\log n + n\log_2 n/f(n)}\\
&\leq 2^{n-d+1 + C^* n^{2-2\delta} - C^*(2-2\delta) n^{1-2\delta}f(n)\log n + n\log_2 n/f(n)}\\
&\leq 2^{n-d + C^* n^{2-2\delta}}
\; ,
\end{align*}
as needed.
\end{proof}
\section{Finishing the proof}
\label{sec:DGSsolvesnCVP}
\subsection{The mass of cosets with closest vectors}

We now show that our DGS algorithm yields a solution to $\alpha\text{-}\problem{cCVP}^p$, i.e., that one of its output vectors will be very close to an exact shortest vector in the shifted lattice with high probability when called with appropriate parameters. (See Definition~\ref{def:nCVP-new}.) By our \scarequotes{cluster} analysis in Section~\ref{sec:clusters}, this reduces to showing that one of the output vectors will be a short vector that is in the same coset of $2\lat$ as a shortest vector. Since the number of samples returned by our algorithm is essentially the number that we need to \scarequotes{see each coset with relatively high Gaussian mass,} it would suffice to show that any coset of $2\lat - \vec{t}$ that contains a shortest vector must have high mass. Instead, we are only able to prove the slightly weaker (but still sufficient) fact that for a suitable list of parameters $s_1,\ldots, s_\ell$, each such coset has high mass with respect to the discrete Gaussian with at least one of these parameters. (See Corollary~\ref{cor:bigcoset}.)

\begin{lemma}
\label{lem:bigcoset1}
Let $\lat \subset \R^n$ be a lattice and $\vec{t} \in \R^n$ with $\vec{y} \in \lat$ a closest vector to $\vec{t}$ in $\lat$. Then, for any  $s > 0$,
\[ 
1 \le \frac{ \max_{\vec{c} \in \lat/(2\lat)}\rho_{s}(\vec{c} - \vec{t})}{\rho_{s}(\vec{y} - \vec{t}) \cdot \rho_{s}(2\lat)} \le \frac{\prod_{j=1}^{\infty} \rho_{2^{-j/2}s}(\lat)^{1/2^j}}{\rho_{s}(2\lat)} \le 2^{n/4} 
\; .
\]
\end{lemma}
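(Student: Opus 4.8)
The plan is to prove the three inequalities from left to right, with the middle one being the heart of the matter. Write $d := \dist(\vec{t},\lat) = \length{\vec{y}-\vec{t}}$ (using that $\vec{y}$ is a closest vector) and $\Phi(s) := \max_{\vec{c}\in\lat/(2\lat)}\rho_s(\vec{c}-\vec{t})$, and recall the elementary scaling identities $\rho_s(\vec{v})^2 = \rho_{s/\sqrt2}(\vec{v})$ for any $\vec{v}\in\R^n$ and $\rho_s(2\lat) = \rho_{s/2}(\lat)$. For the \emph{left} inequality, let $\vec{c}_0 := 2\lat + \vec{y}$ and $\vec{w}_0 := \vec{y}-\vec{t}$, so that $\vec{c}_0 - \vec{t} = 2\lat + \vec{w}_0$. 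Pairing each $\vec{z}\in 2\lat$ with $-\vec{z}$ and combining AM--GM with the parallelogram law yields
\[
\rho_s(\vec{w}_0+\vec{z}) + \rho_s(\vec{w}_0-\vec{z}) \ge 2\sqrt{\rho_s(\vec{w}_0+\vec{z})\rho_s(\vec{w}_0-\vec{z})} = 2\rho_s(\vec{w}_0)\rho_s(\vec{z}) \;,
\]
and summing over $2\lat$ gives $\Phi(s) \ge \rho_s(\vec{c}_0-\vec{t}) = \sum_{\vec{z}\in 2\lat}\rho_s(\vec{w}_0+\vec{z}) \ge \rho_s(\vec{y}-\vec{t})\rho_s(2\lat)$, which is the first inequality. (This part does not use that $\vec{y}$ is closest.)

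For the \emph{middle} inequality, set $\Psi(s) := \Phi(s)/\rho_s(\vec{y}-\vec{t})$, so the claim becomes $\Psi(s) \le \prod_{j\ge1}\rho_{2^{-j/2}s}(\lat)^{1/2^j}$. Dividing Corollary~\ref{cor:RSHolder} by $\rho_s(\vec{y}-\vec{t})^2 = \rho_{s/\sqrt2}(\vec{y}-\vec{t})$ makes the closest-vector term cancel through, leaving the clean recursion $\Psi(s)^2 \le \Psi(s/\sqrt2)\,\rho_{s/\sqrt2}(\lat)$, and hence $\Psi(s) \le \Psi(2^{-k/2}s)^{1/2^k}\prod_{j=1}^{k}\rho_{2^{-j/2}s}(\lat)^{1/2^j}$ for every $k\in\N$. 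I would then let $k\to\infty$: the partial products are non-decreasing and bounded above by $\rho_s(\lat)$ (each factor is $\ge 1$ and, by monotonicity of $\rho_{\cdot}(\lat)$, at most $\rho_s(\lat)$), so the infinite product converges; and $\Psi(2^{-k/2}s)^{1/2^k}\to 1$ because $\Psi$ stays bounded near $s'=0$. The last point is where the closest-vector hypothesis is essential: for $s'\le 1$,
\[
\Psi(s') \le \frac{\rho_{s'}(\lat-\vec{t})}{\rho_{s'}(\vec{y}-\vec{t})} = \sum_{\vec{w}\in\lat-\vec{t}} e^{\pi(d^2-\length{\vec{w}}^2)/s'^2} \le \sum_{\vec{w}\in\lat-\vec{t}} e^{\pi(d^2-\length{\vec{w}}^2)} =: B < \infty \;,
\]
using $\Phi(s')\le\rho_{s'}(\lat-\vec{t})$ and $\length{\vec{w}}\ge d$. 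Thus $\Psi(2^{-k/2}s)^{1/2^k}\le\max(1,B)^{1/2^k}\to 1$, and passing to the limit gives the middle inequality.

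For the \emph{right} inequality, rescaling Banaszczyk's bound (Lemma~\ref{lem:banaszczyk}) gives $\rho_{s_1}(\lat)\le(s_1/s_2)^n\rho_{s_2}(\lat)$ for $s_1\ge s_2$; in particular $\rho_{2^{-1/2}s}(\lat)\le 2^{n/2}\rho_{s/2}(\lat)$, while $\rho_{2^{-j/2}s}(\lat)\le\rho_{s/2}(\lat)$ for all $j\ge2$ by monotonicity. Therefore, using $\sum_{j\ge2}2^{-j}=1/2$,
\[
\prod_{j\ge1}\rho_{2^{-j/2}s}(\lat)^{1/2^j} = \rho_{2^{-1/2}s}(\lat)^{1/2}\prod_{j\ge2}\rho_{2^{-j/2}s}(\lat)^{1/2^j} \le \bigl(2^{n/2}\rho_{s/2}(\lat)\bigr)^{1/2}\rho_{s/2}(\lat)^{1/2} = 2^{n/4}\rho_{s/2}(\lat) = 2^{n/4}\rho_s(2\lat) \;,
\]
and dividing by $\rho_s(2\lat)$ yields the bound $2^{n/4}$.

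I expect the limit step in the middle inequality to be the main obstacle—not because any single estimate is difficult, but because it depends on the right reformulation (introducing $\Psi$ and exploiting the cancellation $\rho_s(\vec{v})^2=\rho_{s/\sqrt2}(\vec{v})$) so that the hypothesis that $\vec{y}$ is a closest vector can be invoked to keep $\Psi$ bounded as $s'\to0$. Everything else is a direct application of Corollary~\ref{cor:RSHolder}, Lemma~\ref{lem:betterrhoLtbound}, and Lemma~\ref{lem:banaszczyk}.
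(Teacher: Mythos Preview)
Your proof is correct and follows essentially the same approach as the paper: the first inequality is (a direct proof of) Lemma~\ref{lem:betterrhoLtbound}, the middle inequality iterates Corollary~\ref{cor:RSHolder} and passes to the limit using the closest-vector hypothesis, and the third inequality is the same application of Lemma~\ref{lem:banaszczyk}. Your normalization $\Psi(s)=\Phi(s)/\rho_s(\vec{y}-\vec{t})$ is a cosmetic repackaging of the paper's $\phi(k)^{1/2^k}$ argument (since $\rho_{2^{-k/2}s}(\vec{y}-\vec{t})^{1/2^k}=\rho_s(\vec{y}-\vec{t})$ exactly), and your justification of the limit step is in fact more detailed than the paper's, which simply asserts $\lim_{k\to\infty}\phi(k)^{1/2^k}=\rho_s(\vec{y}-\vec{t})$.
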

\begin{proof}
The first inequality trivially follows from Lemma~\ref{lem:betterrhoLtbound}. Let $\theta(i) := \rho_{2^{-i/2}s}(\lat)$ and $\phi(i) := \max_{\vec{c} \in \lat/(2\lat)}\rho_{2^{-i/2}s}(\vec{c} - \vec{t})$.
By Corollary~\ref{cor:RSHolder}, we have
\[
\phi(i) \leq \phi(i+1)^{1/2} \theta(i+1)^{1/2}\;.\]
 Applying this inequality $k$ times, we have
\[
\phi(0) \leq \phi(k)^{1/2^k} \cdot \prod_{j=1}^{k} \theta(j)^{1/2^j}
\; .
\]
We take the limit as $k \rightarrow \infty$. Since $\vec{y} \in \lat$ is a closest vector to $\vec{t}$, we have  
\[
\lim_{k \rightarrow \infty} \phi(k)^{1/2^k} = \rho_{s}(\vec{y} - \vec{t}) \;.
\]
The second inequality is then immediate. 
 For the third inequality, note that
for all $i \geq 2$, $\theta(i) \leq \theta(2) = \rho_s(2\lat)$, and by Lemma~\ref{lem:banaszczyk}, $\theta(1) \leq 2^{n/2} \theta(2)$. Therefore, 
\[
\prod_{j=1}^\infty \theta(j)^{1/2^j} \leq 2^{n/4} \cdot \prod_{j=1}^\infty \theta(2)^{1/2^j} = 2^{n/4} \cdot \theta(2)
\; . \qedhere
\]
\end{proof}

We will need the following technical lemma to obtain a stronger version of Lemma~\ref{lem:bigcoset1}.

\begin{lemma}
\label{lem:bigcoset2}
For any lattice $\lat \subset \R^n$, $s > 0$, and integer $\ell >0$, there exists an integer $1 \le i \le \ell$ such that 
\[
\frac{\prod_{j=1}^{\infty} \rho_{2^{-(i+j)/2}s}(\lat)^{1/2^j}}{\rho_{2^{-i/2}s}(2\lat)}  \le 2^{\frac{3n}{4\ell}}
\; .
\]
\end{lemma}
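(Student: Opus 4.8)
The plan is to exploit the telescoping structure in the quantities appearing in Lemma~\ref{lem:bigcoset1}. Define, as in that lemma, $\theta(j) := \rho_{2^{-j/2}s}(\lat)$, so that $\rho_{2^{-i/2}s}(2\lat) = \theta(i+2)$ by rescaling, and observe that the quantity we wish to bound for a given $i$ is
\[
Q(i) := \frac{\prod_{j=1}^{\infty} \theta(i+j)^{1/2^j}}{\theta(i+2)}
\; .
\]
The key point is that $Q(i)$ is, up to the denominator, a weighted geometric mean of the terms $\theta(i+1), \theta(i+2), \ldots$, and the weights $1/2^j$ sum to $1$. By Lemma~\ref{lem:banaszczyk}, $\theta(j) \leq 2^{n/2}\theta(j+1)$ for all $j$ (since halving the parameter can only shrink the Gaussian mass by at most the Banaszczyk factor — more precisely $\theta(j)/\theta(j+1) = \rho_{2^{-j/2}s}(\lat)/\rho_{2^{-(j+1)/2}s}(\lat) \leq (\sqrt 2)^n = 2^{n/2}$), and $\theta$ is monotonically non-increasing in $j$. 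So each $\theta(i+j)$ for $j \geq 1$ lies between $\theta(i+2)$ and $2^{n/2}\theta(i+2) \cdot (\text{something})$; the crude bound $Q(i) \leq 2^{n/4}$ from Lemma~\ref{lem:bigcoset1} comes from using $\theta(i+1) \leq 2^{n/2}\theta(i+2)$ and $\theta(i+j) \leq \theta(i+2)$ for $j \geq 2$.

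The improvement to $2^{3n/(4\ell)}$ should come from averaging over $i = 1, \ldots, \ell$ rather than taking a worst case. Concretely, I would take logarithms: writing $a_j := \log_2 \theta(j) \geq \log_2\theta(j+1) =: a_{j+1}$ (non-increasing) with $a_j - a_{j+1} \leq n/2$, we have
\[
\log_2 Q(i) = \sum_{j=1}^{\infty} 2^{-j} a_{i+j} - a_{i+2}
= \sum_{j=1}^{\infty} 2^{-j}\bigl(a_{i+j} - a_{i+2}\bigr)
= \tfrac12 (a_{i+1} - a_{i+2}) + \sum_{j \geq 3} 2^{-j}(a_{i+j} - a_{i+2})
\; .
\]
Each term $a_{i+j} - a_{i+2}$ for $j \geq 3$ is non-positive (monotonicity), so $\log_2 Q(i) \leq \tfrac12(a_{i+1} - a_{i+2})$. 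Now sum this over $i = 1, \ldots, \ell$: the right-hand side telescopes, $\sum_{i=1}^\ell (a_{i+1} - a_{i+2}) = a_2 - a_{\ell+2} \leq a_2 - a_{\ell+2}$, and this difference is at most $\ell \cdot n/2$ in general but we can do better — actually we need a bound that does not grow with $\ell$. Here is the fix: $a_2 - a_{\ell+2} = \sum_{i=2}^{\ell+1}(a_i - a_{i+1})$ and while each gap is at most $n/2$, the \emph{total} telescoped quantity $a_2 - a_{\ell+2}$ can indeed be as large as $\Theta(\ell n)$ in the worst case, so this naive telescoping is not quite enough — and that is the main obstacle (see below). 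The resolution is to keep more terms: use $\log_2 Q(i) \leq \tfrac12(a_{i+1}-a_{i+2}) + \tfrac14(a_{i+3} - a_{i+2})$ where the last term is $\leq 0$, or rather bound $\sum_{j\geq 3}2^{-j}(a_{i+j}-a_{i+2})$ more cleverly using that $a_{i+j} - a_{i+2} = -\sum_{r=2}^{j-1}(a_{i+r}-a_{i+r+1})$; after rearranging sums, $\sum_{i=1}^\ell \log_2 Q(i)$ becomes a linear combination of the gaps $a_m - a_{m+1}$ with \emph{bounded} total coefficient (the coefficients decay geometrically away from each $i$), yielding $\sum_{i=1}^\ell \log_2 Q(i) \leq C \cdot n$ independent of $\ell$, hence some $i$ has $\log_2 Q(i) \leq Cn/\ell$. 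Tracking constants carefully should give exactly $3n/(4\ell)$: the dominant contribution is the $\tfrac12(a_{i+1}-a_{i+2})$ terms summing to $\tfrac12(a_2 - a_{\ell+2})$, but now observing $a_2 - a_{\ell+2} \leq a_1 - a_\infty$-type bound — in fact since all relevant $\theta$ lie in $[\rho(\lat), 2^{n/2}\rho(\lat)]$ once the parameter drops below the smoothing regime...

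The hard part will be pinning down the right telescoping/averaging argument so the bound on $\sum_i \log_2 Q(i)$ is genuinely $O(n)$ rather than $O(\ell n)$ — i.e., exploiting both the monotonicity of $\theta$ and the single "jump" of size $\leq 2^{n/2}$ permitted by Banaszczyk in a way that survives summation over $i$. I expect the clean statement is: $\sum_{i=1}^\ell \log_2 Q(i) \leq \tfrac12\sum_{i=1}^\ell (a_{i+1} - a_{i+2})$ and then one bounds $\sum_{i=1}^\ell(a_{i+1}-a_{i+2}) = a_2 - a_{\ell+2}$, using that $\theta$ is non-increasing and \emph{bounded below} — $a_{\ell+2} \geq \log_2\rho(\lat)$ while $a_2 = \log_2\rho_s(2\lat) \leq \log_2\rho_s(\lat) \leq$ (by Lemma~\ref{lem:banaszczyk}, if $s$ is suitably normalized) $\tfrac{3n}{2} + \log_2\rho(\lat)$ or similar — giving $a_2 - a_{\ell+2} \leq \tfrac{3n}{2}$ and hence $\sum_{i=1}^\ell \log_2 Q(i) \leq \tfrac{3n}{4}$, so by averaging there is an $i \in \{1,\ldots,\ell\}$ with $\log_2 Q(i) \leq \tfrac{3n}{4\ell}$, which is exactly the claim. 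The technical care needed is in justifying the a priori bound $a_2 - a_{\text{large}} \leq 3n/2$: this should follow from applying Lemma~\ref{lem:banaszczyk} to pass from $\rho_s$ down past the smoothing parameter $\eta_{1/2}(\lat)$ where $\rho$ stabilizes, combined with $\rho_{s'}(\lat) \leq \rho(\lat) \cdot (s'/1)^n$-type bounds — and handling the case where even $2^{-\ell/2}s$ stays large, in which case \emph{every} $Q(i)$ is small anyway since consecutive $\theta$'s are then within a $2^{n/2}$ factor and the geometric mean nearly cancels the denominator. I would organize the final proof as: (1) reduce to bounding $\sum_i \log_2 Q(i)$ via averaging; (2) use Corollary~\ref{cor:RSHolder}/Lemma~\ref{lem:bigcoset1}'s inequality $\phi(i) \leq \phi(i+1)^{1/2}\theta(i+1)^{1/2}$ pattern to rewrite $Q(i)$ and telescope; (3) invoke Lemma~\ref{lem:banaszczyk} for the a priori $O(n)$ ceiling on $a_2$ relative to the tail; (4) collect constants.
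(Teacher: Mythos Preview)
Your proposal oscillates between two strategies, and the one you ultimately commit to in your four-step plan is flawed. The claim that $a_2 - a_{\ell+2}$ can be bounded by $3n/2$ (or by any quantity independent of $\ell$) is false: take $s$ so large that even $2^{-(\ell+2)/2}s$ far exceeds the smoothing parameter of $\lat$; then $\theta(j) \approx (2^{-j/2}s)^n/\det\lat$ for every $j \leq \ell+2$, each consecutive gap $a_j - a_{j+1}$ is essentially $n/2$, and $a_2 - a_{\ell+2} \approx \ell n/2$. Hence the truncated estimate $\log_2 Q(i) \leq \tfrac12(a_{i+1}-a_{i+2})$, summed over $i$ and then capped via a smoothing argument (your step (3)), cannot beat the trivial bound $Q(i) \leq 2^{n/4}$. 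There is no ``a priori $O(n)$ ceiling on $a_2$ relative to the tail'' in general.

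The idea you raise midway and then abandon---writing $\sum_{i=1}^\ell \log_2 Q(i)$ as a linear combination $\sum_m c_m d_m$ of the gaps $d_m := a_m - a_{m+1}$ and observing that the positive coefficients have bounded total---is the one that actually works. Carrying the rearrangement through gives $c_m = 2^{-(m-1)}$ for $2 \leq m \leq \ell+1$ and $c_m \leq 0$ for $m > \ell+1$; since $0 \leq d_m \leq n/2$ by monotonicity and Lemma~\ref{lem:banaszczyk}, one gets $\sum_{i=1}^\ell \log_2 Q(i) < n/2$, hence some $Q(i) \leq 2^{n/(2\ell)}$, slightly stronger than the lemma asks. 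So a complete proof is hiding in your write-up, but your outlined plan points away from it.

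The paper's argument is different and shorter. With $S_i := Q(i)$ and $R_i := \theta(i+1)/\theta(i+2)$, it checks the exact identity $S_i^2/S_{i+1} = R_i/R_{i+1}$, telescopes over $i = 0,\ldots,\ell$ to obtain $(S_0^2/S_{\ell+1})\prod_{i=1}^\ell S_i = R_0/R_{\ell+1}$, and then inserts the crude bounds $R_0 \leq 2^{n/2}$, $R_{\ell+1} \geq 1$, $S_0 \geq 1$, $S_{\ell+1} \leq 2^{n/4}$ (the last two from Lemma~\ref{lem:bigcoset1}) to conclude $\prod_{i=1}^\ell S_i \leq 2^{3n/4}$. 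This avoids the coefficient bookkeeping entirely; your approach, once executed correctly, trades that cleanliness for a sharper constant.
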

\begin{proof}
For $i \ge 0$, let $\theta(i) := \rho_{2^{-i/2}s}(\lat)$ as in the previous proof.
Let
\[
S_i := \frac{\prod_{j=1}^{\infty} \theta(i+j)^{1/2^j}}{\theta(i+2)} 
\; ,
\]
and 
\[
R_i := \frac{\theta(i+1)}{\theta(i+2)} \;.
\]
We need to show that there exists an integer $1 \le i \le \ell $ such that $S_i \le 2^{3n/4\ell}$.

By Lemma~\ref{lem:bigcoset1}, we have that for all $i$, $1 \le S_i \le 2^{n/4}$, and by Lemma~\ref{lem:banaszczyk}, we have that, $1 \le R_i \le 2^{n/2}$. 
Note that
\[
\frac{S_i^2}{S_{i+1}} = \frac{\theta(i+1) \cdot \theta(i+3)}{\theta(i+2)^2} = \frac{R_i}{R_{i+1}} \;.
\]
Therefore,
\begin{equation*}
	2^{n/2}										\ge \frac{R_0}{R_{\ell+1}} 
											= \prod_{i = 0}^{\ell} \frac{R_i}{R_{i+1}} 
                       = \prod_{i = 0}^\ell \frac{S_i^2}{S_{i+1}} 
                       = \frac{S_0^2}{S_{\ell+1}} \prod_{i=1}^\ell S_i 
											 \ge \frac{1}{2^{n/4}} \prod_{i=1}^\ell S_i \;,
\end{equation*}
where the first inequality uses $R_0 \le 2^{n/2}$ and $R_{\ell + 1} \ge 1$, and the last inequality uses $S_0 \ge 1$ and $S_{\ell + 1} \le 2^{n/4}$. The result then follows.
\end{proof}

Finally, we have the following corollary, which follows immediately from Lemmas~\ref{lem:bigcoset1} and~\ref{lem:bigcoset2}, and Lemma~\ref{lem:betterrhoLtbound}. The corollary shows that, if $\vec{c} \in \lat/(2\lat)$ contains a closest vector to $\vec{t}$ and we sample from $D_{\lat - \vec{t}, s}$ for many different values of $s$, then $\vec{c} - \vec{t}$ will have relatively high weight for at least one parameter $s$.

\begin{corollary}
\label{cor:bigcoset}
For any lattice $\lat \subset \R^n$ and $\vec{t} \in \R^n$, let $\vec{y} \in \lat$ a closest vector to $\vec{t}$ in $\lat$. Then, for any $s > 0$ and integer $\ell > 0$, there exists an integer $1 \le i \le \ell$ such that 
\[
1 
\leq \frac{ \max_{\vec{c} \in \lat/(2\lat)}\rho_{2^{-i/2}s}(\vec{c} - \vec{t})}{\rho_{2^{-i/2}s}(2\lat + \vec{y} - \vec{t})} 
\le \frac{ \max_{\vec{c} \in \lat/(2\lat)}\rho_{2^{-i/2}s}(\vec{c} - \vec{t})}{\rho_{2^{-i/2}s}(\vec{y} - \vec{t}) \cdot \rho_{2^{-i/2}s}(2\lat)} 
\le 2^{\frac{3n}{4\ell}} 
\; .
\]
\end{corollary}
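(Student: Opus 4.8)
The plan is to obtain the three-term inequality by reading it right-to-left and chaining together exactly the three results the corollary cites. Throughout, fix $s > 0$ and an integer $\ell > 0$, and abbreviate $s_i := 2^{-i/2} s$ for $i \geq 1$; the index $i$ produced at the end will be the one supplied by Lemma~\ref{lem:bigcoset2}, and I will check that the first two inequalities hold for \emph{every} $i \geq 1$, so in particular for that one.

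First I would dispose of the two leftmost inequalities, which involve no real content. Since $\vec{y} \in \lat$, the coset $\vec{y} + 2\lat$ lies in $\lat/(2\lat)$, so $\rho_{s_i}(2\lat + \vec{y} - \vec{t})$ is one of the quantities $\rho_{s_i}(\vec{c} - \vec{t})$ over which the maximum in the numerator is taken; this gives the leftmost inequality. For the middle inequality, note that the numerator $\max_{\vec{c} \in \lat/(2\lat)} \rho_{s_i}(\vec{c} - \vec{t})$ is common to both fractions, so it suffices to show $\rho_{s_i}(\vec{y} - \vec{t}) \cdot \rho_{s_i}(2\lat) \leq \rho_{s_i}(2\lat + \vec{y} - \vec{t})$. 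This is precisely Lemma~\ref{lem:betterrhoLtbound} applied to the lattice $2\lat$ with shift $\vec{t} - \vec{y}$, using that $\rho$ is even (so $\rho_{s_i}(\vec{t} - \vec{y}) = \rho_{s_i}(\vec{y} - \vec{t})$) and that $2\lat - (\vec{t} - \vec{y}) = 2\lat + \vec{y} - \vec{t}$.

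The substance is the rightmost inequality. Here I would invoke Lemma~\ref{lem:bigcoset1} with its parameter $s$ taken to be $s_i$: the chain in that lemma gives
\[
\frac{\max_{\vec{c} \in \lat/(2\lat)} \rho_{s_i}(\vec{c} - \vec{t})}{\rho_{s_i}(\vec{y} - \vec{t}) \cdot \rho_{s_i}(2\lat)} \;\leq\; \frac{\prod_{j=1}^{\infty} \rho_{2^{-j/2} s_i}(\lat)^{1/2^j}}{\rho_{s_i}(2\lat)} \;=\; \frac{\prod_{j=1}^{\infty} \rho_{2^{-(i+j)/2} s}(\lat)^{1/2^j}}{\rho_{2^{-i/2} s}(2\lat)} \; .
\]
Lemma~\ref{lem:bigcoset2}, applied verbatim to $\lat$, $s$, and $\ell$, then furnishes an integer $1 \leq i \leq \ell$ for which this last expression is at most $2^{3n/(4\ell)}$. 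Choosing that $i$ finishes the argument: the first two inequalities hold for it as for any index, and the third is the bound just displayed. I do not expect a genuine obstacle, since every ingredient is already in hand; the only point that demands a moment's care is the parameter bookkeeping — matching the $j$-indexed product of Lemma~\ref{lem:bigcoset1}, whose parameters descend from $s_i = 2^{-i/2}s$, against the statement of Lemma~\ref{lem:bigcoset2}, whose parameters descend from $s$, so that $2^{-j/2} s_i = 2^{-(i+j)/2} s$ lines them up exactly. Once that identification is in place, the corollary is immediate by substitution.
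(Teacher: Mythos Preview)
Your proposal is correct and matches the paper's approach exactly: the paper states that the corollary ``follows immediately from Lemmas~\ref{lem:bigcoset1} and~\ref{lem:bigcoset2}, and Lemma~\ref{lem:betterrhoLtbound},'' and you have spelled out precisely that derivation, including the parameter bookkeeping $2^{-j/2}s_i = 2^{-(i+j)/2}s$ needed to line up Lemma~\ref{lem:bigcoset1} (applied at $s_i$) with Lemma~\ref{lem:bigcoset2} (stated for $s$).
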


\subsection{The cCVP algorithm}

With Corollary~\ref{cor:bigcoset}, it is almost immediate that the algorithm from Theorem~\ref{thm:DGS} yields a solution to cCVP. Below, we make this formal.

\begin{theorem}
\label{thm:nCVP}
For any efficiently computable function $f(n) \geq n^{\omega(1)}$, there is an algorithm that solves $(1/f(n))\text{-}\problem{cCVP}^{p}$ with probability at least $1-2^{-Cn^2}$ in time $2^{n+O(\log n \log f(n)+n/\log f(n))}$, where $p(n) := \poly(n) \cdot 2^{n+O(n/\log f(n))}  $.
\end{theorem}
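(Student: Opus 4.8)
The plan is to run the discrete Gaussian sampler of Theorem~\ref{thm:DGS} many times, at a short geometric ladder of parameters, and to output (after de-duplicating by coset of $2\lat$) all of the resulting lattice vectors; correctness will come from Corollary~\ref{cor:bigcoset} together with the clustering lemma (Lemma~\ref{lem:clusters}).

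Setup. Since we only know $\dist(\vec t,\lat)$ implicitly but must scale our parameters by it, first run the procedure of Theorem~\ref{thm:BKZ} with $u := Cn\log n/\log f(n) + 2$, which costs $2^{O(n\log n/\log f(n))} = 2^{o(n)}$ (using $\log f(n) = \omega(\log n)$) and returns a $\gamma'$-approximate closest vector $\vec z$ with $\gamma' = \sqrt n\,u^{n/u} \le \poly(n)\cdot f(n)^{1/C}$; set $\tilde d := \length{\vec z - \vec t}$, so $\dist(\vec t,\lat)\le \tilde d\le \gamma'\dist(\vec t,\lat)$. Put $\alpha := 1/f(n)$, $\ell := \ceil{\log f(n)}$, $s := \alpha\tilde d/(2n\gamma')$, and $T := Cn^2\cdot 2^{3n/(4\ell)} = 2^{O(n/\log f(n))}$. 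With these choices the largest scale $2^{-1/2}s$ satisfies $2^{-1/2}s \le \alpha\dist(\vec t,\lat)/(2n)$, while the smallest, $2^{-\ell/2}s$, is at least $\dist(\vec t,\lat)/F(n)$ for $F(n) := \poly(n)\cdot f(n)^{2}$, which is efficiently computable with $F(n)\ge n^{\omega(1)}$ and $\log F(n) = O(\log f(n))$. The algorithm then, for each $i\in[\ell]$, invokes the sampler of Theorem~\ref{thm:DGS} (with minimal-parameter function $F$) a total of $T$ times on $(\lat,\vec t,2^{-i/2}s)$, collects all returned vectors $\vec w$ (each lies in $\lat - \vec t$), forms the lattice vectors $\vec w + \vec t$, de-duplicates the list by coset of $2\lat$ keeping in each coset a vector of smallest norm, and outputs the result.

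Correctness. Let $\vec y\in\lat$ be a closest vector to $\vec t$ and $\vec c^* := 2\lat + \vec y - \vec t$. By Corollary~\ref{cor:bigcoset} there is $i^*\in[\ell]$ with $\rho_{s_{i^*}}(\vec c^*)\ge \max_{\vec c\in\lat/(2\lat)}\rho_{s_{i^*}}(\vec c - \vec t)/2^{3n/(4\ell)}$ for $s_{i^*} := 2^{-i^*/2}s$; equivalently, a draw from $D_{\lat-\vec t,s_{i^*}}$ lands in $\vec c^*$ with probability at least $1/(2^{3n/(4\ell)}\,m(\lat-\vec t,s_{i^*}))$, where $m$ is as in Theorem~\ref{thm:DGS}. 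Each of the $T$ runs at scale $s_{i^*}$ returns at least $m(\lat-\vec t,s_{i^*})$ (near-)independent samples, so across all $T$ runs we get at least $T\,m(\lat-\vec t,s_{i^*})$ samples and the probability that none lands in $\vec c^*$ is at most $\exp(-T/2^{3n/(4\ell)}) = \exp(-Cn^2)$. Moreover, by the ``furthermore'' part of Corollary~\ref{cor:banaszczykcor}, a sample $\vec w$ from a scale-$s_i$ run satisfies $\length{\vec w}^2 < \dist(\vec t,\lat)^2 + 2(s_i n)^2$ except with probability $e^{-3n^2}$, so by a union bound over all $2^{n+O(\log n\log f(n)+n/\log f(n))}$ samples this holds for every sample except with probability $2^{-Cn^2}$. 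Off these bad events, some surviving representative $\vec w$ of $\vec c^*$ has $\length{\vec w}^2 < \dist(\vec t,\lat)^2 + 2(s_1 n)^2 \le \dist(\vec t,\lat)^2 + \alpha^2\dist(\vec t,\lat)^2/2$; since $\vec w$ and $\vec y - \vec t$ lie in the same coset of $2\lat$ and $\length{\vec y - \vec t} = \dist(\vec t,\lat)$, Lemma~\ref{lem:clusters} gives $\length{(\vec w + \vec t) - \vec y}^2 < \alpha^2\dist(\vec t,\lat)^2$, so $\vec w + \vec t$ is a valid solution to $(1/f(n))\text{-}\problem{cCVP}^{p}$.

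Resource bounds and main obstacle. After de-duplication the output has at most $|\lat/(2\lat)| = 2^n \le p(n)$ vectors. The running time is dominated by the $\ell T = 2^{O(n/\log f(n))}$ calls to the sampler of Theorem~\ref{thm:DGS}, each costing $2^{n+O(\log n\log F(n))} = 2^{n+O(\log n\log f(n))}$, for a total of $2^{n+O(\log n\log f(n)+n/\log f(n))}$; the $2^{o(n)}$ cost of the initial call to Theorem~\ref{thm:BKZ} and the $\poly$-time sort used for de-duplication are absorbed, and the total failure probability is $\le 2^{-Cn^2}$ (adding the $\ell T\cdot 2^{-Cn^2}$ statistical error of the sampler). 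I expect the main obstacle to be the parameter bookkeeping: the ladder of scales $2^{-i/2}s$, the minimal-parameter function $F$, the repetition count $T$, and the ladder length $\ell$ must be chosen so that simultaneously every scale is large enough for Theorem~\ref{thm:DGS}, the largest scale is small enough that short samples are genuinely within $\alpha\,\dist(\vec t,\lat)$ of a closest vector via Lemma~\ref{lem:clusters}, and the $2^{3n/(4\ell)}$ blow-up from Corollary~\ref{cor:bigcoset} is offset by enough repetitions to land a sample in $\vec c^*$ while keeping the running time and (after de-duplication) the output size within the claimed bounds---all while knowing $\dist(\vec t,\lat)$ only up to the $\poly(n)\cdot f(n)^{o(1)}$ factor supplied by Theorem~\ref{thm:BKZ}.
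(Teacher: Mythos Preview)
Your proposal is correct and follows essentially the same approach as the paper: run the DGS sampler of Theorem~\ref{thm:DGS} at a geometric ladder of $\ell \approx \log f(n)$ scales with $\approx n^2 2^{n/\ell}$ repetitions each, invoke Corollary~\ref{cor:bigcoset} to guarantee that at some scale a sample lands in the coset of a closest vector, and finish with the clustering lemma. The only differences are cosmetic---you use Theorem~\ref{thm:BKZ} rather than Corollary~\ref{cor:approxCVP} for the initial distance estimate, you de-duplicate by coset of $2\lat$ (giving a slightly tighter output-size bound than the paper bothers with), and you cite Lemma~\ref{lem:clusters} directly rather than Corollary~\ref{cor:sparse-project}.
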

\begin{proof}
On input a lattice $\lat \subset \R^n$ and shift $\vec{t} \in \R^n$, the algorithm first calls the procedure from Corollary~\ref{cor:approxCVP} to compute $\tilde{d}$ with $\dist(\vec{t}, \lat)/2 \leq \tilde{d} \leq \dist(\vec{t}, \lat)$. Let $s := \tilde{d}/(n^3 f(n)) $. For $i = 0, \ldots, \ell := \ceil{\log 10 f(n)}$, the algorithm runs the procedure from Theorem~\ref{thm:DGS} $n^2 \cdot \ceil{ 2^{n/\ell}}$ times with input $\lat$, $\vec{t}$, and $s_i := 2^{-i/2}s$, receiving as output a total of $\hat{m}_i \geq n^2 2^{n/\ell} \cdot  m(\lat - \vec{t}, s_i)$ vectors $(\vec{X}_{i,1},\ldots, \vec{X}_{i,\hat{m}_i}) \in  \lat - \vec{t}$. (We may assume that $\hat{m}_i \leq n^2 2^{n} \cdot \ceil{2^{n/\ell}}$, since we can trivially truncate the output of each run at $2^n \geq m(\lat - \vec{t}, s_i)$ vectors.) For each $i,j$, let $\vec{y}_{i,j} := \vec{X}_{i,j} + \vec{t} \in \lat$. Finally, the algorithm outputs the $\vec{y}_{i,j}$.

The running time is dominated by the running time of the $\ell n^2 2^{n/\ell}$ applications of Theorem~\ref{thm:DGS}. So, the algorithm runs in time $\ell n^2 2^{n + O(\log n\log f(n)) + n/\ell} = 2^{n+O(\log n \log f(n)+n/\log f(n))}$. The value for $p(n)$ follows from the assumed bound on $\hat{m}_i$. 

To prove correctness, first note that by Theorem~\ref{thm:DGS}, up to statistical distance $2^{-Cn^2}$, we may assume that the $\vec{X}_{i,j}$ are distributed exactly as independent discrete Gaussians $D_{\lat - \vec{t}, s_i}$. Then, by Corollary~\ref{cor:banaszczykcor}, all of the output vectors are $(1+1/f(n))$-approximate closest vectors except with probability at most $2^{-Cn^2}$. So, by Corollary~\ref{cor:sparse-project}, it suffices to show that with high probability there is some $i,j$ such that $\vec{y}_{i,j}$ is in the same coset mod $2\lat$ as a closest vector $\bar{\vec{y}} \in \lat$ to $\vec{t}$. Fix $i$ as in Corollary~\ref{cor:bigcoset}. Then, for any $j$,
\begin{align*}
\Pr[\vec{y}_{i,j} \equiv \bar{\vec{y}} \imod{2\lat}] &= \frac{ \rho_{s_i}(2\lat + \bar{\vec{y}} - \vec{t})}{\rho_{s_i}(\lat - \vec{t})}  \\
&= \frac{1}{m(\lat - \vec{t}, s_i)} \cdot \frac{ \rho_{s_i}(2\lat + \bar{\vec{y}} - \vec{t})}{\max_{\vec{c} \in \lat/(2\lat)}\rho_{s_i}(\vec{c} - \vec{t})}\\
&\geq \frac{n^2 2^{n/\ell}}{\hat{m}_i}\cdot 2^{-\frac{3n}{4\ell}} &\text{(Corollary~\ref{cor:bigcoset})}\\
&> \frac{2n^2}{\hat{m}_i}
\;.
\end{align*}
The result follows by recalling that the $\vec{y}_{i,j}$ are independent.
\end{proof}

We obtain our main result as a corollary. We note in passing that a simple union bound shows that the algorithm from Theorem~\ref{thm:nCVP} actually finds a nearby vector for \emph{each} closest lattice vector. Together with the remark above Theorem~\ref{thm:CVPtonCVP}, this shows that we can actually find \emph{all} closest vectors in time $2^{n+o(n)}$.

\begin{corollary}
\label{cor:CVP}
There is an algorithm that solves \emph{exact} CVP (with high probability) in time $2^{n+O(n^{2/3}\log^2 n)}$.
\end{corollary}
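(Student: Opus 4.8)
The plan is to instantiate the reduction of Theorem~\ref{thm:CVPtonCVP} with the $\alpha\text{-}\problem{cCVP}^p$ algorithm of Theorem~\ref{thm:nCVP}, choosing the one free parameter $\delta$ to balance the two contributions to the running time. Concretely, I would take $\delta := 2/3$ and set $f(m) := 10 m^{4 m^{2/3} + 1}$, so that $1/f(m)$ equals exactly the additive error $\alpha(m) = 1/(10 m^{4 m^{2/3}+1})$ demanded by Theorem~\ref{thm:CVPtonCVP} on a dimension-$m$ instance. Since $f$ is efficiently computable and $f(m) = m^{\Theta(m^{2/3})} \ge m^{\omega(1)}$, Theorem~\ref{thm:nCVP} applies and gives an algorithm for $(1/f)\text{-}\problem{cCVP}^{p_0}$ running in time $2^{m + O(\log m \log f(m) + m/\log f(m))}$ with $p_0(m) = \poly(m)\cdot 2^{m + O(m/\log f(m))}$. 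Using $\log f(m) = \Theta(m^{2/3}\log m)$, this cCVP running time is $2^{m + O(m^{2/3}\log^2 m)}$ and $p_0(m) = 2^{m + o(m)}$.

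Next I would plug this oracle into Theorem~\ref{thm:CVPtonCVP} with $\delta = 2/3$, which makes at most $g(n,d) \le 2^{n - d + O(n^{2-2\delta}\log n)} = 2^{n-d+O(n^{2/3}\log n)}$ oracle calls on dimension-$d$ lattices and has reduction overhead $\poly(n)\sum_d p(d) g(n,d)$ (with $p = \poly(n)\cdot p_0$ after the amplification described below). Each dimension-$d$ oracle call costs $2^{d + O(d^{2/3}\log^2 d)} \le 2^{d + O(n^{2/3}\log^2 n)}$, so summing $g(n,d)\cdot 2^{d + O(n^{2/3}\log^2 n)}$ over $d \in [n]$ gives total oracle cost $\poly(n)\cdot 2^{n + O(n^{2/3}\log^2 n)}$, and the reduction overhead is similarly $2^{n + O(n^{2/3}\log n)}$ since $p(d)g(n,d) \le 2^{n + O(n^{2/3}\log n)}$. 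Hence the whole algorithm runs in time $2^{n + O(n^{2/3}\log^2 n)}$, matching the claim; the value $\delta = 2/3$ is precisely the one that equates the exponent $n^{2-2\delta}\log n$ coming from $g(n,d)$ with the exponent $n^{\delta}\log^2 n$ coming from the cCVP running time.

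For correctness and the success probability I would argue as follows. Theorem~\ref{thm:nCVP} guarantees success probability only $1 - 2^{-Cd^2}$ on a dimension-$d$ instance, which is a constant for small $d$, while the reduction issues up to $\sum_d g(n,d) \le 2^{2n}$ oracle calls; so naively the union bound fails. I would fix this by amplification: at each oracle invocation on a dimension-$d$ lattice (the case $d = 1$ is handled directly inside the reduction, so $d \ge 2$), run the cCVP algorithm $n^2$ times independently and output the concatenation of the lists. Since a single successful run suffices to contain a valid solution, the failure probability of the amplified call is at most $(2^{-Cd^2})^{n^2} \le 2^{-C n^2}$, at the cost of only a $\poly(n)$ factor in the running time and in the list-size bound $p$, which does not affect the asymptotics. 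A union bound over the $\le 2^{2n}$ oracle calls then shows that, except with probability $2^{-\Omega(n^2)}$, every call returns a valid $\alpha\text{-}\problem{cCVP}$ solution; conditioned on this event, the correctness analysis of Theorem~\ref{thm:CVPtonCVP}—which reduces to Claim~\ref{clm:simplenCVPreduction} applied along the ``correct'' recursive thread—ensures the reduction outputs an exact closest vector.

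The main obstacle here is not a conceptual one but the bookkeeping: one must verify that $1/f$ can be made to coincide with the required $\alpha$, that the lower-order exponents from the sampler ($\log n\log f$ and $n/\log f$) and from the reduction ($n^{2-2\delta}\log n$ together with the $p$-dependent overhead) all simultaneously land in $O(n^{2/3}\log^2 n)$ at $\delta = 2/3$, and that the per-call amplification needed to survive the $2^{n+o(n)}$-way union bound is only $\poly(n)$ and hence free. Each of these is routine once the value $\delta = 2/3$ is identified.
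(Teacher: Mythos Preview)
Your proof is correct and follows essentially the same route as the paper: instantiate Theorem~\ref{thm:CVPtonCVP} with $\delta = 2/3$ and Theorem~\ref{thm:nCVP} with $\log f(n) = \Theta(n^{2/3}\log n)$ (the paper picks $f(n) = 2^{2n^{2/3}\log n}$, you pick $f(m)=10m^{4m^{2/3}+1}$; the asymptotics are identical and your choice matches $\alpha$ exactly). The only difference is that the paper dismisses the error analysis in a parenthetical (``by applying a union bound over all oracle calls\dots the error is not an issue''), while you correctly note that the per-call success probability $1-2^{-Cd^2}$ is merely constant for small $d$ and patch this with $n^2$-fold amplification---this is the standard remedy for a point the paper glosses over, not a different argument.
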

\begin{proof}
Combine the algorithm from Theorem~\ref{thm:nCVP} with $f(n) := 2^{2n^{2/3} \log n}$ with the reduction from Theorem~\ref{thm:CVPtonCVP} with $\delta := 2/3$. (By applying a union bound over all oracle calls in the reduction, we see that the error is not an issue.)
\end{proof}

\section*{Acknowledgments } 
We thank Oded Regev and Alexander Golovnev for many helpful conversations. We also thank the anonymous FOCS reviewers for their valuable comments.

\bibliographystyle{alpha}
\newcommand{\etalchar}[1]{$^{#1}$}

\appendix
\section{Proof of Lemma~\ref{lem:sumofgaussians}}

\begin{proof}[Proof of Lemma~\ref{lem:sumofgaussians}]
Multiplying the left-hand side of~\eqref{eq:sumofgaussians} by 
$\Pr_{(\vec{X}_1, \vec{X}_2) \sim D_{\lat-\vec{t}, s}^2}[\vec{X}_1 + \vec{X}_2 \in 2\lat - 2 \vec{t}]$, 
we get for any $\vec{y} \in \lat - \vec{t}$,
\begin{align*}
\Pr_{(\vec{X}_1, \vec{X}_2) \sim D_{\lat-\vec{t}, s}^2}[(\vec{X}_1 + \vec{X}_2)/2 = \vec{y}]  
&= \frac{1}{\rho_s(\lat-\vec{t})^2}\cdot\sum_{\vec{x} \in \lat-\vec{t}} \rho_s(\vec{x}) \rho_s(2\vec{y} - \vec{x})\\
&= \frac{\rho_{s/\sqrt{2}}(\vec{y})}{\rho_s(\lat-\vec{t})^2}\cdot\sum_{\vec{x} \in \lat-\vec{t}} \rho_{s/\sqrt{2}}(\vec{x} - \vec{y})\\
&= \frac{\rho_{s/\sqrt{2}}(\vec{y})}{\rho_s(\lat-\vec{t})^2}\cdot \rho_{s/\sqrt{2}}(\lat) \; . 
\end{align*}
Hence both sides of~\eqref{eq:sumofgaussians} are proportional to each other. Since they are probabilities, they are actually equal.
\end{proof}

\section{Proof of Proposition~\ref{prop:combiner}}

\begin{proof}[Proof of Proposition~\ref{prop:combiner}]
Let $(\vec{X}_1, \ldots, \vec{X}_M)$ be the input vectors. 
For each $i$, let $\coset_i \in \lat/(2\lat)$ be such that $\vec{X}_i \in \coset_i - \vec{t}$. The combiner runs the algorithm from Theorem~\ref{thm:squaresampler} with input $\kappa$
and $(\coset_1,\ldots,\coset_M)$,
receiving output $(\coset_1', \ldots, \coset_m')$. 
(Formally, we must encode the cosets as integers in $\{1,\ldots, 2^n\}$.) Finally, for each $\coset_i'$, it chooses a pair of unpaired vectors $\vec{X}_j, \vec{X}_k$ with 
$\coset_j = \coset_k = \coset_i'$ and outputs $\vec{Y}_i = (\vec{X}_j + \vec{X}_k)/2$.

The running time of the algorithm follows from Item~\ref{item:squareruntime} of Theorem~\ref{thm:squaresampler}. Furthermore, we note that by Item~\ref{item:squareinputoutput} of the same theorem, there will always be a pair of indices $j,k$ for each $i$ as above.

To prove correctness, we observe that for $\coset \in \lat/(2\lat)$ and $\vec{y} \in \coset - \vec{t}$, 
\[ 
\Pr[\vec{X}_i = \vec{y}] = \frac{\rho_s(\coset - \vec{t})}{\rho_s(\lat - \vec{t})} \cdot \Pr_{\vec{X} \sim D_{\coset - \vec{t},s}}[\vec{X} = \vec{y}] \;.
\]
In particular, we have that $\Pr[\coset_i = \coset] = \rho_s(\coset-\vec{t})/\rho_s(\lat - \vec{t})$. 
Then, the cosets $(\coset_1,\ldots,\coset_M)$ satisfy the conditions necessary for Item~\ref{item:squaredistribution} of Theorem~\ref{thm:squaresampler}. 

Applying the theorem, up to statistical distance $M \exp(C_1 n - C_2 \kappa)$, we have that the output vectors are independent, and
\begin{align*}
m 
&\geq M \cdot \frac{1}{32\kappa }\cdot\frac{\sum_{\coset \in \lat/(2\lat)}\rho_s(\coset-\vec{t})^2}{\rho_s(\lat-\vec{t})\max_{\coset \in \lat / (2\lat)} \rho_s(\coset - \vec{t})} \\
&= M \cdot \frac{1}{32\kappa}\cdot\frac{\rho_{s/\sqrt{2}}(\lat) \cdot \rho_{s/\sqrt{2}}(\lat - \vec{t})}{ \rho_s(\lat-\vec{t})\max_{\coset \in \lat / (2\lat)} \rho_s(\coset - \vec{t})}
\; ,
\end{align*}
where the equality follows from Lemma~\ref{lem:RS15} by setting $\vec{x} = \vec{t}$, and $\vec{y} = \vec{0}$.
 Furthermore, we have $\Pr[\coset_i' = \coset] = \rho_s(\coset-\vec{t})^2/\sum_{\coset'} \rho_s(\coset' - \vec{t})^2$ for any coset $\coset \in \lat/(2\lat)$. Therefore, for any $\vec{y} \in \lat$, 
\begin{align*}
\Pr[\vec{Y}_i = \vec{y}] &= \frac{1}{\sum \rho_s(\coset - \vec{t})^2} \cdot \sum_{\coset \in \lat/(2\lat)} \rho_s(\coset-\vec{t})^2 \cdot \Pr_{(\vec{X}_j, \vec{X}_k) \sim D_{\coset-\vec{t}, s}^2}[(\vec{X}_j + \vec{X}_k)/2  = \vec{y}]\\
&= \Pr_{(\vec{X}_1, \vec{X}_2) \sim D_{\lat-\vec{t}, s}^2}[(\vec{X}_1 + \vec{X}_2)/2 = \vec{y} ~|~ \vec{X}_1 + \vec{X}_2 \in 2\lat - 2\vec{t}] 
\; .
\end{align*}
The result then follows from Lemma~\ref{lem:sumofgaussians}.
\end{proof}

\end{document}